\newcommand{\myparagraph}[1]{\noindent{\textbf{#1}. }}
\def\primecon{\mathfrak P}
\def\primesign{\mathcal P}
\def\aryof#1{\sizeof{#1}}
\def\form{formula\xspace}
\def\forms{formulas\xspace}
\def\mllform{$\MLL$-formula\xspace}
\newcommand*{\con}[1][]{\kappa_{#1}}
\newcommand*{\ncon}[1][]{\cneg\kappa_{#1}}
\def\idperm{\mathsf{id}}
\def\permset#1{\mathfrak{S}_{#1}}
\def\pomset{\mathsf{Pomset}}
\def\BV{\mathsf{BV}}
\def\GVsl{\mathsf{GV^{sl}}}
\def\PML{\mathsf{MGL}}
\def\PMLx{\mathsf{MGL}^\funit}
\def\PLK{\mathsf{GLK}}
\def\GBL{\mathsf{GBL}}
\def\GMLx{\mathsf{GML}^\funit}
\mathchardef\mhyphen="2D
\def\Sys{\mathsf S}
\def\RB{\textsf{RB}}
\def\NP{\mathbf{NP}}
\def\transto{\rightsquigarrow}
\def\set#1{\{#1\}}
\def\Set#1{\left\{#1\right\}}
\def\tuple#1{\langle#1\rangle}
\def\Tuple#1{\stretchleftright[1000]{\langle}{#1}{\rangle}}
\def\intset#1#2{\set{#1,\ldots,#2}}
\newcommand{\sizeof}[1]{|#1|}
\renewcommand{\emptyset}{\varnothing}
\def\RBPN{\RB-proof net\xspace}
\newcommand{\atomset}{\mathcal A}
\newcommand{\litset}{\mathcal L}
\def\cneg#1{#1^\lbot}
\def\cnegneg#1{#1^{\lbot\lbot}}
\def\unit{\emptyset}
\newcommand{\mlpar}[1][n]{\lpar_{#1}}
\def\funit{\circ}
\def\funits{\funit,\ldots,\funit}
\def\fcon#1{\kappa_{#1}}
\def\fP{\fcon{\gP}}
\def\nfP{\fcon{\nP}}
\def\fA{\phi}
\def\nfA{\cneg\fA}
\def\fB{\psi}
\def\nfB{\cneg\psi}
\def\fC{\chi}
\def\nfC{\cneg\chi}
\def\fM{\mu}
\def\nfM{\cneg\mu}
\def\fN{\nu}
\def\nfN{\cneg\nu}
\def\fcC{\zeta}
\def\fx{x}
\def\Sym#1{\mathfrak S(#1)}
\def\Dsym#1{\mathfrak S^\lbot(#1)}
\newcommand{\connn}[1]{\llparenthesis#1\rrparenthesis}
\newcommand{\Connn}[1]{\begin{pmatrix}\!\begin{vmatrix}#1\end{vmatrix}\!\end{pmatrix}}
\newcommand{\cons}[1]{[#1]}
\newcommand{\Cons}[1]{\left[#1\right]}
\newcommand{\chole}{\lbox}
\newcommand{\conso}{\cons{\chole}}
\newcommand{\consu}{\cons{\unit}}
\newcommand{\consfu}{\cons{\funit}}
\def\defn#1{{\itshape\bfseries\boldmath #1}}
\newcommand{\quand}{\quad\mbox{and}\quad}
\newcommand{\qquand}{\qquad\mbox{and}\qquad}
\newcommand{\quor}{\quad\mbox{or}\quad}
\newcommand{\qquor}{\qquad\mbox{or}\qquad}
\newcommand{\proves}[1][]{\mathord{\vdash_{#1}\,}}
\def\dD{\mathcal D}
\def\comp{\; ; \;}
\newcommand{\MLL}{\mathsf{MLL}}
\newcommand{\MLLx}{\MLL^{\funit}}
\def\GS{\mathsf{GS}}
\newcommand{\XS}[1][]{\mathsf{X}_{#1}}
\def\sysS{\mathsf S}
\newcommand{\unitor}[1][]{\mathsf{unitor}_{\con}}
\def\axrule{\mathsf {ax}}
\def\AXrule{\mathsf {AX}}
\def\cutr{\mathsf {cut}}
\def\mixr{\mathsf{mix}}
\def\airule{\mathsf{ai}}
\def\swir{\mathsf{s}}
\def\prule{\mathsf{p}}
\def\rrule{\mathsf{r}}
\def\crule{\mathsf{c}}
\def\acrule{\mathsf{ac}}
\def\wrule{\mathsf w}
\def\crule{\mathsf c}
\def\medr{\mathsf m}
\def\conrule{\mathsf{cxt}}
\def\deepr{\mathsf{deep}}
\def\wdtr{\mathsf{wd}_{\ltens}}
\def\wdpr{\mathsf{wd}_{\lpar}}
\def\cpr{\conrule\mhyphen{\lpar}}
\def\aidr{\airule\mathord{\downarrow}}
\def\sdr{\swir_{\lpar}}
\def\sur{\swir_{\ltens}}
\def\pdr{\prule\mathord{\downarrow}}
\def\dwrule{\wrule\mathord{\downarrow}}
\def\dcrule{\crule\mathord{\downarrow}}
\def\dacrule{\acrule\mathord{\downarrow}}
\def\single{\textsf{s}}
\def\double{\textsf{d}}
\newcommand{\conr}[1][\con]{\single\mhyphen#1}
\newcommand{\dconr}[1][\con]{\double\mhyphen#1}
\newcommand{\symr}[1][\con]{\mathsf{sym}\mhyphen#1}
\newcommand{\dsymr}[1][\con]{\mathsf{dsym}\mhyphen#1}
\newcommand{\tassr}[1]{\ltens\mhyphen\mathsf{asso}}
\newcommand{\passr}[1]{\lpar\mhyphen\mathsf{asso}}
\def\RBrule#1{#1^{\RB}}
\newcommand{\rbconr}[1][\gQ]{\single\mhyphen\RBrule{\expandafter\con[#1]}}
\newcommand{\rbdconr}[1][\gQ]{\double\mhyphen\RBrule{\expandafter\con[#1]}}
\newcommand{\rbGdconr}[1][\gG]{\double\mhyphen\RBrule{\expandafter\con[#1]}}
\def\ltensr{\ltens}
\def\lparr{\lpar}
\def\IH{\mathsf{IH}}
\newcommand{\vertices}[1][]{V_{#1}}
\newcommand{\labelset}{\mathcal L}
\def\labsymb{\ell}
\newcommand{\lab}[2][]{\labsymb_{#1}(#2)}
\newcommand{\vof}[1]{\vertices[#1]}
\def\emptygraph{\emptyset}
\def\restr#1{|_{#1}}
\def\greq{=}
\newcommand{\isym}[1][]{\sim_{#1}}
\newcommand{\gof}[1]{\left[\!\left[ #1 \right]\!\right]}
\newcommand{\fof}[1]{\left[\!\left[ #1 \right]\!\right]^{-1}}
\def\gC{C}
\def\gF{F}
\def\gG{G}
\def\gH{H}
\def\gM{M}
\def\gN{N}
\def\gP{P}
\def\gQ{Q}
\def\nP{\cneg\gP}
\def\nG{\cneg\gG}
\def\nH{\cneg\gH}
\def\gPath#1{\mathsf{P_{#1}}}
\def\Pfour{\mathsf{P_4}}
\def\nPfour{\mathsf{\cneg P_4}}
\def\Pbull{\mathsf{Bull}}
\def\clique#1{\mathsf{K}_#1}
\def\stable#1{\mathsf{S}_#1}
\def\vG{\vof\gG}
\def\eG{\eof\gG}
\def\vP{\vof\gP}
\def\nG{\cneg\gG}
\def\nP{\cneg\gP}
\def\vC{\vof\cC}
\def\uC{\uedge[\cC]}
\def\uG{\uedge[\gG]}
\def\eG{E_\gG}
\def\lG{\labsymb_\gG}
\def\nuG{\nuedge[\gG]}
\def\lnG{\labsymb_{\cneg\gG}}
\def\cC{\mathcal C}
\newcommand{\uedge}[1][]{\mkern1mu\mathord{\stackrel{#1}{{\color{black}\frown}}}\mkern1mu}
\newcommand{\nuedge}[1][]{\mkern1mu\mathord{\stackrel{#1}{\not\frown}}\mkern1mu}
\def\odiso#1#2{\od{\odo{\odh{#1}}{}{#2}{}}}
\def\ods#1#2#3#4{\od{\odd{\odh{#1}}{#2}{#3}{#4}}}
\newcommand{\vldr}[2]{\vlpr{#1}{}{#2}}
\def\vind#1{
	\mathord{%
		\tikz[remember picture,anchor=base,baseline]%
		\node[inner sep=0pt](v#1){$v_{#1}\strut$};
	}
}
\def\vvo#1#2{
	\mathord{%
		\tikz[remember picture,anchor=base,baseline]%
		\node[inner sep=0pt](vo#2){$\mathsf o_{#1}\strut$};
	}
}
\def\vvr#1#2{
	\mathord{%
		\tikz[remember picture,anchor=base,baseline]%
		\node[inner sep=0pt](vr#2){$\mathsf r_{#1}\strut$};
	}
}
\def\sqBedge#1#2#3{
	\tikz[overlay,remember picture,draw,fill,opacity=1] %
	\draw[draw=linkcolor,very thick] (#1) -- ++(0,#3pt) -|  (#2);
}
\def\sqBedges#1{
	\foreach \aaa/\bbb/\ccc in {#1} {\sqBedge{\aaa}{\bbb}{\ccc}}%
}
\def\Hpath#1{
	\tikz[overlay,remember picture,draw,fill,opacity=1] %
	\draw[draw=yellow,line width = 4pt,opacity=.5] #1;
}
\def\feq{\equiv}
\title{
	Graphical Proof Theory I:
	\\[10pt]
	Sequent Systems on Undirected Graphs
}
\author{Matteo Acclavio}
\begin{document}
%%%%%%%%%%%%%%%%%%%%%%%%%%%%%%%%%%%%%%%%%%%%%%%%%%%%%%%%%%%%%%%%%%%%%%
%%%%%%%%%%%%%%%%%%%%%%%%%%%%%%%%%%%%%%%%%%%%%%%%%%%%%%%%%%%%%%%%%%%%%%
%%%%%%%%%%%%%%%%%%%%%%%%%%%%%%%%%%%%%%%%%%%%%%%%%%%%%%%%%%%%%%%%%%%%%%
%%%%%%%%%%%%%%%%%%%%%%%%%%%%%%%%%%%%%%%%%%%%%%%%%%%%%%%%%%%%%%%%%%%%%%

%%%%%%%%%%%%%%%%%%%%%%%%%%%%%%%%%%%%%%%%%%%%%%%%%%%%%%%%%%%%%%%%%%%%%%
%%%%%%%%%%%%%%%%%%%%%%%%%%%%%%%%%%%%%%%%%%%%%%%%%%%%%%%%%%%%%%%%%%%%%%
%%%%%%%%%%%%%%%%%%%%%%%%%%%%%%%%%%%%%%%%%%%%%%%%%%%%%%%%%%%%%%%%%%%%%%
\maketitle
\begin{abstract}	
	In this paper we explore the design of sequent calculi operating on graphs. For this purpose, we introduce a set of logical connectives extending the well-known correspondence between classical propositional formulas and cographs, and we define sequent systems operating on formulas over these connectives.
	
	We prove, using an analyticity argument based on cut-elimination, that our systems provide conservative extensions of multiplicative linear logic (without and with mix) and classical propositional logics. We conclude by showing that one of our systems captures graph isomorphism as logical equivalence, and that this system is also sound and complete for the graphical logic $\GS$.
\end{abstract}
\tableofcontents
\clearpage
%%%%%%%%%%%%%%%%%%%%%%%%%%%%%%%%%%%%%%%%%%%%%%%%%%%%%%%%%%%%%%%%%%%%%%
%%%%%%%%%%%%%%%%%%%%%%%%%%%%%%%%%%%%%%%%%%%%%%%%%%%%%%%%%%%%%%%%%%%%%%
%%%%%%%%%%%%%%%%%%%%%%%%%%%%%%%%%%%%%%%%%%%%%%%%%%%%%%%%%%%%%%%%%%%%%%

%%%%%%%%%%%%%%%%%%%%%%%%%%%%%%%%%%%%%%%%%%%%%%%%%%%%%%%%%%%%%%%%%%%%%%
%%%%%%%%%%%%%%%%%%%%%%%%%%%%%%%%%%%%%%%%%%%%%%%%%%%%%%%%%%%%%%%%%%%%%%
%%%%%%%%%%%%%%%%%%%%%%%%%%%%%%%%%%%%%%%%%%%%%%%%%%%%%%%%%%%%%%%%%%%%%%
\section{Introduction}
%%%%%%%%%%%%%%%%%%%%%%%%%%%%%%%%%%%%%%%%%%%%%%%%%%%%%%%%%%%%%%%%%%%%%%
%%%%%%%%%%%%%%%%%%%%%%%%%%%%%%%%%%%%%%%%%%%%%%%%%%%%%%%%%%%%%%%%%%%%%%
%%%%%%%%%%%%%%%%%%%%%%%%%%%%%%%%%%%%%%%%%%%%%%%%%%%%%%%%%%%%%%%%%%%%%%

In theoretical computer science, \emph{formulas} play a crucial role in describing complex abstract objects. 
At the syntactical level, the formulas of a logic describe complex structures by means of unary and binary operators, usually thought of as \emph{connectives} and \emph{modalities} respectively.
On the other hand, graph-based syntaxes are often favored in formal representation, as they provide an intuitive and canonical description of properties, relations and systems.  
By means of example, consider the two graphs below:
$$
\va1\qquad \vb1 \qquad \vc1 \qquad \vd1
\Dedges{b1/a1,b1/c1,d1/c1}
\qquad\quor\qquad
\va1\qquad \vb1 \qquad \vc1 \qquad \vd1
\Gedges{b1/a1,b1/c1,d1/c1}
$$
It follows from results in \cite{Valdes1979,cographs} that describing any of the above graphs by means of formulas only employing binary connectives would require repeating at least one vertex.
As a consequence, formulas describing complex graphs are usually long and convoluted, and a specific \emph{encodings} are needed to standardize such formulas. 

Since graphs are ubiquitous in theoretical computer science and its applications, 
a natural question to ask is whether it is possible to define formalisms having graphs, instead of formulas, as first-class terms of the syntax.
Such a paradigm shift would allow to design efficient automated tools free from the bureaucracy introduced to handle the encoding required to represent graphs. 
At the same time, a graphical syntax would provide a useful tool for investigations such as the ones in \cite{learningPomset} or \cite{Fu2004,Denielou2010}, where the authors restrain their framework to sequential-parallel orders, as these can be represented by means of formulas with at most binary connectives.

Two recent lines of works have generalized proof theoretical methodologies to graphs, extending the correspondence between classical propositional formulas and cographs.
In these works, systems operating on graphs are defined via local and context-free rewriting rules, similarly as what done in \emph{deep inference} systems \cite{gug:SIS,gug:gun:par:2010,tub:str:esslli19}.
The first line of research, carried out by Calk, Das, Rice and Waring in various works \cite{CDW:ext-bool,calk:graph,waring:master,das:19,das:rice:FSCD2021}, explores the use of maximal stable sets/cliques-preserving homomorphisms to define notions of entailment\footnote{
	A similar approach was proposed in \cite{pratt1986modeling} for studying pomsets.
}, and study the resulting proof theory.
Here, the choice of the using of a deep inference formalism is natural, since the rules of the calculus are local rewritings.
The second line of research, investigated by the author, Horne, Mauw and Stra\ss burger in several contributions~\cite{acc:hor:str:LICS2020,acc:LMCS,acc:FSCD22}, studies the (sub-)structural proof theory of arbitrary graphs, with an approach inspired by linear logic~\cite{girard:87} and deep inference~\cite{gug:SIS}. 
The main goal of this line of research, partially achieved with the system $\GVsl$ operating on mixed graphs \cite{acc:FSCD22}, is to obtain a generalization of the completeness result of the logic $\BV$ with respect to pomset inclusion.
The logic $\BV$ contains a non-commutative binary connective $\lseq$ allowing to represent series-parallel partial order multisets as formulas in the syntax (as in Retoré's $\pomset$ logic \cite{ret:newPomset}), and to capture order inclusion as logical implication. However, as shown in \cite{tiu:SIS-II}, no cut-free sequent system for $\BV$ can exists -- therefore neither for $\pomset$ logic, which strictly contains it \cite{tito:lutz:csl22,tito:str:SIS-III}. 
For this reason the aforementioned line of work focused on deep inference systems, and the question about the existence of a cut-free sequent calculus for $\GS$ (the restriction of $\GVsl$ on undirected graphs originally defined in \cite{acc:hor:str:LICS2020}) was left open.

%%%%%%%%%%%%%%%%%%%%%%%%%%%%%%%%%%%%%%%%%%%%%%%%%%
\paragraph{Main contributions}
%%%%%%%%%%%%%%%%%%%%%%%%%%%%%%%%%%%%%%%%%%%%%%%%%%

In this paper we focus on the definition of sequent calculi for \emph{graphical logics}, and we positively answer the above question by providing, among other results, a cut-free sound and complete sequent calculus for $\GS$. 
By using standard techniques in sequent calculus, we thus obtain a proof of analiticity for this logic which is simpler and more concise with respect to the one in \cite{acc:LMCS}. 

To achieve these results, we introduce \emph{graphical connectives}, which are operators that can be naturally interpreted as graphs. 
We then define the sequent calculi $\PML$, $\PMLx$ and $\PLK$, containing rules to handle these connectives.
After showing that cut-elimination holds for these systems, 
we prove that  $\PML$, $\PMLx$ and $\PLK$ define conservative extensions of \emph{multiplicative linear logic}, \emph{multiplicative linear logic with mix} and \emph{classical propositional logic} respectively.
We then prove that formulas interpreted as the same graph are logically equivalent, thus justifying the fact that we consider these systems as operating on graphs rather than formulas.
We conclude by showing that $\PMLx$ is sound and complete with respect to the logic $\GS$, thus providing a simple sequent calculus for the logic.%

\paragraph{Outline of the paper}
%%%%%%%%%%%%%%%%%%%%%%%%%%%%%%%%%%%%%%%%%%%%%%%%%%
In \Cref{sec:graphs} we recall definitions and results in graph theory and the notion of modular decomposition. 
In \Cref{sec:calculi} we use these notions to extend the correspondence between classical propositional formulas and cographs to any graph.
We define linear sequent calculi and we prove their properties.
In \Cref{sec:GS} we show that one of these calculi is sound and complete with respect to the set of non-empty graphs provable in the deep inference system $\GS$ studied in \cite{acc:hor:str:LICS2020,acc:LMCS}.
In \Cref{sec:LK} we define a proof system which is a conservative extension of classical logic.
To conclude, we summarize in \Cref{sec:conc} some of the possible the research directions opened by this work.

%
%\matteo{
%	In fact, we could say that any proof system operating on formulas can be seen as a fragment of an proof system operating graphs.
%	
%	
%	Note that several {\textsf NP-hard} optimization problems on graphs become solvable in polynomial time if restricted to cographs \cite{joh:column}.
%	This implies that any proof system on formulas 
%	``less complex'' structure where no induced subgraph isomorphic to $\Pfour$ occurs%
%}%

%%%%%%%%%%%%%%%%%%%%%%%%%%%%%%%%%%%%%%%%%%%%%%%%%%
%%%%%%%%%%%%%%%%%%%%%%%%%%%%%%%%%%%%%%%%%%%%%%%%%%
%%%%%%%%%%%%%%%%%%%%%%%%%%%%%%%%%%%%%%%%%%%%%%%%%%
\section{From Formulas To Graphs}\label{sec:graphs}
%%%%%%%%%%%%%%%%%%%%%%%%%%%%%%%%%%%%%%%%%%%%%%%%%%
%%%%%%%%%%%%%%%%%%%%%%%%%%%%%%%%%%%%%%%%%%%%%%%%%%
%%%%%%%%%%%%%%%%%%%%%%%%%%%%%%%%%%%%%%%%%%%%%%%%%%

In this section we recall standard results from the literature on graphs such as \emph{modular decomposition} and \emph{cographs}.
We then introduce the notion of \emph{graphical connectives}
allowing us to extend the correspondence between cographs and classical propositional formulas to general graphs.

%%%%%%%%%%%%%%%%%%%%%%%%%%%%%%%%%%%%%%%%%%%%%%%%%%
%%%%%%%%%%%%%%%%%%%%%%%%%%%%%%%%%%%%%%%%%%%%%%%%%%
\subsection{Graphs and Modular Decomposition}
%%%%%%%%%%%%%%%%%%%%%%%%%%%%%%%%%%%%%%%%%%%%%%%%%%
%%%%%%%%%%%%%%%%%%%%%%%%%%%%%%%%%%%%%%%%%%%%%%%%%%

In this work are interested in using graphs to represent patterns of interactions by means of the binary relations (edges) between their components (vertices).
For this reason we recall the definition of 
\emph{labeled graph} (the mathematical structure we use to encode these patterns)
together with 
the definition of 
\emph{isomorphism} (the standard notion of identity on labeled graphs)
and
the rougher notion of \emph{similarity} (equivalence up-to labels over vertices).

%%%%%%%%%%%%%%%%%%%%%%%%%%%%%%%%%%%%%%%%%%%%%%%%%%
\begin{definition}\label{def:graph}
	A \defn{$\litset$-labeled graph} (or simply \defn{graph})
	$\gG=\tuple{\vG,\lG,\uG}$
	is given by 
	a finite set of \defn{vertices} $\vG$,
	a partial \defn{labeling function} 
	$\labsymb_\gG\colon \vertices[\gG]\to \litset$ 
	associating a label $\lab v$ from a given set of labels $\litset$ to each vertex $v\in\vG$
	(we may represent $\labsymb_\gG$ as a set of equations of the form $\lab v= \ell_v $ and denote by $\emptyset$ the empty function),
	and
	a non-reflexive symmetric edge relation $\uG\subset\vof\gG\times\vof\gG$ whose elements, called \defn{edges}, may be denoted $vw$ instead of $(v,w)$.
	The \defn{empty} graph $\tuple{\emptyset,\emptyset,\emptyset}$ is denoted $\emptygraph$.
	
	A \defn{similarity} between two graphs $\gG$ and $\gG'$
	is a bijection 
	$f\colon \vG \to \vof{\gG'}$
	such that 
	$x\uedge[\gG] y$ iff $f(x)\uedge[\gG'] f(y) $
	for any $x,y\in\vG$.
	An  \defn{isomorphism} is a similarity $f$ such that $\lab{v}=\lab{f(v)}$ for any $x,y\in\vG$.
	Two graphs
	$\gG$ and $\gG'$ are 
	\defn{similar} (denoted $\gG\isym\gG'$)
	if there is an similarity between $\gG$ and $\gG'$.
	A \defn{symmetry} is a similarity of a graph with itself.
	They are 
	\defn{isomorphic} (denoted $\gG=\gG'$)
	if there is a isomorphism between $\gG$ and $\gG'$.
	From now on, we consider two isomorphic graphs to be \defn{the same} graph.

	Two vertices $v$ and $w$ in $\gG$ are \defn{connected}
	if there is a sequence  $v=u_0,\ldots, u_n=w$  of vertices in $\gG$ (called \defn{path}) 
	such that $u_{i-1}\uedge[\gG]u_{i}$ for all $i\in\intset1n$.
	A \defn{connected component} of $\gG$ is a maximal set of connected vertices in $\gG$.
	A graph $\gG$ is a \defn{clique} (resp.~a \defn{stable set})
	iff
	$\nuedge[\gG]=\emptyset$ (resp.~$\uG=\emptyset$).
\end{definition}
%%%%%%%%%%%%%%%%%%%%%%%%%%%%%%%%%%%%%%%%%%%%%%%%%%

%%%%%%%%%%%%%%%%%%%%%%%%%%%%%%%%%%%%%%%%%%%%%%%%%%
\begin{nota}\label{nota:mod}
	When drawing a graph or an unlabeled graph
	we draw $\vv1\quad\vw1\Gedges{v1/w1}$ whenever $v\uedge w$, 
	we draw no edge at all whenever $v\nuedge w$. 
	We may represent a vertex of a graph by using its label instead of its name. 
	For example, the single-vertex graph $\gG=\tuple{\set{v},\lG,\emptyset}$ may be represented either by a the vertex name $\vv1$ or by the vertex label $\lab v$ (or $\bullet$ if $\lab v $ is not defined).
	Note that,
	since we are considering isomorphic graphs to be the same, 
	as soon as there is no ambiguity due to vertices represented by the same symbol,
	we can assume
	that
	the representation of a graph to provide us one of the possible triple (set of vertices, label function, and set of edges) defining it.
\end{nota}
%%%%%%%%%%%%%%%%%%%%%%%%%%%%%%%%%%%%%%%%%%%%%%%%%%

%%%%%%%%%%%%%%%%%%%%%%%%%%%%%%%%%%%%%%%%%%%%%%%%%%
\begin{example}\label{ex:con1}
	Consider the following graphs:
	$$\begin{array}{c@{\;=\;}l}
		\gF	&	\Tuple{\Set{u_1,u_2,u_3,u_4},\Set{\lab{u_1}=a,\lab{u_2}=b,\lab{u_3}=c,\lab{u_4}=d},\Set{u_1u_2,u_2u_3,u_3u_4}}
		\\
		\gG	&	\Tuple{\Set{v_1,v_2,v_3,v_4},\Set{\lab{v_1}=b,\lab{v_2}=a,\lab{v_3}=c,\lab{v_4}=d},\Set{v_1v_2,v_1v_3,v_3v_4}}
		\\
		\gH	&	\Tuple{\Set{w_1,w_2,w_3,w_4},\Set{\lab{w_1}=a,\lab{w_2}=b,\lab{w_3}=c,\lab{w_4}=d},\Set{w_1w_2,w_1w_3,w_3w_4}}
	\end{array}
	$$
	They are all symmetric, that is $\gF\isym\gG\isym\gH$, but $\gF=\gG\neq \gH$ as can easily be verified using their representations:
	$$
	\gF=\va1\quad\vb1\quad\vc1\quad\vd1\Gedges{a1/b1,b1/c1,c1/d1}=\gG
	\qquand
	\gH=\vb1\quad\va1\quad\vc1\quad\vd1\Gedges{b1/a1,a1/c1,c1/d1} 
	$$
\end{example}
%%%%%%%%%%%%%%%%%%%%%%%%%%%%%%%%%%%%%%%%%%%%%%%%%%	

%%%%%%%%%%%%%%%%%%%%%%%%%%%%%%%%%%%%%%%%%%%%%%%%%%	
\begin{obs}
	The problem of graph isomorphism is a standard $\NP$-problem (to be more precise, its complexity is quasi-polynomial \cite{bab:quasi}).
	That is, verify that a given bijection between the sets of vertices of two graphs is an isomorphism can be checked in polynomial time,
	while there is no known polynomial time algorithm to find such an isomorphism.
	For this reason, whenever we say that two graphs are the same, either we assume they share the same set of vertices, therefore implicitly assuming the isomorphism $f$ to be defined by the identity function over the set of vertices, or we assume an isomorphism to be given.
	This allows us to verify whether two graphs are the same in polynomial time.
\end{obs}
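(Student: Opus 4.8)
The only substantive point behind this observation is that, given two graphs $\gG$ and $\gG'$ together with a bijection $f\colon\vof\gG\to\vof{\gG'}$, one can decide in polynomial time whether $f$ is an isomorphism; everything else then follows at once, together with the cited quasi-polynomial algorithm \cite{bab:quasi}. So the plan is to exhibit an explicit verification procedure and read the three assertions off from it.

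First I would fix data structures: store each edge relation as an adjacency matrix indexed by the (finite) vertex set, and each labeling function as an array from vertices to $\litset\cup\set{\bot}$, where $\bot$ records that a vertex is unlabeled. Given $f$, run two passes. The \emph{edge pass} ranges over all ordered pairs $(x,y)$ of vertices of $\gG$ and checks that $x\uedge[\gG]y$ holds exactly when $f(x)\uedge[\gG']f(y)$ does; this costs $\sizeof{\vof\gG}^2$ constant-time lookups. The \emph{label pass} ranges over all vertices $v$ of $\gG$ and checks that $\labsymb_\gG(v)$ is defined if and only if $\labsymb_{\gG'}(f(v))$ is, and that the two agree when both are defined; this costs $\sizeof{\vof\gG}$ constant-time comparisons. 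Since $f$ is already known to be a bijection of finite vertex sets, \Cref{def:graph} tells us that $f$ is an isomorphism exactly when both passes succeed; hence the procedure is correct and runs in time quadratic in the number of vertices.

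The three assertions then follow. For membership of graph isomorphism in $\NP$: a bijection $f$ is a polynomial-size certificate, verified by the procedure above. For the absence of a known polynomial algorithm and the quasi-polynomial upper bound: this is precisely \cite{bab:quasi}, invoked rather than reproved. For the paper's convention: if ``$\gG$ and $\gG'$ are the same'' is only ever asserted when the two graphs are presented on a common vertex set -- so that the relevant $f$ is the identity $\idperm$ -- or when an isomorphism $f$ is supplied with the input, then deciding sameness never involves searching the space of bijections and reduces to a single call of the verification procedure, which is polynomial.

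The only point needing care -- and it is a matter of being explicit rather than a genuine obstacle -- is that ``label-preserving'' must be read over the \emph{partial} labeling functions of \Cref{def:graph}: it must preserve both the domain of definition and the values, since otherwise a vertex labeled $a$ and an unlabeled vertex would be wrongly identified. The label pass above is phrased to respect this, and nothing deeper is involved; indeed the genuine difficulty in graph isomorphism -- searching for $f$ -- has been deliberately sidestepped by the convention, so no hard step remains.
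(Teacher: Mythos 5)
Your proposal is correct and matches what the paper implicitly relies on: the observation is stated without proof, and your explicit two-pass verification (edges plus partial labels) is exactly the standard argument being taken for granted, with the quasi-polynomial bound delegated to \cite{bab:quasi}. Your note that label preservation must respect the \emph{domain} of the partial labeling function is a worthwhile precision, but nothing here diverges from the paper's intent.
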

%%%%%%%%%%%%%%%%%%%%%%%%%%%%%%%%%%%%%%%%%%%%%%%%%%	

In order to use proof theoretical methodologies on graphs, we need a suitable notion of subgraphs to be used in the same way sub-formulas are used in proof systems, that is, to state properties of the calculus or to define the behavior of rules.
For this purpose, we use for a notion of  \emph{module} to identify subgraph allowing us to decompose a graph using abstract syntax trees similar to the ones underlying formulas \cite{gallai:67,james1972graph,hab:paul:survey,lovasz2009matching,mcc:ros:spi:linear,Ehrenfeucht1999}.
A module is a subset of vertices of a graph having the same edge-relation with any vertex outside the subset. 
This definition generalizes the interaction we usually be observed in formulas, where, in the formula tree, any literal in a subformula has the same relation (the one given by the least common ancestor) with a given literal not occurring in the subformula itself.

%%%%%%%%%%%%%%%%%%%%%%%%%%%%%%%%%%%%%%%%%%%%%%%%%%
\begin{definition}
	Let $\gG=\tuple{\vG,\lG,\eG}$ be a graph and $W\subseteq\vG$.
	The \defn{graph induced} by $W$ is
	the graph 
	$\gG\restr{W}\coloneqq\Tuple{W,\lG\restr{W},\uG\cap\left(W\times W\right)}$
	where
	$\lG\restr{W}(v)\coloneqq \lG(v)$ for all $v\in W$.
	
	A \defn{module} of a graph $\gG$ is a subset $M$ of $\vG$ such that
	$x\uedge z$ iff $ y\uedge z $ 
	for any $x,y\in M$, 
	$z\in \vG \setminus M$.
	A module $M$ is \defn{trivial} if $M=\emptyset$,
	$M=\vG$, or $M=\set{x}$ for some $x\in\vG$.
	From now on, we identify a module $\gM$ of a graph $\gG$ with 
	the induced subgraph $\gG\restr{M}$.
\end{definition}
\begin{remark}
	A connected component of a graph $\gG$ is a module of $\gG$.
%	In fact, by definition, no vertex in the component admits an edge connecting it with a vertex of another connected component.
\end{remark}
%%%%%%%%%%%%%%%%%%%%%%%%%%%%%%%%%%%%%%%%%%%%%%%%%%

Using modules we can optimize the way we represent graphs reducing the number of edges drawn without losing information, relying on the fact that all vertices of a module has the same edge-relation with any vertex outside the module.

%%%%%%%%%%%%%%%%%%%%%%%%%%%%%%%%%%%%%%%%%%%%%%%%%%
\begin{nota}
	In representing graphs we may border vertices of a same module by a closed line.
	An edges connected to such a closed line denotes the existence of an edge to each vertex inside it.
	By means of example, consider the following graph and its more compact modular representation.
	\begin{equation}
		\begin{array}{c@{\qquad}c@{\qquad}c}
			\va1&\vc1
			\\
			&&\ve1
			\\
			\vb1&\vd1
		\end{array}
		\Gedges{a1/b1,a1/c1,a1/d1,b1/d1,b1/c1,e1/c1,e1/d1}
		\qquad=\qquad
		\vmod1{\begin{array}{c}\va1\\\\\vb1\end{array}}\quad\vmod2{\begin{array}{c}\vc1\\\\\vd1\end{array}}\quad \ve1
		\Gedges{a1/b1,mod1/mod2,mod2/e1}
	\end{equation}
\end{nota}
%%%%%%%%%%%%%%%%%%%%%%%%%%%%%%%%%%%%%%%%%%%%%%%%%%

The notion of module is related to a notion of context, which can be intuitively formulated as a graph with a special vertex playing the role of a hole in which we can plug in a module.
%%%%%%%%%%%%%%%%%%%%%%%%%%%%%%%%%%%%%%%%%%%%%%%%%%
\begin{definition}
	A \defn{context} $\cC\conso$ is a (non-empty) graph containing a single occurrence of a special vertex $\chole$ (such that $\lab\chole$ is undefined).
	It is \defn{trivial} if $\cC\conso=\chole$.
	If $\cC\conso$ is a context and $\gG$ a graph, we define $\cC\cons{\gG}$ as the graph obtained by replacing $\chole$ by $\gG$.
	Formally,
	\begin{equation*}
		\cC\cons\gG
		\coloneqq
		\Tuple{	
%			\;
			\left(\vertices[\cC\conso]\setminus\set{\chole}\right)\uplus \vG 
			\;,\;
			\labsymb_\cC\cup\lG
			\;,\;
			\Set{vw\mid v,w\in \vertices[\cC\conso]\setminus\set{\chole}, v\uedge[\cC\conso]w}
			\cup 
			\Set{
				vw
				\mid 
				v\in \vertices[\cC\conso]\setminus\set{\chole}, 
				w\in\vG, 
				v\uedge[\cC\conso]\chole
			}
%			\;
		}
	\end{equation*}
\end{definition}
\begin{remark}
	A set of vertices $\gM$ is a module of a graph $\gG$ 
	iff 
	there is a context $\cC\conso$ such that $\gG=\cC\cons{\gM}$.
\end{remark}
%%%%%%%%%%%%%%%%%%%%%%%%%%%%%%%%%%%%%%%%%%%%%%%%%%

We generalize this idea of replacing a vertex of a graph with a module
by defining the operations of \emph{composition-via} a graph, where all vertices of a graph are replaced in a ``modular way'' by modules.

%%%%%%%%%%%%%%%%%%%%%%%%%%%%%%%%%%%%%%%%%%%%%%%%%%
\begin{definition}\label{def:compVia}
	Let $\gG$ be a graph with $\vof\gG=\set{v_1,\ldots,v_n}$ 
	and 
	let $\gH_1,\ldots,\gH_n$ be graphs.
%	 with disjoint vertices. 
	We define the
	\defn{composition of $\gH_1,\ldots,\gH_n$ via $\gG$}
	as the graph $\gG\connn{\gH_1,\ldots,\gH_n}$
	obtained by replacing each vertex $v_i$ of $\gG$
	with a module $\gH_i$ for all $i\in\intset1n$.
	Formally,
	\begin{equation}\label{eq:compVia}
		\gG\connn{\gH_1,\ldots,\gH_n}
		\;=\;
		\Tuple{
			\;
			\biguplus_{i=1}^n\vof{\gH_i} 
			\;,\;
			\bigcup_{i=1}^n{\labsymb_{\gH_i}}
			\;,\;
			\left(\bigcup_{i=1}^n\uedge[\gH_i]\right)
			\cup
			\Set{\!\!
				\begin{array}{c|c}
					(x,y)
					&
					x\in \vof{\gH_i},
					y\in\vof{\gH_j},
					v_i\uedge[\gG] v_j 
				\end{array}\!\!
			}
			\;
		}
	\end{equation}	
	The subgraphs $\gH_1,\ldots, \gH_n$ are called \defn{factors} of $\gG\connn{\gH_1,\ldots, \gH_n}$
%	of $\gG\connn{\gH_1,\ldots, \gH_n}$
%	and, by definition, 
	and, by definition, 
	are (possibly not maximal) modules of $\gG\connn{\gH_1,\ldots, \gH_n}$.
\end{definition}
\begin{remark}\label{rem:compVia}
	The information about the labels of the graph $\gG$ used to define the composition-via operation is lost.
	Moreover, 
	if $\gG$ is a graph with $\vG=\set{v_1,\ldots, v_n}$ and 
	$\sigma$ a permutation over the set $\intset1n$ 
	such that 
	the map $f_\sigma:\vG\to\vG$ mapping $v_i$ in $f_\sigma(v_i)=v_{\sigma(i)}$ for all $i\in\intset1n$ is an similarity between $\gG$ and $\gG$, 
	then $\gG\connn{\gH_1,\ldots,\gH_n}=\gG'\connn{\gH_1,\ldots,\gH_n}$.
\end{remark}
%%%%%%%%%%%%%%%%%%%%%%%%%%%%%%%%%%%%%%%%%%%%%%%%%%

In order to establish a connection between graphs and formulas, from now on we only consider graphs whose set of labels belong to the set 
$\labelset=\Set{a,\cneg a\mid a\in\atomset}$ where $\atomset$ is a fixed set of propositional variables.
We then define the \emph{dual} of a graphs.

%%%%%%%%%%%%%%%%%%%%%%%%%%%%%%%%%%%%%%%%%%%%%%%%%%
\begin{definition}
	Let $\gG=\tuple{\vG,\lG,\eG}$ be a graph.
	We define the edge relation $\nuG\coloneqq\Set{(v,w)\mid v\neq w \mbox{ and } vw\notin\uG}$
	and
	we define the \defn{dual} graph of $\gG$ as the graph 
	$\nG\coloneqq \Tuple{\vG,\nuedge[\gG],\lnG}$ with $\lnG(v)=\cneg{\left(\lG(v)\right)}$ (assuming $\cnegneg a=a$ for all $a\in\atomset$).
\end{definition}
\begin{remark}
	By definition, each module of a graph corresponds to a module of its dual graph.
	It follows that a connected component of $\nG$ is a module of $\gG$.
\end{remark}
\begin{nota}
	If $\mathcal G$ is the representation of a graph $\gG$, 
	then we may represent the graph $\nG$  by bordering the representation of $\gG$ with a closed line with the negation symbol on the upper-right corner, that is,  $\cneg{\vmod1{\mathcal G}}$.
\end{nota}
%%%%%%%%%%%%%%%%%%%%%%%%%%%%%%%%%%%%%%%%%%%%%%%%%%

%%%%%%%%%%%%%%%%%%%%%%%%%%%%%%%%%%%%%%%%%%%%%%%%%%
%%%%%%%%%%%%%%%%%%%%%%%%%%%%%%%%%%%%%%%%%%%%%%%%%%
\subsection{Classical Propositional Formulas and Cographs}\label{sec:cographs}
%%%%%%%%%%%%%%%%%%%%%%%%%%%%%%%%%%%%%%%%%%%%%%%%%%
%%%%%%%%%%%%%%%%%%%%%%%%%%%%%%%%%%%%%%%%%%%%%%%%%%

%%%%%%%%%%%%%%%%%%%%%%%%%%%%%%%%%%%%%%%%%%%%%%%%%%
The set of \defn{classical (propositional) formulas} is generated from a set of propositional variable $\atomset$  using the \defn{negation} $\cneg{(\cdot)}$, the \defn{disjunction} $\lor$ and the \defn{conjunction} $\land$
using the following grammar:
\begin{equation}\label{eq:LKform}
	\fA,\fB			\coloneqq 
	a 				\mid  
	\fA \lor \fB 	\mid 
	\fA \land \fB 	\mid
	\cneg\fA
	\qquad
	\mbox{with $a\in\atomset$.}
\end{equation}
We denote by $\feq$  the equivalence relation over formulas generated by the following laws:
\begin{equation}\label{eq:formEQ}
	\begin{array}{rl}
		\mbox{\defn{Equivalence laws}}
		&
		\left\{\begin{array}{r@{\;\feq\;}l@{\hskip4em}r@{\;\feq\;}l}
			\fA \lor \fB			&\fB\lor \fA
			&
			\fA \lor (\fB\lor \fC)	&(\fA\lor \fB)\lor \fC
			\\
			\fA \land \fB			&\fB\land \fA
			&
			\fA \land (\fB\land \fC)&(\fA\land \fB)\land \fC
		\end{array}\right.
		\\[8pt]
		\mbox{\defn{De-Morgan laws}}
		&
		\;\left\{
		\begin{array}{r@{\;\feq\;}l@{\hskip6.8em}r@{\;\feq\;}l}
			\!\cneg{(\cneg \fA)}		&\fA
			&
			\cneg{(\fA\land\fB)}	&\cneg \fA \lor \cneg\fB
		\end{array}
		\right.
	\end{array}
\end{equation}	
%%%%%%%%%%%%%%%%%%%%%%%%%%%%%%%%%%%%%%%%%%%%%%%%%%
%
We define a map from literals to single-vertex graphs, which extends to formulas via the composition-via 
the unlabeled 
two-vertices stable set $\stable2$
and
two-vertices clique $\clique2$.
%
%%%%%%%%%%%%%%%%%%%%%%%%%%%%%%%%%%%%%%%%%%%%%%%%%%
\begin{definition}
	Let $\fA$ be a classical formula, then $\gof \fA$ is the graph inductively defined as follows:
	$$
	\gof a = a
	\qquad
	\gof {\cneg \fA} = \cneg{\gof\fA}
	\qquad
	\gof {\fA\lor\fB} = \stable2\Connn{\gof\fA , \gof\fB}
	\qquad
	\gof {\fA\land\fB} = \clique2\Connn{\gof\fA , \gof\fB}
	$$
	where $\stable2$ and $\clique{2}$ are respectively a stable set and a clique with $2$ vertices,
	and
	where we denote by $a$ the single-vertex graph, whose vertex is labeled by $a$.
\end{definition}
%%%%%%%%%%%%%%%%%%%%%%%%%%%%%%%%%%%%%%%%%%%%%%%%%%

We can easily observe that the map $\gof\cdot$ well-behaves with respect to the equivalence over formulas $\feq$, that is, equivalent formulas are mapped to the symmetric graphs.
%%%%%%%%%%%%%%%%%%%%%%%%%%%%%%%%%%%%%%%%%%%%%%%%%%
\begin{proposition}
	Let $\fA$ and $\fB$ be classical formulas. 
	Then $\fA\feq\fB$ iff $\gof\fA\greq\gof\fB$.
\end{proposition}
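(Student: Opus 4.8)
The plan is to prove the two implications separately: \emph{soundness} ($\fA\feq\fB \Rightarrow \gof\fA\greq\gof\fB$) by checking that each generating law of $\feq$ is respected by $\gof\cdot$, and \emph{completeness} ($\gof\fA\greq\gof\fB \Rightarrow \fA\feq\fB$) by exploiting the uniqueness of the modular decomposition tree of a cograph.

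For soundness, recall that $\feq$ is generated, as a congruence on formulas, by the laws in \eqref{eq:formEQ}; since $\gof\cdot$ is defined compositionally it therefore suffices to check $\gof\fA\greq\gof\fB$ when $\fA\feq\fB$ is a single instance of one of those laws. The commutativity laws follow from Remark~\ref{rem:compVia}, using that both $\stable2$ and $\clique2$ admit the nontrivial symmetry swapping their two vertices, so $\stable2\connn{\gof\fA,\gof\fB}\greq\stable2\connn{\gof\fB,\gof\fA}$ and likewise for $\clique2$. The associativity laws reduce to the associativity of the $n$-ary operations $\stable n\connn{-,\dots,-}$ and $\clique n\connn{-,\dots,-}$ (``disjoint union'' and ``join''): from \eqref{eq:compVia} one checks directly that $\stable2\connn{\stable2\connn{\gof\fA,\gof\fB},\gof\fC}$ and $\stable2\connn{\gof\fA,\stable2\connn{\gof\fB,\gof\fC}}$ both equal $\stable3\connn{\gof\fA,\gof\fB,\gof\fC}$, and symmetrically for $\clique{}$. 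For the De-Morgan laws the only ingredient is the identity $\cneg{\gG\connn{\gH_1,\dots,\gH_n}}\greq\nG\connn{\nH_1,\dots,\nH_n}$, immediate from the definitions of dual and of composition-via (edges inside a factor and edges between factors are both complemented); combined with $\cneg{\clique2}\greq\stable2$ this gives $\gof{\cneg{(\fA\land\fB)}}\greq\cneg{\clique2\connn{\gof\fA,\gof\fB}}\greq\stable2\connn{\cneg{\gof\fA},\cneg{\gof\fB}}\greq\gof{\cneg\fA\lor\cneg\fB}$, while the double-negation law follows from the fact that dualising a graph twice yields the same graph (using $\cnegneg a = a$).

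Completeness is the substantial direction. First I would use the De-Morgan laws to push all negations onto atoms, replacing $\fA$ and $\fB$ by $\feq$-equivalent formulas in negation normal form; by soundness this does not change the associated graphs. Next, using associativity, I would flatten these formulas so that no $\lor$ occurs immediately below a $\lor$ and no $\land$ immediately below a $\land$; each resulting ``reduced'' formula is then a literal, or has the shape $\fA_1\binop\cdots\binop\fA_k$ with $\binop\in\set{\lor,\land}$, $k\geq2$, every $\fA_i$ reduced, and each non-literal $\fA_i$ headed by the connective dual to $\binop$. Under $\gof\cdot$ such a formula becomes a single labelled vertex, a disjoint union $\stable k\connn{\gof{\fA_1},\dots,\gof{\fA_k}}$, or a join $\clique k\connn{\gof{\fA_1},\dots,\gof{\fA_k}}$ — that is, exactly the cotree of a cograph. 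Invoking the classical uniqueness of this decomposition (see \cite{cographs,gallai:67}): a cograph with at least two vertices is either disconnected, with uniquely determined connected components, or connected with disconnected complement, with uniquely determined ``co-components'', but not both. By induction on the number of vertices, from $\gof\fA\greq\gof\fB$ one then reads off that the two reduced formulas have the same top connective and arity $k$, and that, after permuting factors (which is realised within $\feq$ by the commutativity laws), $\gof{\fA_i}\greq\gof{\fB_i}$ for each $i$; the induction hypothesis yields $\fA_i\feq\fB_i$, and since $\feq$ is a congruence we conclude $\fA\feq\fB$. The base case is a single labelled vertex, which forces both reduced formulas to be the same literal.

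The main obstacle I anticipate is the bookkeeping in the completeness direction: making ``flattening into reduced form'' precise and showing it is $\feq$-preserving, and aligning the factor decomposition coming from the formula with the graph-theoretic decomposition into (co-)components — specifically, that a non-literal reduced subformula is never split across two components, which is exactly what makes the multiset $\set{\gof{\fA_i}}$ an isomorphism invariant of $\gof\fA$. Once this alignment is in place, the rest is a routine structural induction, with the uniqueness of cograph modular decomposition used as a black box from the cited literature.
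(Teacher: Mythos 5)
Your proof is correct and follows the route the paper intends: the paper states this proposition without proof, presenting it as an easy observation whose substantial (completeness) half is the cograph instance of the modular-decomposition machinery developed immediately afterwards (\Cref{thm:modDec}, \Cref{thm:conDec}, and \Cref{prop:feq} specialised to the base $\set{\lpar,\ltens}$). Your soundness check of the generating laws of $\feq$ and your completeness argument via negation normal form, flattening, and the unique component/co-component decomposition of a cograph are exactly the standard argument, and the one point you flag as delicate --- that a non-literal reduced factor is never split across components, because a join of non-empty graphs is connected and dually for the complement --- is indeed the only step requiring care.
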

%%%%%%%%%%%%%%%%%%%%%%%%%%%%%%%%%%%%%%%%%%%%%%%%%%

We finally recall the definition of \emph{cographs}, and the theorem establishing the relation between cographs and classical formulas, i.e., providing an alternative definition of cographs as graphs generated by single-vertex graphs using the composition-via a two-vertices no-edge graph and a two-vertices one-edge graph.

%%%%%%%%%%%%%%%%%%%%%%%%%%%%%%%%%%%%%%%%%%%%%%%%%%
\begin{definition}
	A \defn{cograph} is a graph $\gG$ such that 
	there are no four vertices $v_1,v_2,v_3,v_4$ in $\gG$
	such that the induced subgraph $\gG\restr{\set{v_1,v_2,v_3,v_4}}$ is similar to the graph $\tuple{\set{a,b,c,d},\emptyset,\set{ab,bc,cd}}=\va1\quad\vb1\quad\vc1\quad\vd1\Gedges{a1/b1,b1/c1,c1/d1}$ .
\end{definition}
\begin{theorem}[\cite{gallai:67}]\label{thm:cograph}
	A graph $\gG$ is a cograph iff there is a formula $\fA$ such that $\gG\isym\gof{\fA}$.
\end{theorem}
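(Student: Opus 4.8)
The plan is to prove both directions by induction, using the correspondence between the composition-via operations and the connectives $\lor,\land$.

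For the right-to-left direction, suppose $\gG \isym \gof{\fA}$ for some classical formula $\fA$. Since similarity preserves induced subgraphs up to similarity, and the forbidden configuration in the definition of cograph is itself stated up to similarity, it suffices to show that $\gof{\fA}$ is a cograph for every $\fA$. I would argue this by structural induction on $\fA$. The base case $\fA = a$ is immediate, since a single-vertex graph has no four vertices. The negation case $\fA = \cneg{\fB}$ follows because $\gof{\cneg\fB} = \cneg{\gof\fB}$, and taking the dual graph sends the forbidden path $\va1\quad\vb1\quad\vc1\quad\vd1\Gedges{a1/b1,b1/c1,c1/d1}$ to a similar graph (the complement of $\Pfour$ is again a $\Pfour$), so $\gG$ is a cograph iff $\nG$ is. For the binary cases $\fA = \fB \lor \fC$ and $\fA = \fB \land \fC$, we have $\gof\fA = \stable2\connn{\gof\fB,\gof\fC}$ or $\clique2\connn{\gof\fB,\gof\fC}$; here $\gof\fB$ and $\gof\fC$ are modules, so any four vertices $v_1,\dots,v_4$ either lie entirely in one factor (and we conclude by the induction hypothesis), or are split between the two factors, in which case the induced subgraph on them is a "join" or "disjoint union" of two smaller induced graphs and hence cannot be a $\Pfour$ (a $\Pfour$ is connected with connected complement, so it is neither a nontrivial disjoint union nor a nontrivial join).

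For the left-to-right direction, I would proceed by strong induction on $\sizeof{\vG}$. If $\gG$ has a single vertex labeled $a$, then $\gG = \gof a$. Otherwise, the key structural fact I would invoke is the classical theorem (essentially Gallai's modular decomposition theorem, already cited in the excerpt) that a cograph with at least two vertices is either disconnected or has disconnected complement: if $\gG$ is disconnected with connected components $\gG_1,\dots,\gG_k$ then $\gG = \stable{2}\connn{\gG_1, \stable2\connn{\gG_2,\dots}}$ iterated, equivalently $\gG \isym \gof{\fB_1 \lor \cdots \lor \fB_k}$ where $\gG_i \isym \gof{\fB_i}$ by the induction hypothesis (each $\gG_i$ is an induced subgraph of a cograph, hence a cograph, with strictly fewer vertices); dually, if $\nG$ is disconnected, apply the same argument to $\nG$ using $\clique2$ and $\land$, noting $\cneg{\gof\fB} = \gof{\cneg\fB}$. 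Since $\gG$ has at least two vertices and is a cograph, at least one of these two cases applies, closing the induction.

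The main obstacle is justifying the dichotomy used in the forward direction — that a cograph on $\ge 2$ vertices is disconnected or co-disconnected — from the definition of cograph as a $\Pfour$-free graph. This is exactly the content of Gallai's theorem on modular decomposition (the $\Pfour$-free graphs are precisely those built from single vertices by disjoint union and join), which is cited as \cite{gallai:67} and which the section explicitly recalls; so I would cite it rather than reprove it. One minor care point throughout is keeping the distinction between similarity $\isym$ and isomorphism $=$: the statement is phrased with $\isym$, so in the forward direction we are free to fix a convenient labeling of $\gG$'s vertices by propositional variables when building $\fA$, and in the backward direction we only need $\gof\fA$ to be $\Pfour$-free, which is a property invariant under similarity.
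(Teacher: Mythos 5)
The paper states this theorem as a recalled classical result, citing \cite{gallai:67} without giving any proof, so there is nothing in the paper to compare your argument against line by line. Your sketch is the standard proof and is correct: the backward direction by structural induction on $\fA$ works (the complement of $\Pfour$ is again a $\Pfour$, and a $\Pfour$ is connected with connected complement, so it cannot be split across the two factors of a join or disjoint union), and the forward direction by induction on $\sizeof{\vG}$ works once one has the dichotomy that a $\Pfour$-free graph on at least two vertices is disconnected or co-disconnected. That dichotomy is indeed the entire content of the theorem, and deferring to the cited literature for it is exactly what the paper itself does; if you wanted a self-contained proof you would need to establish it (e.g.\ by the standard argument that a connected, co-connected graph on at least two vertices contains an induced $\Pfour$), but as a reconstruction of a cited result your treatment is sound, including the care you take to work up to similarity $\isym$ rather than isomorphism.
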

%%%%%%%%%%%%%%%%%%%%%%%%%%%%%%%%%%%%%%%%%%%%%%%%%%

%%%%%%%%%%%%%%%%%%%%%%%%%%%%%%%%%%%%%%%%%%%%%%%%%%
%%%%%%%%%%%%%%%%%%%%%%%%%%%%%%%%%%%%%%%%%%%%%%%%%%
\subsection{Modular Decomposition of Graphs}\label{sec:modDec}
%%%%%%%%%%%%%%%%%%%%%%%%%%%%%%%%%%%%%%%%%%%%%%%%%%
%%%%%%%%%%%%%%%%%%%%%%%%%%%%%%%%%%%%%%%%%%%%%%%%%%

We recall the notion of \emph{prime graph}, 
allowing us to provide canonical representatives of graphs via modular decomposition.
(see e.g., \cite{gallai:67,james1972graph,hab:paul:survey,lovasz2009matching,mcc:ros:spi:linear,Ehrenfeucht1999}).
%%%%%%%%%%%%%%%%%%%%%%%%%%%%%%%%%%%%%%%%%%%%%%%%%%
\begin{definition}
	A graph $\gG$ is \defn{prime} if $\sizeof{\vG}> 1$ and all its modules are trivial. 
\end{definition}
We recall the following standard result from the literature.
\begin{theorem}[\cite{james1972graph}]\label{thm:modDec}
	Let $\gG$ be a graph with at least two vertices.
	Then there are non-empty modules $ \gM_1, \dots, \gM_n $ of $\gG$ and a prime graph $\gP$ such that $\gG=\gP\connn{\gM_1, \dots, \gM_n}$.
\end{theorem}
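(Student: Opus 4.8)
The plan is to prove \Cref{thm:modDec} by induction on the number of vertices $|\vG|$, using the fact that the maximal proper modules of a graph that partition its vertex set are themselves highly structured. First I would recall the key classical fact underlying Gallai's decomposition: if $\gG$ has at least two vertices, then among all proper modules of $\gG$ there is a partition of $\vG$ into maximal members of a particular family — namely, one considers the \emph{strong} modules (modules that do not overlap any other module), and among the maximal proper strong modules one obtains a partition $\{M_1,\dots,M_n\}$ of $\vG$ with $n\geq 2$. The quotient obtained by contracting each $M_i$ to a single vertex is then well-defined, because each $M_i$ is a module, so the edge relation between vertices of $M_i$ and vertices of $M_j$ is constant; call this quotient graph $\gP$, which has $n$ vertices. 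By construction $\gG = \gP\connn{\gM_1,\dots,\gM_n}$ where $\gM_i = \gG\restr{M_i}$, using the composition-via operation of \Cref{def:compVia}.

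The substantive step is to show that this quotient $\gP$ is prime, i.e.\ that all of its modules are trivial. Suppose for contradiction $\gP$ has a nontrivial module $N\subseteq\vof\gP$, say $N$ corresponds to the index set $I\subseteq\intset1n$ with $1<|I|<n$. Then I would verify that $\bigcup_{i\in I} M_i$ is a module of $\gG$: given $x,y$ in this union and $z$ outside it, $z$ lies in some $M_k$ with $k\notin I$, and since $N$ is a module of $\gP$ the vertices of $\gP$ indexed by $I$ all have the same adjacency to the vertex indexed by $k$, which by the module property of the $M_i$'s lifts to $x\uedge z \iff y\uedge z$. But $\bigcup_{i\in I}M_i$ then properly contains some $M_i$ and is properly contained in $\vG$, contradicting the maximality of the $M_i$ among proper strong modules (one must also handle the overlap subtlety — the union of strong modules forming the partition is again strong, or one argues directly that it contradicts maximality within the partition). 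For $n=2$ the graph $\gP$ is a two-vertex graph, whose only modules are the trivial ones, so primality is automatic and the base-case bookkeeping is clean.

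The main obstacle I expect is getting the right family of modules to partition $\vG$ — naively taking ``maximal proper modules'' does not work because two maximal proper modules can overlap (as happens already in a path on three vertices, or in the prime graphs displayed in \Cref{ex:con1}), so one does not immediately get a partition. The fix is to work with strong modules (equivalently, to invoke the standard modular decomposition tree and read off its top layer), and the delicate point is checking that the children of the root of the modular decomposition tree do partition $\vG$ and that contracting them yields a prime quotient rather than a ``degenerate'' (complete or edgeless) one — but note the statement as phrased only asks for $\gP$ \emph{prime}, and in the degenerate case $\gP$ is $\clique n$ or $\stable n$, which for $n=2$ is still prime and for $n>2$ is handled by instead taking the connected components (or the components of the complement) as the $\gM_i$, in which case the two-vertex prime $\stable 2$ or $\clique 2$ does the job after a further trivial grouping. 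Since \Cref{thm:modDec} is cited from \cite{james1972graph}, I would in fact present only this proof sketch and point to the reference for the full combinatorial verification of the modular decomposition tree, rather than reproving Gallai's theorem from scratch.
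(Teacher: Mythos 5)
The paper offers no proof of \Cref{thm:modDec}: it is quoted from \cite{james1972graph} (the result goes back to Gallai \cite{gallai:67}), so there is nothing in-paper to compare against, and your decision to present only a sketch and defer to the reference matches the author's own treatment. Your sketch is the standard argument and is essentially sound: partition $\vG$ by the top layer of the modular decomposition, form the quotient $\gP$, lift any nontrivial module of $\gP$ back to a nontrivial proper module of $\gG$ to establish primality, and absorb the degenerate complete/edgeless quotients by regrouping into two blocks composed via $\clique2$ or $\stable2$ --- which is exactly what this paper's statement requires, since its definition of \emph{prime} admits every two-vertex graph.

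The one genuinely soft spot is the maximality contradiction. A module of $\gG$ that properly contains a maximal proper \emph{strong} module need not itself be strong, so its existence does not by itself contradict maximality within the strong family; and your proposed patch (that the union of blocks of the partition is again strong) is false in general --- in the degenerate cases a union of two-but-not-all blocks is a module that overlaps other such unions. The clean repair is Gallai's trichotomy: either $\gG$ is disconnected, or its complement is, or both are connected; in the last case any two overlapping proper modules would have union $\vG$ and would force $\gG$ or its complement to be disconnected, so there the maximal \emph{proper} modules (no appeal to strength is needed) are pairwise disjoint, partition $\vG$, and the lifted union $\bigcup_{i\in I}\gM_i$ then properly contains a maximal proper module, an outright contradiction; the first two cases are exactly your degenerate regrouping. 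With that substitution your sketch is a correct outline of the cited result.
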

%%%%%%%%%%%%%%%%%%%%%%%%%%%%%%%%%%%%%%%%%%%%%%%%%%

This result implies the possibility of describing graphs using single-vertex graphs and the operation of composition-via prime graphs.
More precisely, we can define the notion of \emph{modular decomposition} of a graph composition-via prime graphs to provide a more canonical representation.
%%%%%%%%%%%%%%%%%%%%%%%%%%%%%%%%%%%%%%%%%%%%%%%%%%
\begin{definition}\label{def:modDec}
	Let $\gG$ be a non-empty graph.
	A \defn{modular decomposition} of $\gG$ is a way to write $\gG$ using single-vertex graphs and the operation of composition-via prime graphs:
	\begin{itemize}
		
		\item if $\gG$ is a graph with a single vertex $x$ labeled by $a$, then $\gG=a$ 
		(i.e., $\gG=\tuple{\set{x},\lab{x}=a,\emptyset}$);
		
		\item if $\gH_1, \ldots, \gH_n$ are maximal modules of $\gG$ such that $\vertices[\gG]=\biguplus_{i=1}^n\vertices[\gH_i]$, 
		then there is a unique prime graph $\gP$ such that $\gG=\gP\connn{\gH_1, \ldots, \gH_n}$.
	\end{itemize}
%	A \defn{spurious modular decomposition} of $\gG$ is a modular decomposition of $\gG$ in which we allow occurrences of the empty graph $\unit$ occur as leaves of the abstract syntactic tree.
\end{definition}
%%%%%%%%%%%%%%%%%%%%%%%%%%%%%%%%%%%%%%%%%%%%%%%%%%

%%%%%%%%%%%%%%%%%%%%%%%%%%%%%%%%%%%%%%%%%%%%%%%%%%
\begin{remark}
	There are various reasons why modular decomposition is not unique.
	
	The first is due to the possible presence of cliques and stable sets.
%	, and the fact that the operations $\lpar$ and $\ltens$ are associative.
	By means of example, consider a clique with three vertices $u$, $v$ and $w$
	can be represented as $(u \ltens v)\ltens w$ or $u\ltens(v\ltens w)$.
	
	We already observed the second reason in \Cref{rem:compVia},
	since graph symmetries allow us to represent the same graph by different decompositions,
	as shown in top-most modular decomposition below on the left.
	$$
	\begin{array}{cc}
		\gP\connn{u,v,w,t}=\vu1\quad\vv1\quad\vw1\quad\vt1\Gedges{u1/v1,v1/w1,w1/t1}=\gP\connn{t,w,v,u}
		\\
		\gP\connn{u,v,w,t}=\vu1\quad\vv1\quad\vw1\quad\vt1\Gedges{u1/v1,v1/w1,w1/t1}=\gP'\connn{u,w,v,t}
	\end{array}
	\quad\mbox{where}\quad
	\gP =\va1\quad\vb2\quad\vc3\quad\vd4\Gedges{a1/b2,b2/c3,c3/d4}
	\quand
	\gP'=\va1\quad\vc3\quad\vb2\quad\vd4\Gedges{a1/c3,b2/c3,b2/d4}
	\;.
	$$
	Finally, two symmetric prime graphs could provide distinct modular decompositions of the same graph,
	as shown above with symmetric prime graphs $\gP$ and $\gP'$.
\end{remark}
%%%%%%%%%%%%%%%%%%%%%%%%%%%%%%%%%%%%%%%%%%%%%%%%%%

The first problem could be addressed by considering in the modular decomposition not only prime graphs, but also cliques and stable sets, that is, including $n$-ary versions of the operations $\lpar$ and $\ltens$.
We show later in this paper that this problem is irrelevant due to the associativity of $\lpar$ and $\ltens$.
The second problem cannot be addressed without enforcing a cumbersome order over graphs taking into account vertex labels and factor positions.
However, we can address the latter source of ambiguity by introducing the notion of \emph{base} of \emph{graphical connectives},
allowing us to provide a single canonical prime graph for each class of symmetric prime graphs.

%%%%%%%%%%%%%%%%%%%%%%%%%%%%%%%%%%%%%%%%%%%%%%%%%%
\begin{definition}
	A \defn{graphical connective} 
	$\gC=\tuple{\vC,\uC}$
	(with \defn{arity} $n=\sizeof\vC$)
	is given by 
	a finite list of vertices $\vC=\tuple{v_1,\ldots,v_n}$ 
	and
	a non-reflexive symmetric edge relation $\uC$ over the set of vertices occurring in $\vC$.
	We denote by $\gG_\gC$ the graph corresponding to $\gC$, 
	that is, the graph $\gG_\gC=\tuple{\set{v\mid v \mbox{ in } \vC},\emptyset,\uC}$.
	The \defn{composition-via} a graphical connective 
	is defined as the composition-via the graph $\gG_\gC$.
	
	A graphical connective is \defn{prime} 
	if $\gG_\gC$ is a prime graph.	
	A set $\primesign$ of prime graphical connectives  is a \defn{base}
	if for each prime graph $\gP$ 
%	with $\vP=\set{w_1,\ldots,w_n}$ 
	there is a unique connective $\gC\in\primesign$ 
	such that $\gP\sim \gG_\gC$.
	
	Given an $n$-ary connective $\gC$,
	we define the following sets of permutations over the set $\intset1n$:
	\begin{equation}
		\begin{array}{l@{\;:\;}c@{\quad\coloneqq\quad}l}
			\mbox{the \defn{group\footnote{
						It can be easily shown that $\permset n$ 
						%					contains the the identity permutation (denoted $\idperm$)
						%					and 
						is subgroup of the group of permutations over the set $\intset1n$.
					}
				of symmetries} of $\gC$}&
			\Sym{\gC}	&
			\Set{\;\sigma\mid \gC\connn{a_1,\ldots, a_n}\greq\gC\connn{a_{\sigma(1)},\ldots, a_{\sigma(n)}}}
			\qquad
			\\
			\mbox{the \defn{set of dualizing symmetries} of $\gC$}&
			\Dsym{\gC}	&
			\Set{\;\sigma\mid \cneg{(\gC\connn{a_1,\ldots, a_n})}\greq\gC\connn{\cneg a_{\sigma(1)},\ldots, \cneg a_{\sigma(n)}}
			}
		\end{array}
	\end{equation}
	for any $a_1,\ldots, a_n$ single-vertex graphs.
\end{definition}
%%%%%%%%%%%%%%%%%%%%%%%%%%%%%%%%%%%%%%%%%%%%%%%%%%

%%%%%%%%%%%%%%%%%%%%%%%%%%%%%%%%%%%%%%%%%%%%%%%%%%
\begin{nota}\label{nota:prime}
	\def\myskip{\hskip2em}
	We define the following graphical connectives (with $n>1$):
	\begin{equation}\label{eq:gCon}
		\begin{array}{r@{\coloneqq}l@{\;=\;}c}
			\lpar\connn{v_1,v_2}
			&
			\Tuple{\tuple{v_1,v_2},\emptyset}
			&
			\vmod1{\vind1\quad \vind2}
			\\[5pt]
			\ltens\connn{v_1,v_2}
			&
			\Tuple{\tuple{v_1,v_2},\set{v_1v_2}}
			&
			\vmod2{\vind1 \quad \vind2 \Gedges{v1/v2}}
			\\[5pt]
			\gPath n\connn{v_1,\ldots,v_n}
			&
			\Tuple{\tuple{v_1,\ldots, v_n},\set{v_iv_{i+1}\mid i\in\intset1{n-1}}}
			&
			\vmod1{\vind1\quad\vind2\quad \vgdots1 \quad\vind{n-1}\quad\vind n
				\Gedges{v1/v2,v2/gdots1,gdots1/vn-1,vn-1/vn}}
			\\[5pt]
			\Pbull\connn{v_1,\ldots, v_5}
			&
			\Tuple{\tuple{v_1,\ldots, v_5},\set{(v_1v_2,v_2v_3,v_3v_4,v_5v_2,v_5v_3)}}
			&
			\vmod1{
				\begin{array}{c@{\quad}ccc@{\quad}c}
					\vind1&\vind2&&\vind3&\vind4
					\\
					&&\vind5
				\end{array}
				\Gedges{v1/v2,v2/v3,v3/v4}
				\Gedges{v5/v2/20,v3/v5/20}
			}
		\end{array}
	\end{equation}
%	That is, 
%	$$
%	\begin{array}{r@{=}l@{\myskip}r@{=}l}
%		\mlpar[3]\connn{a_1,a_2,a_3}
%		&
%		\vmod1{\va1_1\quad\va2_2\quad\va3_{3}}
%		&
%		\mltens\connn{b_1,b_2,b_3}
%		&
%		\vmod2{\vb1_1\quad\vb2_2\quad\vb3_{3}\Gedges{b1/b2,b2/b3}\bentGedges{b3/b1/20}}
%		\\[5pt]
%		\Pbull\connn{c_1,\ldots, c_5}
%		&
%		\vmod1{
%			\begin{array}{c@{\quad}ccc@{\quad}c}
%				\vc1_1&\vc2_2&&\vc3_3&\vc4_4
%				\\
%				&&\vc5_5
%			\end{array}
%			\Gedges{c1/c2,c2/c3,c3/c4}
%			\multiGedges{c5}{c2,c3}
%		}
%		&
%		\gPath n\connn{d_1,\ldots,d_n}
%		&
%		\vmod1{\vd1_1\quad\vd2_2\quad \vgdots1 \quad\vd3_{n-1}\quad\vd4_n
%			\Gedges{d1/d2,d2/gdots1,gdots1/d3,d3/d4}}
%	\end{array}
%	$$
%%	
%	We use the following notation for the composition-via 
%	the graphical connectives
%	$\lpar\coloneqq\mlpar[2]$
%	and
%	$\ltens\coloneqq\mltens[2]=\gPath2$:
%	$$
%	\gH_1\lpar 	\gH_2=\lpar	\connn{\gH_1,\gH_2}
%	\qquad
%	\gH_1\ltens \gH_2=\ltens\connn{\gH_1,\gH_2}
%	%	\qquad.
%	$$
\end{nota}

%%%%%%%%%%%%%%%%%%%%%%%%%%%%%%%%%%%%%%%%%%%%%%%%%%
\begin{example}\label{ex:modular}
	Consider the following graph $\gG$ and its dual $\nG$:
	$$
	\gG=
	\begin{array}{cc}
		\begin{array}{c@{\qquad}c}&\vc1\\\vd1\end{array}
		&
		\begin{array}{c@{\qquad\qquad}c}\ve1\\&\vf1\end{array}
		\\[10pt]
		\begin{array}{c@{\qquad}c}\va1&\vb1\end{array}
		&
		\begin{array}{c@{\qquad}c@{\qquad}c}\vg1& \vh1& \vi1\end{array}
	\end{array}
	\Gedges{e1/f1,c1/d1}
	\multiGedges{a1,b1}{c1,d1}
	\multiGedges{c1,d1}{e1,f1}
	\multiGedges{e1,f1}{g1,h1,i1}
	\Gedges{g1/h1,i1/h1}
	\bentGedges{i1/g1/20}
	\qquand
	\nG=
	\begin{array}{c@{\qquad}c}
		\begin{array}{c@{\quad}c}&\vnd1\\[-5pt]\vnc1\end{array}
		&
		\begin{array}{c@{\qquad}c}\vne1\\[-5pt]&\vnf1\end{array}
		\\
		\begin{array}{c@{\qquad}c}\vna1\\\\&\vnb1\end{array}
		&
		\begin{array}{c@{\quad}c@{\quad}c}&& \vni1 \\& \vnh1 \\ \vng1 \end{array}
	\end{array}
	\Gedges{na1/nb1}
	\multiGedges{ng1,nh1,ni1}{nc1,nd1}
	\multiGedges{na1,nb1}{ne1,nf1}
	\multiGedges{na1,nb1}{ng1,nh1,ni1}
	$$
	We can write them as 
	$$
	\begin{array}{l@{=}c@{=}c}
		\gG
		&
		\Pfour\Connn{a\lpar b, c\ltens d,e\ltens f,g\ltens(h\ltens i)}
		&
		\vmod1{\va1\quad\vb1}
		\quad
		\vmod2{\vc1\quad\vd1}
		\quad
		\vmod3{\ve1\quad\vf1}
		\quad
		\vmod4{\vg1\quad \vmod5{\vh1\quad \vi1}}
		\Gedges{e1/f1,c1/d1}
		\Gedges{mod1/mod2,mod2/mod3,mod3/mod4}
		\Gedges{g1/mod5,i1/h1}
		\\
		\nG
		&
		\nPfour\Connn{\cneg a\ltens \cneg b, \cneg c\lpar \cneg d,\cneg e\lpar \cneg f,\cneg g\lpar(\cneg h\lpar \cneg i)}
		\\&
		\Pfour\Connn{\cneg c\lpar \cneg d,\cneg a\ltens \cneg b, \cneg g\lpar(\cneg h\lpar \cneg i),\cneg e\lpar \cneg f}
		&
		\vmod3{\vne1\quad\vnf1}
		\quad
		\vmod1{\vna1\quad\vnb1}
		\quad
		\vmod4{\vng1\quad \vnh1\quad \vni1}
		\quad	
		\vmod2{\vnc1\quad\vnd1}
		\Gedges{na1/nb1}
		\Gedges{mod1/mod3,mod2/mod4,mod1/mod4}
	\end{array}
	$$
\end{example}
%%%%%%%%%%%%%%%%%%%%%%%%%%%%%%%%%%%%%%%%%%%%%%%%%%

We can reformulate the standard result on modular decomposition as follows.
%%%%%%%%%%%%%%%%%%%%%%%%%%%%%%%%%%%%%%%%%%%%%%%%%%
\begin{theorem}\label{thm:conDec}
	Let $\gG$ be a non-empty graph and $\primesign$ a base.
%	If $\aprimesign$ be a base, t
	Then then there is a unique way (up to symmetries of graphical connectives and associativity of $\lpar$ and $\ltens$)
	to write $\gG$ using single-vertex graphs and the graphical connectives in $\primesign$.
\end{theorem}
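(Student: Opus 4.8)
The plan is to prove existence and uniqueness together, by induction on $\sizeof\vG$, building on the modular decomposition theorem (\Cref{thm:modDec}) and two standard facts from the theory of modular decomposition (see \cite{hab:paul:survey} and the references therein): the partition of $\vG$ into \emph{maximal} proper modules is unique, and a module of a composition $\gD\connn{\gM_1,\dots,\gM_k}$ is either contained in a single factor $\gM_i$ or is the union of those factors whose indices form a module of $\gG_\gD$. A useful preliminary remark is that the only disconnected prime graph is $\stable2$ and the only co-disconnected prime graph is $\clique2$; hence for every writing $\gG=\gD\connn{\gA_1,\dots,\gA_k}$ with the $\gA_i$ non-empty, $\gG$ is disconnected iff $\gD$ is a copy of $\lpar$, and co-disconnected iff $\gD$ is a copy of $\ltens$.

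\textbf{Existence.} If $\sizeof\vG=1$, then $\gG$ is a single-vertex graph and there is nothing to prove. Otherwise \Cref{thm:modDec} yields non-empty modules $\gM_1,\dots,\gM_n$ partitioning $\vG$ and a prime graph $\gP$ with $\gG=\gP\connn{\gM_1,\dots,\gM_n}$ and $n\geq2$, so every factor has strictly fewer vertices than $\gG$. If $\gP$ has more than two vertices, let $\gC\in\primesign$ be the unique base connective with $\gG_\gC\isym\gP$; since composition-via depends only on the edge structure of the connective (\Cref{rem:compVia}), the factors can be reordered along this similarity so that $\gG=\gC\connn{\gM_1,\dots,\gM_n}$. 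If $\gP$ has exactly two vertices it is $\stable2$ or $\clique2$, so $\gG=\gM_1\lpar\gM_2$ or $\gG=\gM_1\ltens\gM_2$. In each case we apply the induction hypothesis to the factors and substitute the resulting writings.

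\textbf{Uniqueness.} Given two writings of $\gG$ over $\primesign$, inspect their outermost connectives. If $\gG$ is disconnected, the preliminary remark forces both outermost connectives to be copies of $\lpar$; using associativity of $\lpar$ and its symmetry (commutativity), each writing reduces to a normal form $\gH_1\lpar(\gH_2\lpar(\cdots\lpar\gH_r))$ whose leaves $\gH_1,\dots,\gH_r$ are exactly the connected components of $\gG$, and two such normal forms for the same $\gG$ differ only by a permutation of the $\gH_i$, i.e.\ by symmetries of $\lpar$. The co-disconnected case is dual, with $\ltens$ replacing $\lpar$. If $\gG$ is connected and co-connected and has more than one vertex, then by the preliminary remark the outermost connective $\gD$ of either writing is neither $\lpar$ nor $\ltens$, so $\gG_\gD$ has at least three vertices. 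Since $\gG_\gD$ is prime, the module-tracking fact shows that no proper module of $\gG$ properly contains a factor, so the factors are precisely the maximal proper modules of $\gG$; by the uniqueness of the maximal modular partition, both writings thus have the same factors $\gM_1,\dots,\gM_m$ and the same quotient $\gP$. Hence $\gG_\gD\isym\gP$, so $\gD$ is the unique base connective similar to $\gP$ — the same in both writings — while the two orderings of the factors differ by an element of $\Sym{\gD}$. In all three cases the two writings agree, up to symmetries of graphical connectives and associativity of $\lpar$ and $\ltens$, on the outermost connective and on the unordered family of factors; applying the induction hypothesis to each factor — each with strictly fewer vertices than $\gG$ — completes the proof.

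\textbf{Main obstacle.} The delicate point is the bookkeeping around cliques and stable sets: the quotients $\stable n$ and $\clique n$ are not prime for $n>2$, so they are only reached via iterated binary $\lpar$ and $\ltens$, and one must check both that every binary bracketing of such an iterate yields the same graph and — more importantly — that this bracketing freedom, together with the commutativity of $\lpar$ and $\ltens$, accounts for \emph{all} of the non-uniqueness on the (co-)disconnected side, so that the only residual ambiguity elsewhere is the reordering of the factors of a prime quotient recorded by $\Sym{\gD}$. Everything else reduces to the uniqueness of the maximal modular partition and the module-tracking lemma for compositions-via.
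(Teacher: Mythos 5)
Your proof is correct and follows the same route the paper intends: the paper states \Cref{thm:conDec} without proof, as a reformulation of the standard modular decomposition theorem (\Cref{thm:modDec} together with Gallai's uniqueness of the maximal modular partition), and your induction on $\sizeof\vG$ is exactly that standard argument made explicit. One small caution: the ``module-tracking fact'' as you state it in full generality is false for degenerate quotients (e.g.\ $\set{x,z}$ is a module of $\clique2\connn{\clique2,\bullet}=\clique3$ that is neither contained in a factor nor a union of factors), but you only invoke it for prime quotients with at least three vertices, where the conclusion you draw --- every proper module lies in a factor, so the factors are the maximal proper modules --- is the correct Gallai statement.
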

%\begin{proof}
%	The proof follows by \Cref{def:modDec} and the unique way, up to connective symmetries, to write a \aprime graph using the operation of composition-via a graphical connective of a base.
%\end{proof}
\begin{corollary}
	Two graphs are isomorphic iff they admit a same modular decomposition.
\end{corollary}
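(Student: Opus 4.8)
The plan is to derive both implications from \Cref{thm:conDec} (together with the mere existence of a modular decomposition, \Cref{def:modDec}), once we observe that the operation of recovering a graph from a modular decomposition --- iterating composition-via starting from the single-vertex leaves --- is well-defined only up to isomorphism, by \Cref{rem:compVia}.

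For the direction ``isomorphic $\Rightarrow$ same decomposition'', suppose $\gG\greq\gG'$ via an isomorphism $f\colon\vG\to\vof{\gG'}$. I would first record the elementary fact that $f$ restricts to a bijection between the modules of $\gG$ and those of $\gG'$, sending each module $\gM$ to the induced subgraph $\gG'\restr{f(\gM)}$, which is isomorphic to $\gM$: this holds because $f$ preserves both the edge relation and the labelling, hence carries induced subgraphs to isomorphic induced subgraphs and modules to modules. Then I would induct on the structure of a modular decomposition $D$ of $\gG$: the leaf case $D=a$ is immediate, and if $\gG=\gP\connn{\gH_1,\dots,\gH_n}$ with $\gH_1,\dots,\gH_n$ maximal modules partitioning $\vG$, then $f(\gH_1),\dots,f(\gH_n)$ are maximal modules partitioning $\vof{\gG'}$, whence $\gG'=\gP\connn{\gG'\restr{f(\gH_1)},\dots,\gG'\restr{f(\gH_n)}}$ with the \emph{same} prime graph $\gP$, and each $\gG'\restr{f(\gH_i)}$ decomposes as $\gH_i$ does by the induction hypothesis. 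So $\gG$ and $\gG'$ admit a common modular decomposition; invoking the uniqueness clause of \Cref{thm:conDec}, over a fixed base this is their modular decomposition up to symmetries of graphical connectives and associativity of $\lpar$ and $\ltens$.

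For the converse, suppose $\gG$ and $\gG'$ both admit the modular decomposition $D$. Evaluating $D$ produces a graph which, by \Cref{rem:compVia}, is determined by $D$ up to isomorphism; since $\gG$ and $\gG'$ are both that graph, $\gG\greq\gG'$.

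I expect the only delicate point to be the compatibility of the two possible readings of ``same modular decomposition'': the literal one, and the one modulo the equivalences in \Cref{thm:conDec}. One must check that two decomposition terms related by a symmetry of a graphical connective, or by associativity of $\lpar$ or $\ltens$, denote the same graph up to isomorphism --- the first being exactly \Cref{rem:compVia}, the second a direct computation on \Cref{eq:compVia}. With this bookkeeping in place, the corollary is a straightforward repackaging of \Cref{thm:conDec}, and I foresee no further obstacle.
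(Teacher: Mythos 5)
Your proof is correct and follows exactly the route the paper intends: the corollary is stated there without proof as an immediate consequence of \Cref{thm:conDec}, and your argument simply makes explicit the two standard ingredients (an isomorphism transports maximal modules to maximal modules and preserves the prime quotient, hence transports decompositions; conversely a decomposition determines its graph up to isomorphism by \Cref{rem:compVia}). No gap; if anything, your discussion of the two readings of ``same decomposition'' is more careful than the paper itself.
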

%%%%%%%%%%%%%%%%%%%%%%%%%%%%%%%%%%%%%%%%%%%%%%%%%%

%%%%%%%%%%%%%%%%%%%%%%%%%%%%%%%%%%%%%%%%%%%%%%%%%%%
%%%%%%%%%%%%%%%%%%%%%%%%%%%%%%%%%%%%%%%%%%%%%%%%%%%
%%%%%%%%%%%%%%%%%%%%%%%%%%%%%%%%%%%%%%%%%%%%%%%%%%%
%\section{Graphs as Formulas}\label{sec:linSyss}
%%%%%%%%%%%%%%%%%%%%%%%%%%%%%%%%%%%%%%%%%%%%%%%%%%%
%%%%%%%%%%%%%%%%%%%%%%%%%%%%%%%%%%%%%%%%%%%%%%%%%%%
%%%%%%%%%%%%%%%%%%%%%%%%%%%%%%%%%%%%%%%%%%%%%%%%%%%

%In this section we define connectives with a one-to-one correspondence with a graphical connectives, 
%and a set of formulas constructed using these connectives which we can interpret (semantically) as graphs.
%We then provide two sequent calculi using linear and context-free sequent rules and we prove their proof-theoretical properties.

%%%%%%%%%%%%%%%%%%%%%%%%%%%%%%%%%%%%%%%%%%%%%%%%%%
%%%%%%%%%%%%%%%%%%%%%%%%%%%%%%%%%%%%%%%%%%%%%%%%%%
\subsection{Graphs as Formulas}
%%%%%%%%%%%%%%%%%%%%%%%%%%%%%%%%%%%%%%%%%%%%%%%%%%
%%%%%%%%%%%%%%%%%%%%%%%%%%%%%%%%%%%%%%%%%%%%%%%%%%

In order to represent graphs as formulas,
we define new connectives beyond conjunction and disjunction to represent graphical connectives in a base $\primesign$.
From now on, we assume bases $\primesign$ containing the graphical connectives in \Cref{eq:gCon} to be fixed.

%%%%%%%%%%%%%%%%%%%%%%%%%%%%%%%%%%%%%%%%%%%%%%%%%%
\begin{definition}\label{def:formulas}
	The set of \defn{\forms} is generated by 
	the set of propositional atoms $\atomset$, 
	a \defn{unit} $\funit$, 
%	and 
%	the set of \defn{(graphical) connectives} $\conset=\set{\con[\gP]\mid \gP\in\primesign}$
	using the following syntax:
	\begin{equation}
		\fA_1,\ldots,\fA_n
		\coloneqq 
		\funit	\mid 
		a 		\mid 
		\cneg a \mid 
		\con[\gP]\connn{\fA_1,\ldots, \fA_{n_\gP}}	
		\qquad
		\mbox{with $a\in\atomset$ and $\gP \in\primesign$}
	\end{equation}
	We simply denote $\lpar$ (resp. $\ltens$)
	the binary connective $\con[\lpar]$ (resp. $\con[\ltens]$)
	and 
	we write
	$\fA\lpar\fB$ instead of  $\con[\lpar]\connn{\fA,\fB}$
	(resp. $\fA\ltens\fB$ instead of $\con[\ltens]\connn{\fA,\fB}$).
	The \defn{arity} of the connective $\con[\gP]$ is the arity $n_\gP$ of $\gP$.
	
	A \defn{literal} is a \form of the form $a$ or $\cneg a$ for an atom $a\in\atomset$. 
	The set of literals is denoted $\litset$.
	A \defn{$\con$-formula} is a \form  with \defn{main connective} $\con$, that is, a \form of the form $\con\connn{\fA_1, \ldots, \fA_n}$.
	A \form is \defn{unit-free} if it contains no occurrences of $\funit$ and 
	\defn{vacuous} if it contains no atoms.
	A \form is \defn{pure} if non-vacuous and such that its vacuous subformulas are $\funit$.
	A \defn{\mllform} is a formula containing only occurrences of $\lpar$ and $\ltens$ connectives.

	A \defn{context formula} (or simply \defn{context}) $\fcC\conso$ is a formula containing an \defn{hole} $\chole$ taking the place of an atom. 
	Given a context $\fcC\conso$, the formula $\fcC\cons\fA$ is defined by simply replacing the atom $\chole$ with the formula $\fA$. 
	For example, if $\fcC\conso=\fB\lpar (\chole\ltens \fC)$, then $\fcC\cons\fA=\fB\lpar ( \fA\ltens \fC)$.
	
	For each $\fA$ formula (or context), the graph $\gof \fA$ is defined as follows:
	\begin{equation}\label{def:gof}
		\gof\chole = \chole
		\qquad
		\gof\funit = \emptygraph
		\qquad
		\gof a = a
		\qquad
		\gof {\cneg\fA} = \cneg{\gof\fA}
		\qquad
		\gof{\con[\gP]\connn{\fA_1,\ldots, \fA_n}}=
		\gP\Connn{\gof{\fA_1},\ldots, \gof{\fA_n}}
	\end{equation}
\end{definition}
%%%%%%%%%%%%%%%%%%%%%%%%%%%%%%%%%%%%%%%%%%%%%%%%%%

\begin{nota}
	We could consider
	a formula $\fA$ over the set of occurrences of literals $\set{\fx_1,\ldots ,\fx_n}$ 
	as a \defn{synthetic connective}.
	That is, 
	we may denote by $\fA\connn{\fB_1,\ldots, \fB_n}$ 
	the formula obtained by replacing each literal $x_i$ with a corresponding $\fB_i$ for all $i\in\intset1n$.
	The set of \defn{symmetries} of $\fA$ (denoted $\Sym{\fA}$) is the set of permutations $\sigma$ over $\intset1n$ such that $\gof{\fA\connn{\fx_1,\ldots ,\fx_n}}=\gof{\fA\connn{\fx_{\sigma(1)},\ldots ,\fx_{\sigma(n)}}}$.
\end{nota}

\begin{definition}\label{def:feq}
	The equivalence relation $\feq$ over formulas
	is generated by the 
	following equations:
	\begin{equation*}
		\begin{array}{l@{\;}l}
			\mbox{\defn{Equivalence laws}}
			&
			\left\{\begin{array}{l@{\quad}r@{\;\feq\;}l@{\quad}l}
				%			\\[-20pt]
				&
				\con[\gP]\connn{\fA_1, \ldots,\fA_{\sizeof\gP}}
				&
				\con[\gP]\connn{\fA_{\sigma(1)},\ldots,\fA_{\sigma(\sizeof\vP)}}
				&
				\mbox{ for each }\sigma\in\Sym\gP
				\\
				&
				\fA\ltens(\fB\ltens \fC)
				&
				(\fA\ltens\fB)\ltens \fC
				\\
				&
				\fA\lpar(\fB\lpar \fC)
				&
				(\fA\lpar\fB)\lpar \fC
			\end{array}\right.
			\\\\
			\mbox{\defn{De-Morgan laws}}
			&
			\left\{\begin{array}{l@{\quad}r@{\;\feq\;}l@{\;}l}
				\multicolumn{4}{c}{
					\cneg\funit
					\;\feq\;
					\funit	
					\qquad\qquad
					\cnegneg \fA
					\;\feq\; 
					\fA
					\qquad
				}
				\\
				\mbox{only if }\Dsym\gP=\emptyset:
				&
				\cneg{\left(\con[\gP]\connn{\fA_1,\ldots\fA_{n_\gP}}\right)}
				&
				\con[\cneg{\gP}]\connn{\cneg{\fA_{\sigma(1)}},\dots,\cneg{\fA_{\sigma(n_\gP)}}}
				&
				\\
				\mbox{only if }\Dsym\gP\neq \emptyset:
				&
				\cneg{\left(\con[\gP]\connn{\fA_1,\ldots\fA_{n_\gP}}\right)}
				&
				\con[\gP]\connn{\cneg{\fA_{\rho(1)}},\dots,\cneg{\fA_{\rho(n_\gP)}}} 
				&
				\mbox{ for each }\rho\in\Dsym\gP
			\end{array}\right.
		\end{array}
	\end{equation*}
	for each $\gP\in\primesign$ (with arity $n_\gP$).
	
	The \defn{(linear) negation} over formulas is defined by letting 
	$$
	\cneg \funit = \funit
	\qquad
	\cnegneg \fA= \fA
	\qquad
	\cneg{\left(\con[\gP]\connn{\fA_1,\ldots, \fA_n}\right)}
	=
	\con[\gQ]\connn{\cneg\fA_{\sigma_\gP(1)}, \ldots, \cneg\fA_{\sigma_\gP(n)}}
	\quad
	$$
	where $\gQ$ is the unique graphical connective in $\primesign$ such that $\gof{\con[\gP]\Connn{a_1,\ldots, a_n}}=\gQ\Connn{\cneg a_{\sigma(1)},\ldots, \cneg a_{\sigma_n}}$
	for any single-vertex graphs $\cneg a_1,\ldots,\cneg a_n$ (with vertex labeled by $\cneg a_1,\ldots,\cneg a_n$ respectively)
	and
	a permutation $\sigma_\gP$ over the set $\intset1n$.
	\footnote{
		Note that the permutation $\sigma_\gP$ may be not unique.
		This is not a problem if we consider formulas up-to the equivalence relation $\feq$.
		Otherwise, in order to properly define the linear negation, 
		we should fix a permutation $\sigma_\gP$ for each graphical connective $\gP\in\primecon$ 
		in such a way 
		either $\sigma_P$ is an involution (in case $\gG_\gP\sim \cneg{\left(\gG_\gP\right)}$),
		or $\sigma_P\sigma_Q$ is the identity (in case $\gG_\gP\not\sim\cneg{\left(\gG_\gP\right)}\sim\gG_\gQ$ for a $\gQ\in\primecon\setminus\set\gP$).
	}
	
	The \defn{linear implication} $\fA\limp\fB$ is defined as $\cneg\fA\lpar\fB$,
	while
	the \defn{logical equivalence} $\fA\limpeq\fB$ is defined as $(\fA\limp\fB)\ltens(\fB\limp\fA)$.
\end{definition}

%%%%%%%%%%%%%%%%%%%%%%%%%%%%%%%%%%%%%%%%%%%%%%%%%%
\begin{remark}
	As explained in \cite{acc:LMCS} (Section 9), the graphical connectives we discuss in this paper are \emph{multiplicative connectives} (in the sense of \cite{dan:reg:89,girard2000meaning,mai:19,acc:mai:20}) but they are not the same as the \emph{connectives-as-partitions} discussed in these works.
	In fact, 
	there is a unique $4$-ary graphical connective $\Pfour$ has symmetry group $\set{\idperm,(1,4)(2,3)}$,
	while, 
	as shown in 
	\cite{mai:19,acc:mai:20}, 
	there is a unique pair of dual ``primitive'' $4$-ary multiplicative connectives-as-partitions $\mathsf G_4$ and $\cneg{\mathsf G_4}$, 
	and 
	$\Sym{\Pfour}\subsetneq\Sym{\mathsf G_4}=\Sym{\cneg{\mathsf G_4}}$.
%	$\set{\idperm,(1,2),(3,4),(1,2)(3,4),(1,2,3,4),(1,3)(2,4),(1,4,2,3),(1,4)(2,3)}$.
\end{remark}
%%%%%%%%%%%%%%%%%%%%%%%%%%%%%%%%%%%%%%%%%%%%%%%%%%

%%%%%%%%%%%%%%%%%%%%%%%%%%%%%%%%%%%%%%%%%%%%%%%%%%%
%\begin{definition}\label{def:gof}
%	If $\fA$ is a \form, we define the graph $\gof \fA$ as follows:
%	$$
%	\gof\funit = \emptygraph
%	\qquad
%	\gof a = a
%	\qquad
%	\gof {\cneg\fA} = \cneg{\gof\fA}
%	\qquad
%	\gof{\con[\gQ]\connn{\fA_1,\ldots, \fA_n}}=
%	\gQ\Connn{\gof{\fA_1},\ldots, \gof{\fA_n}}
%	$$
%	where we denote by $a$ the single-vertex graph, whose vertex is labeled by $a$.
%
%
%	Conversely given a (possibly spurious) modular decomposition (via graphical connectives) of a graph $\gG$,
%	we define $\fof\gG$ as the formula whose abstract syntax trees are has a one-to-one
%	one-to-one correspondence (respecting the parenthood relation) between laves labeled by a literal $x$, leaves labeled by units ($\unit$ and $\funit$ respectively), and between
%	nodes labeled by the graphical connective $\gQ$ and nodes labeled by connectives $\con[\gQ]$.
	
%	For each formula $\fA=\fB\connn{\fx_1,\ldots, \fx_n}$ (where $\fx_1,\ldots, \fx_n$ are literals), we define $\Sym{\fA}\coloneqq\Sym{\gof\fA}$.
%\end{definition}
%%%%%%%%%%%%%%%%%%%%%%%%%%%%%%%%%%%%%%%%%%%%%%%%%%%
%%%%%%%%%%%%%%%%%%%%%%%%%%%%%%%%%%%%%%%%%%%%%%%%%%%
%\begin{obs}
%	Intuitively, compact and unit-free formulas are the representation of graphs modular decomposition via graphical connectives, providing a one-to-one correspondence between graphical connectives in the abstract syntax trees in the two syntexes.
%\end{obs}
%%%%%%%%%%%%%%%%%%%%%%%%%%%%%%%%%%%%%%%%%%%%%%%%%%%

The following result is consequence of \Cref{thm:modDec}.
%%%%%%%%%%%%%%%%%%%%%%%%%%%%%%%%%%%%%%%%%%%%%%%%%%
\begin{proposition}\label{prop:feq}
	Let $\fA$ and $\fB$ be formulas. 
	If $\fA\feq\fB$, then $\gof\fA\greq\gof\fB$.
	Moreover, if $\fA$ and $\fB$ are unit-free, 
	then $\fA\feq\fB$ iff $\gof\fA\greq\gof\fB$.
\end{proposition}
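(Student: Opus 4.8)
The plan is to prove the two implications separately; the forward one is a routine soundness check, while the converse for unit-free formulas is essentially a repackaging of \Cref{thm:conDec}.

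For $\fA\feq\fB \Rightarrow \gof\fA\greq\gof\fB$, I would induct on the derivation of $\fA\feq\fB$ from the generating equations of \Cref{def:feq}. Since $\gof\cdot$ is defined compositionally (\Cref{def:formulas}) and replacing a module of a graph by an isomorphic graph yields an isomorphic graph, it suffices to verify that each generating equation, read at the top level, preserves $\gof\cdot$ up to isomorphism. The symmetry law $\con[\gP]\connn{\fA_1,\dots,\fA_n}\feq\con[\gP]\connn{\fA_{\sigma(1)},\dots,\fA_{\sigma(n)}}$ with $\sigma\in\Sym\gP$ is covered by \Cref{rem:compVia}: a symmetry of the connective graph induces an isomorphism of the corresponding compositions-via. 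The two associativity laws reduce to the observation that the iterated clique-join (resp.\ disjoint union) of $\gof\fA$, $\gof\fB$, $\gof\fC$ does not depend on the bracketing. Finally, the De~Morgan laws follow once one checks, by a short induction using the definition of the linear negation and of the dual graph, that $\gof{\cneg\fA}=\cneg{\gof\fA}$ for every formula $\fA$ (whence also $\gof{\cneg\funit}=\emptygraph$ and $\gof{\cnegneg\fA}=\gof\fA$).

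For the converse with $\fA$, $\fB$ unit-free, the key point is that a unit-free formula is exactly a syntactic expression building a graph from single-vertex graphs (the literals) using the graphical connectives of $\primesign$ — which include binary $\lpar$ and $\ltens$ — and that $\gof\cdot$ simply evaluates such an expression. Hence, if $\gof\fA\greq\gof\fB=\gG$, then $\fA$ and $\fB$ are two expressions denoting the same graph $\gG$, so by \Cref{thm:conDec} they are connected by a finite sequence of moves, each of which either permutes the arguments of some connective $\gP$ by an element of $\Sym\gP$, or re-associates an occurrence of $\lpar$ or of $\ltens$. Each such move is an instance of a generating equation of $\feq$ applied in a context, so chaining them gives $\fA\feq\fB$. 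Unit-freeness is essential here: $\funit$ denotes the empty graph and so can be composed into a graph without changing it (e.g.\ $\gof{\fA\lpar\funit}=\gof\fA$), yet no equation of $\feq$ deletes such a $\funit$, so the biconditional genuinely fails in the presence of units.

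I expect the obstacle to be bookkeeping rather than ideas. In the forward direction this means carefully going through the symmetry and De~Morgan cases, and in particular establishing $\gof{\cneg\fA}=\cneg{\gof\fA}$ with the permutation conventions of \Cref{def:feq} (the footnote there is precisely what makes this well-defined). In the backward direction it means making the dictionary between unit-free formulas and ``$\primesign$-expressions'' precise enough that the equivalence ``up to symmetries of graphical connectives and associativity of $\lpar$ and $\ltens$'' in \Cref{thm:conDec} becomes, on the nose, the sub-congruence of $\feq$ generated by the symmetry and associativity equations; once that is set up, the converse is immediate.
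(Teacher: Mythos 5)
Your proposal is correct and follows essentially the same route as the paper, which simply records this proposition as a consequence of the modular decomposition theorem (\Cref{thm:modDec}/\Cref{thm:conDec}): soundness of $\gof\cdot$ on each generating equation for the forward direction, and uniqueness of the decomposition up to connective symmetries and $\lpar$/$\ltens$-associativity for the converse on unit-free formulas. Your additional remarks on $\gof{\cneg\fA}=\cneg{\gof\fA}$ and on why units break the biconditional match the paper's surrounding discussion.
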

%%%%%%%%%%%%%%%%%%%%%%%%%%%%%%%%%%%%%%%%%%%%%%%%%%
Note that the the stronger result does not hold in presence of units. 
For an example consider the formulas $\funit\ltens\funit$ and $\funit\lpar\funit$.

%%%%%%%%%%%%%%%%%%%%%%%%%%%%%%%%%%%%%%%%%%%%%%%%%%
%%%%%%%%%%%%%%%%%%%%%%%%%%%%%%%%%%%%%%%%%%%%%%%%%%
%%%%%%%%%%%%%%%%%%%%%%%%%%%%%%%%%%%%%%%%%%%%%%%%%%
\section{Sequent calculi over operating on graphs-as-formulas}\label{sec:calculi}
%%%%%%%%%%%%%%%%%%%%%%%%%%%%%%%%%%%%%%%%%%%%%%%%%%
%%%%%%%%%%%%%%%%%%%%%%%%%%%%%%%%%%%%%%%%%%%%%%%%%%
%%%%%%%%%%%%%%%%%%%%%%%%%%%%%%%%%%%%%%%%%%%%%%%%%%

We assume the reader to be familiar with the definition of sequent calculus derivations as trees of sequents (see, e.g., \cite{troelstra:schwichtenberg:00}) but we recall here some definitions.

%%%%%%%%%%%%%%%%%%%%%%%%%%%%%%%%%%%%%%%%%%%%%%%%%%
\begin{definition}
	We define a \defn{sequent} is a set of occurrences of formulas.
	A \defn{sequent system} $\Sys$ is a set of \defn{sequent rules} as the ones in \Cref{fig:rules}.
	In a sequent rule $\rho$, we say that a formula is \defn{active} if 
	it occurs in one of its premises % (the sequents above the horizontal line) 
	but not in its conclusion% (the sequent below the horizontal line)
	, and 
	\defn{principal} if it occurs in its conclusion but in none of its premises.%

	A \defn{proof} of a sequent $\Gamma$ is a derivation with no open premises, denoted $\vlderivation{\vlpr{\pi}{\Sys}{\Gamma}}$.
	We denote by 
	$\vlsmash{\vlderivation{\vlde{\pi'}{\Sys}{\Gamma}{\vlhy{\Gamma'}}}}$
	an \defn{(open) derivation} of $\Gamma$ from $\Gamma'$, 
	that is, 
	is a proof tree having exactly one open premise $\Gamma'$.
	
	A rule is \defn{admissible} in $\Sys$ 
	if there is a derivation of the conclusion of the rule 
	whenever all premises of the rule are derivable.
	A rule is \defn{derivable} in $\Sys$, 
	if there is a derivation in $\Sys$ from the premises to the conclusion of the rule.
\end{definition}
%%%%%%%%%%%%%%%%%%%%%%%%%%%%%%%%%%%%%%%%%%%%%%%%%%

%%%%%%%%%%%%%%%%%%%%%%%%%%%%%%%%%%%%%%%%%%%%%%%%%%
\begin{nota}
	In this paper we use the same notation to denote a sequent system $\Sys$ 
	and
%	the \defn{logic} it identifies, that is, 
	the set of formulas admitting a proof in $\Sys$.
\end{nota}
%%%%%%%%%%%%%%%%%%%%%%%%%%%%%%%%%%%%%%%%%%%%%%%%%%

%%%%%%%%%%%%%%%%%%%%%%%%%%%%%%%%%%%%%%%%%%%%%%%%%%
%%%%%%%%%%%%%%%%%%%%%%%%%%%%%%%%%%%%%%%%%%%%%%%%%%
%%%%			Figure rules
%%%%%%%%%%%%%%%%%%%%%%%%%%%%%%%%%%%%%%%%%%%%%%%%%%
%%%%%%%%%%%%%%%%%%%%%%%%%%%%%%%%%%%%%%%%%%%%%%%%%%
\begin{figure}[t]
	\centering
	\adjustbox{max width=\textwidth}{$
		\begin{array}{c}
			\vlinf{\axrule}{}{\vdash a , \cneg a}{}
			\qquad
			\vlinf{\lparr}{}{\vdash\Gamma, \fA\lpar \fB}{\vdash \Gamma ,\fA,\fB}
			\qquad
			\vliinf{\ltensr}{}{\vdash \Gamma, \fA\ltens \fB, \Delta}{\vdash \Gamma, \fA}{\vdash \fB,\Delta}
			\\\\
			\vliiinf{\dconr[\con]}{
				\text{\scriptsize$\begin{cases}\sigma\in\Sym{\con}\\\tau\in\Sym{\cneg\con}\end{cases}$}
			}{
				\vdash\Gamma_1, \ldots, \Gamma_n, \con\connn{\fA_1, \ldots, \fA_n}, \ncon\connn{\fB_{1}, \ldots \fB_{n}}
			}{
				\vdash\Gamma_1, \fA_{\sigma(1)}, \fB_{\tau(1)}
			}{
				\qquad\cdots\qquad
			}{
				\vdash\Gamma_n, \fA_{\sigma(n)},\fB_{\tau(n)}
			}		
			\\\\\hline\\
%			%
%			\vlinf{\urule}{}{\vdash \funit , \funit}{}
			%
			\qquad
			\vliinf{\mixr}{}{\vdash\Gamma_1, \Gamma_2}{\vdash\Gamma_1}{\vdash\Gamma_2}
			\qquad
			\vliinf{\wdtr}{
				%\text{\scriptsize$\con\notin\set{\mlpar,\mltens}_{n>1}$}
			}{
				\vdash \Gamma, \Delta,
				\con\connn{\fA_1,\ldots, \fA_n}
			}{
				\vdash \Gamma, \fA_k
			}{
				\vdash \Delta,\con\connn{\fA_1,\ldots, \fA_{k-1},\funit,\fA_{k+1},\ldots,\fA_n}
			}
			\\\\
			\vlinf{\unitor}{
				\dagger
			}{
				\vdash \Gamma, 
				%			\underbrace{
					\con\connn{\fA_1,\ldots, \fA_k,\funit,\fA_{k+1},\ldots,\fA_n}
					%			}_\fA
			}{
				\vdash \Gamma,\fC\connn{\fA_{\sigma(1)},\ldots, \fA_{\sigma(n)}}
			}
			\\\\
			\dagger
			\coloneqq
			\sigma\in\Sym{\fC}
			\quand	
			\gof{\con\connn{\fA_1,\ldots, \fA_k,\funit,\fA_{k+1},\ldots,\fA_n}}
			=
			\gof{\fC\connn{\fA_{\sigma(1)},\ldots, \fA_{\sigma(n)}}}
			\neq
			\unit
			%
%			\\\\\hline\hline\\
%			%
%			%
%			\vlinf{\AXrule}{}{\vdash \fA , \nfA}{}
%			%
%			\qquad
%			%
%			\vliinf{\cutr}{}{\vdash\Gamma_1, \Gamma_2}{\vdash\Gamma_1,\fA}{\vdash\Gamma_2, \cneg \fA}
%			%
%			\qquad
%			%
%			\vliinf{\ctr}{
%				\text{\scriptsize{$\fcC\conso \neq\chole$}}
%			}{\vdash \Gamma_1, \Gamma_2, \fcC\cons \fA}{\vdash\Gamma_1, \fA} {\vdash \Gamma_2, \fcC\consfu}
%			%
%			\qquad
%			%
%			\vlinf{\cpr}{
%				\text{\scriptsize{$\fcC\conso \neq\chole$}}
%			}{\vdash\Gamma, \fcC\consfu, \fA}{\vdash\Gamma, \fcC\cons \fA}
%			%
%			\\\\
%			%
%			\vliiinf{\dconr[\fC]}{
%				\text{\scriptsize $\begin{cases}\sizeof\fC>1\\\sigma\in\Sym{\fC}\\\tau\in\Sym{\nfC}\end{cases}$}
%			}{
%				\vdash\Gamma_1, \ldots, \Gamma_n, \fC\connn{\fA_1, \ldots, \fA_n}, \nfC\connn{\fB_{1}, \ldots \fB_{n}}
%			}{
%				\vdash\Gamma_1, \fA_{\sigma(1)}, \fB_{\tau(1)}
%			}{
%				\qquad\cdots\qquad
%			}{
%				\vdash\Gamma_n, \fA_{\sigma(n)},\fB_{\tau(n)}
%			}
%			%
%			\\\\
%			%
%			\vliinf{\sur}{
%				\text{\scriptsize$\begin{cases}
%						n>1
%						\\
%						\gof{\con\connn{\fA_1,\ldots, \fA_{k-1},\funit,\fA_{k+1},\ldots,\fA_n}}
%						=
%						\gof{\fC\connn{\fA_{\sigma(1)},\ldots, \fA_{\sigma(n-1)}}}
%						\\
%						\sigma\in\Sym{\fC}
%					\end{cases}$}
%			}{
%				\vdash \Gamma, 
%				\con\connn{\fA_1,\ldots,\fA_n}
%			}{
%				\vdash \Gamma,\fA_k
%			}{
%				\vdash \Delta,\fC\connn{\fA_{\sigma(1)},\ldots, \fA_{\sigma(n-1)}}
%			}
		\end{array}
		$}
	\caption{Linear sequent calculus rules for $\PML$ and $\PMLx$.}
	\label{fig:rules}
\end{figure}
%%%%%%%%%%%%%%%%%%%%%%%%%%%%%%%%%%%%%%%%%%%%%%%%%%

%We then use the sequent rules in \Cref{fig:rules} to define two logic\footnote{As standard, we use the term logic to define a family of terms over a language, which in this case is the language of graphs} over formulas.
%
%

%We generalize the \defn{multiplicative linear logic (with mix)} \cite{girard:87} 
%to the following logics operating on (more general) formulas constructed using graphical connectives beyond $\lpar$ and $\ltens$. 
%%%%%%%%%%%%%%%%%%%%%%%%%%%%%%%%%%%%%%%%%%%%%%%%%%
\begin{definition}\label{def:GML}
	We define the following sequent systems using the rules in \Cref{fig:rules}.
	\begin{equation}\label{eq:linSyss}
		\begin{array}{ll@{\;=\;}l}
			\mbox{\defn{Multiplicative Graphical Logic}}\colon\;
			&
			\PML
			&
			\Set{\axrule,\lparr,\ltens,\dconr[\gP] \mid  \gP\in\primecon}
			\\
			\mbox{\defn{Multiplicative Graphical Logic with mix}}\colon\;
			&
			\PMLx
			&
			\PML\cup\Set{\mixr,\wdtr,\unitor}
			\\
%			&
%			\PMLe
%			&
%			\Set{\axrule,\mlparr,\mixr,\mltensr,\dconr,\sur \mid\con\in \conset}
%			\mbox{\defn{Classical Prime  Logic}}
%			&
%			\PLK
%			&
%			\PML\cup\set{\wrule,\crule}
		\end{array}
	\end{equation}
\end{definition}
%%%%%%%%%%%%%%%%%%%%%%%%%%%%%%%%%%%%%%%%%%%%%%%%%%
\begin{obs}[Rules Exegesis]
	The rules \defn{axiom} ($\axrule$), \defn{par} ($\lparr$), \defn{tensor} ($\ltensr$), \defn{cut} ($\cutr$), and \defn{mix} ($\mixr$) 
	are the standard as in multiplicative linear logic with mix.
	Note that $\axrule$ is restricted to atomic formulas.

	The \defn{dual connectives} rule ($\dconr$) handles a pair of dual connectives at the same time.%
	\footnote{
		Rules handling multiple operators at the same time are not a novelty in structural proof theory:
		in focused proof systems (see, e.g.\cite{andreoli:92,miller:07:cslb,mil:pim:13})
		rules can handle multiple connectives of a same formula,
		while
		in modal logic and linear logic (see, e.g.,  \cite{girard:98,blackburn:venema:modal,bru:str:tableaux09,lel:pim:seq})
		is quite standard to have rules handling modalities occurring in different formulas of a same sequent.
	}
	To get an intuition of this rule, 
	consider the right-conjunction rule ($\land_R$) used in two-sided sequent calculi for classical logic shown below on the left.
	The interpretation of this rule is that if the left premise \emph{and} the right premise are derivable, then the conclusion is.
	Note that, even if the rule does not introduce a conjunction on the lefthand-side of the $\vdash$, 
	the interpretation of the conclusion sequent is the same of the interpretation of the sequent in which $\fA_1$ and $\fA_2$ are in conjunction 
	because the standard interpretation of a two-sides sequent $\Gamma \vdash \Delta$ is defined as 
	$\left(\bigwedge_{\fA\in\Gamma}\cneg\fA \right)\lor \left(\bigvee_{\fB\in\Delta}\fB\right)$.
	\begin{equation*}
		\adjustbox{max width=\textwidth}{$
			\begin{array}{c|c}
				\vliiinf{\land_R}{}{
					\Gamma_1,\Gamma_2, \fA_1, \fA_2
					\vdash
					\fB_1\land\fB_2 , \Delta_1,\Delta_2
				}{
					\vmod1{\Gamma_1, \fA_1\vdash\fB_1,\Delta_1}
				}{\mbox{``and''}}{
					\vmod2{\Gamma_2, \fA_2\vdash\fB_2,\Delta_2}
%					\Gedges{mod1/mod2}
				}
				&
				\vlinf{}{}{
					\Gamma_1,\Gamma_2,\Gamma_3, \Gamma_4, \con[\Pfour]\connn{\fA_1,\fA_2,\fA_3,\fA_4}
					\vdash
					\con[\Pfour]\connn{\fB_1,\fB_2,\fB_3,\fB_4}, 
					\Delta_1,\Delta_2,\Delta_3,\Delta_4
				}{
					\Pfour\Connn{
					\vmod1{ \Gamma_1, \fA_1\vdash\fB_1,\Delta_1}
					\;,\;
%					\qquad 
					\vmod2{ \Gamma_2, \fA_2\vdash\fB_2,\Delta_2}
%					\qquad
					\;,\;
					\vmod3{ \Gamma_3, \fA_3\vdash\fB_3,\Delta_3}
					\;,\;
%					\qquad 
					\vmod4{ \Gamma_4, \fA_4\vdash\fB_4,\Delta_4}
					}
%					\Gedges{mod1/mod2,mod2/mod3,mod3/mod4}
				}
			\end{array}
		$}
	\end{equation*}
	In a two-sided setting the rule $\dconr$ could have been reformulated by introducing the same connective in both sides.
	Intuitively, such a rule would internalize in the logic a meta-connective establishing a relation between the premises of the rule,
	as intuitively shown above on the right for the connective $\Pfour$.
	
	The names of the rules 
	\defn{unitor} ($\unitor$) 
	and
	\defn{weak-distributivity} ($\wdtr$)  
	are inspired by 
	the literature of \emph{monoidal categories} \cite{maclane:71}
	and
	\emph{weakly distributive categories} \cite{seely:89,cockett:seely:97,cockett:seely:97:mix}.
	The rule $\unitor$ internalize the fact that the unit $\funit$ is the neutral element for all connectives 
	(its side condition prevents the creation of non-pure formulas).
	Under the assumption of the existence of a $\funit$ which is the unit of both $\ltens$ and $\lpar$,
	the rule $\wdtr$ generalizes the \emph{weak-distribution law} (shown below on the left) of the $\ltens$ over the $\lpar$ to the weak-distributivity of $\ltens$ over any connective (see below on the top-right)
	\begin{equation}\label{eq:WD}
		\adjustbox{max width=.92\textwidth}{$
			\begin{array}{c|c}
				\fA\ltens(\fB\lpar\fC)\longrightarrow(\fA\ltens\fB)\lpar \fC
				&
				\begin{array}{c@{\;\longrightarrow\;}c}
					\fC\ltens\con\connn{\fA_1,\ldots, \fA_{k}, \fB,\fA_{k+1},\ldots, \fA_n}
					&
					\con\connn{\fA_1,\ldots, \fA_k, \fB\ltens \fC,\fA_{k+1},\ldots, \fA_n}
					\\
					\con\connn{\fA_1,\ldots, \fA_k, \fB\lpar\fC,\fA_{k+1},\ldots, \fA_n}
					&
					\con\connn{\fA_1,\ldots, \fA_k, \fB,\fA_{k+1},\ldots, \fA_n}\lpar \fC
				\end{array}
			\end{array}
		$}
	\end{equation}
	Note that an additional law is required to formalize the weak-distributivity law of all connectives over $\lpar$ (see above on the bottom-right). This law corresponds to the rule $\wdpr$ in \Cref{fig:admRules}.
\end{obs}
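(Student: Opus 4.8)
This observation is expository: it fixes the intended reading of each rule of \Cref{fig:rules} and relates the new rules to familiar ones, so what has to be established is a list of routine verifications rather than a single theorem. The plan is to treat each claim in turn by unfolding the relevant definitions, reserving the only delicate point — that $\unitor$ preserves purity — for the end. For the shared rules nothing beyond inspection is needed: the rules $\axrule$, $\lparr$, $\ltensr$ and $\mixr$ of \Cref{fig:rules}, together with the cut rule $\cutr$, are syntactically identical to their counterparts in multiplicative linear logic with mix, the only deviation being that $\axrule$ is stated on literals $a,\cneg a$ rather than on arbitrary formulas, which is exactly the asserted atomicity restriction.

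For $\dconr$ I would check that the rule is well-posed and that its two-sided reading is faithful. The side conditions $\sigma\in\Sym{\con}$ and $\tau\in\Sym{\cneg\con}$ ensure that the admissible permutations of the factors leave the two principal formulas unchanged up to $\feq$ (by \Cref{def:feq}), so the rule is stable under the equivalence on formulas. To justify the displayed meta-connective reading, I would verify the binary instance: taking $\con=\ltens$, whose dual connective is $\lpar$ and with $n=2$, the two premises $\vdash\Gamma_1,\fA_{\sigma(1)},\fB_{\tau(1)}$ and $\vdash\Gamma_2,\fA_{\sigma(2)},\fB_{\tau(2)}$ are precisely the two hypotheses of the two-sided $\land_R$ rule recalled in the display. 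The general case simply replaces the implicit binary meta-connective of $\land_R$ by the $n$-ary pattern of $\con$, contributing one premise per factor, which is the content of the comparison drawn for $\Pfour$.

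For $\wdtr$ I would match its premises and conclusion against \Cref{eq:WD}. Reading the rule upward and applying $\gof{\cdot}$ (\Cref{def:gof}), together with $\gof{\funit}=\emptygraph$, the principal formula $\con\connn{\fA_1,\ldots,\fA_n}$, the premise isolating $\fA_k$, and the premise carrying $\funit$ in slot $k$ realize at the level of graphs exactly the weak-distribution pattern of the top line of \Cref{eq:WD}, namely that $\ltens$ distributes weakly over an arbitrary $\con$. I would then record, as the observation states, that the dual shape — weak-distribution of an arbitrary connective over $\lpar$ — is \emph{not} an instance of $\wdtr$ and is captured instead by the separate rule $\wdpr$ of \Cref{fig:admRules}.

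The point deserving genuine care is the side condition $\dagger$ of $\unitor$. Here I would confirm that $\dagger$ fires exactly when deleting the marked $\funit$-slot yields a pure formula denoting the same graph. Unfolding $\gof{\cdot}$ and using $\gof{\funit}=\emptygraph$, the equation $\gof{\con\connn{\fA_1,\ldots,\fA_k,\funit,\fA_{k+1},\ldots,\fA_n}}=\gof{\fC\connn{\fA_{\sigma(1)},\ldots,\fA_{\sigma(n)}}}$ asserts that erasing the unit-vertex from the graph of the principal formula leaves the graph of the premise unchanged, i.e.\ that $\funit$ is neutral for $\con$; the clause $\neq\unit$ forbids the principal formula from becoming vacuous, so the conclusion is again pure whenever the premise is, and the existence of a witnessing $\fC$ and $\sigma\in\Sym\fC$ is supplied by the modular decomposition (\Cref{thm:conDec}). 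Verifying this equivalence — that the graph equation together with $\neq\unit$ characterizes exactly the pure formulas — is the only step where I expect to spend real effort; everything else is a direct unfolding of the definitions.
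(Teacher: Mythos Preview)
The paper gives no proof of this observation: it is labelled \textbf{obs} and is pure exposition, meant to motivate the rules of \Cref{fig:rules} rather than to assert anything requiring verification. Your proposal is therefore not so much a proof as an expanded paraphrase of the observation itself, and on that level it is accurate; but there is nothing to compare it against, since the paper simply states these readings and moves on.

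One small correction: your check that ``the binary instance $\con=\ltens$ of $\dconr$ recovers $\land_R$'' is not quite what the observation is saying. The point of the display is analogical, not a derivation: $\land_R$ is invoked only to illustrate that a branching rule already carries an implicit meta-connective between its premises (there, ordinary conjunction), and $\dconr$ generalises this by letting the prime graph $\gP$ play the role of that meta-connective. The rule $\dconr$ with $\con=\ltens$ is not literally $\land_R$ in a two-sided calculus; the observation is drawing a picture, not claiming an instance.
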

\begin{nota}\label{nota:noEX}
	Unless strictly needed for sake of clarity, 
	we omit to the permutations over the indices of the subformulas in rules.
\end{nota}
\subsection{Properties of the systems $\PML$ and $\PMLx$}\label{sec:propisomix}
%%%%%%%%%%%%%%%%%%%%%%%%%%%%%%%%%%%%%%%%%%%%%%%%%%
%%%%%%%%%%%%%%%%%%%%%%%%%%%%%%%%%%%%%%%%%%%%%%%%%%

%%%%%%%%%%%%%%%%%%%%%%%%%%%%%%%%%%%%%%%%%%%%%%%%%%
% FIG admissible rules
%%%%%%%%%%%%%%%%%%%%%%%%%%%%%%%%%%%%%%%%%%%%%%%%%%
\begin{figure}[t]
	\centering
	\adjustbox{max width=\textwidth}{$
		\begin{array}{c}
			\vlinf{\AXrule}{\text{\scriptsize $\fA$ pure}}{\vdash \fA , \nfA}{}
			\qquad
			\vliinf{\cutr}{}{\vdash\Gamma_1, \Gamma_2}{\vdash\Gamma_1,\fA}{\vdash\Gamma_2, \cneg \fA}
			\qquad
			\vlinf{\wdpr}{}{\vdash\Gamma, \con\connn{\funit,\fB_1,\ldots, \fB_n}, \fA}{\vdash\Gamma, \con\connn{\fA,\fB_1,\ldots, \fB_n}}
			\\\\
			\vliinf{\deepr}{\text{\scriptsize$\gof{\fcC\consfu}=\gof\fB$}}{
				\vdash \Gamma, \Delta,
				\fcC\cons{\fA}
			}{
				\vdash \Gamma, \fA
			}{
				\vdash \Delta,\fB
			}
			\qquad
			\vliiinf{\dconr[\fC]}{
				\text{\scriptsize $\begin{cases}\sigma\in\Sym{\fC}\\\tau\in\Sym{\nfC}\end{cases}$}
			}{
				\vdash\Gamma_1, \ldots, \Gamma_n, \fC\connn{\fA_1, \ldots, \fA_n}, \nfC\connn{\fB_{1}, \ldots \fB_{n}}
			}{
				\vdash\Gamma_1, \fA_{\sigma(1)}, \fB_{\tau(1)}
			}{
				\qquad\cdots\qquad
			}{
				\vdash\Gamma_n, \fA_{\sigma(n)},\fB_{\tau(n)}
			}
		\end{array}
		$}
	\caption{Admissible rules in $\PMLx$.
	}
	\label{fig:admRules}
\end{figure}
%%%%%%%%%%%%%%%%%%%%%%%%%%%%%%%%%%%%%%%%%%%%%%%%%%
%%%%%%%%%%%%%%%%%%%%%%%%%%%%%%%%%%%%%%%%%%%%%%%%%%

We start by observing that these systems are \emph{initial coherent}~\cite{avr:canonical:01,mil:pim:13}, that is, we can derive the implication $\fA\limp\fA$ for any formula $\fA$ only using atomic axioms.
%
%%%%%%%%%%%%%%%%%%%%%%%%%%%%%%%%%%%%%%%%%%%%%%%%%%
To prove this result we observe that the generalized version of $\dconr$ (that is, the rule $\dconr[\fC]$) is derivable
by induction on the structure of $\fC$ using the rule $\dconr$.
Therefore, we can prove that the generalized non-atomic axiom rule ($\AXrule$) is derivable, and that both $\PML$ and $\PMLx$ are initial coherent
\begin{lemma}\label{lem:gdcon}
	Let $\fC$ be a pure formula.
	Then rule $\dconr[\fC]$ is derivable.
\end{lemma}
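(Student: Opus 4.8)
The statement is that $\dconr[\fC]$ is derivable for every pure formula $\fC$, and I expect the proof to be a straightforward induction on the structure of $\fC$, using the atomic-index rule $\dconr[\con]$ (one per prime connective $\con[\gP]$) together with $\lparr$ and $\ltensr$ as the building blocks, and handling the two ``base'' connectives $\lpar,\ltens$ separately from the general prime case. First I would set up the induction so that the statement being proved is genuinely about synthetic connectives: given a pure formula $\fC$ over literal occurrences $\fx_1,\dots,\fx_n$, with $\sigma\in\Sym\fC$ and $\tau\in\Sym{\nfC}$, and given derivations $\vdash\Gamma_i,\fA_{\sigma(i)},\fB_{\tau(i)}$ for each $i$, produce a derivation of $\vdash\Gamma_1,\dots,\Gamma_n,\fC\connn{\fA_1,\dots,\fA_n},\nfC\connn{\fB_1,\dots,\fB_n}$.

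\textbf{The cases.} For the base case $\fC=\fx_1$ a single literal, $\fC\connn{\fA_1}=\fA_1$, $\nfC\connn{\fB_1}=\cneg{\fA_1}$ (or rather $\fB_1$ sits opposite $\fA_1$), and the required derivation is exactly the assumed premise $\vdash\Gamma_1,\fA_1,\fB_1$, so nothing is to be done. For the inductive step I split on the main connective of $\fC$. If $\fC=\con[\gP]\connn{\fC_1,\dots,\fC_m}$ for a prime $\gP\in\primecon$ other than $\lpar,\ltens$, I would use the primitive rule $\dconr[\gP]$ whose $m$ premises each require a derivation of the form $\vdash (\text{some grouping of the }\Gamma\text{'s}),\fC_j\connn{\dots},\nfC_{j}\connn{\dots}$; each such premise is handled by the induction hypothesis applied to the smaller synthetic connective $\fC_j$, feeding it the appropriate subfamily of the original premises $\vdash\Gamma_i,\fA_{\sigma(i)},\fB_{\tau(i)}$ (grouped according to which literal occurrences of $\fC$ lie in $\fC_j$). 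The cases $\fC$ has main connective $\lpar$ or $\ltens$ are essentially the same, except one uses the ordinary $\lparr$ rule (which keeps all contexts together) respectively $\ltensr$ (which splits the contexts into two blocks) rather than $\dconr[\gP]$; one must also invoke associativity/commutativity of these connectives (via $\feq$, which is sound by the admissibility facts already in play) so that an $n$-ary $\lpar$- or $\ltens$-formula can be peeled one argument at a time.

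\textbf{The main subtlety.} The delicate point is bookkeeping around the symmetries: the rule $\dconr[\gP]$ is stated with a symmetry $\sigma\in\Sym{\con[\gP]}$ on the $\fA$-side and $\tau\in\Sym{\cneg{\con[\gP]}}$ on the $\fB$-side, and when I decompose $\fC=\con[\gP]\connn{\fC_1,\dots,\fC_m}$ I need to check that the given $\sigma\in\Sym\fC$ and $\tau\in\Sym{\nfC}$ restrict/project correctly to a symmetry of $\con[\gP]$ (permuting the blocks $\fC_1,\dots,\fC_m$) together with, for each block, an induced symmetry of the sub-synthetic-connective $\fC_j$ — and analogously that $\nfC$ decomposes as $\con[\gQ]\connn{\nfC_{\rho(1)},\dots}$ with $\gQ$ the dual connective and the $\Dsym$-data matching up. This is where \Cref{thm:modDec}/\Cref{thm:conDec} (uniqueness of modular decomposition) does the real work: it guarantees that $\Sym\fC$ is, up to associativity of $\lpar,\ltens$, generated by the symmetries of the outermost connective and the symmetries of the factors, so the restriction is well-defined. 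I expect that verifying this compatibility of symmetry data — rather than the rule-pushing itself — is the one place where care is needed; once it is in place, the induction closes immediately, and in the write-up (per \Cref{nota:noEX}) most of these permutation indices can be suppressed.
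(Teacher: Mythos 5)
Your overall strategy---structural induction on $\fC$, discharging the outermost connective with the primitive rule $\dconr$ (or with $\lparr$/$\ltensr$ for the two binary base connectives) and closing each premise with the induction hypothesis applied to the corresponding factor---is exactly the paper's argument, and your attention to how $\Sym{\fC}$ and $\Dsym{\nfC}$ restrict to the outermost connective and its factors is in fact more explicit than the paper, which suppresses all permutation bookkeeping.

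There is, however, one case you do not treat, and on it your induction breaks. A \emph{pure} formula may have $\funit$ among the immediate arguments of a connective: purity only requires that vacuous subformulas be equal to $\funit$, so e.g.\ $\con[\Pbull]\connn{a,b,\funit,c,d}$ is pure. If $\fC=\con\connn{\fC_1,\ldots,\fC_m}$ with some $\fC_j=\funit$, that block contains no literal occurrences, the ``subfamily of premises'' you would feed to the induction hypothesis is empty, and the corresponding premise of $\dconr$ degenerates to $\vdash\funit,\cneg\funit$, i.e.\ $\vdash\funit,\funit$, which is not derivable in either system (no rule concludes a sequent consisting only of units). The paper's proof inserts a dedicated case here: it first applies $\unitor$ twice to $\con\connn{\ldots,\funit,\ldots}$ and to its dual, replacing them by a smaller connective with the unit entries removed, and only then applies the induction hypothesis. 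This is the one genuinely missing step in your write-up; it is also why the full statement needs $\unitor$ (hence $\PMLx$), whereas your argument as written establishes the lemma only for the unit-free fragment.
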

\begin{proof}
	By induction on the structure of $\fC$:
	\begin{itemize}
		\item 
		if $\fA=a$ is a literal, then $\AXrule$ is an instance of $\axrule$;
		\item 
		if $\fA=\con\connn{\fB_1,\ldots, \fB_k,\funit,\fB_{k+1},\ldots, \fB_n}$, then apply twice $\unitor$ to the sequent $\vdash \fA ,\nfA$ to obtain the sequent of pure formulas  $\vdash \con[\fC]\connn{\fB_1,\ldots, \fB_n},\con[\nfC]\connn{\nfB_1,\ldots, \nfB_n}$.
		We conclude by inductive hypothesis;
		\item 
		if $\fA=\con\connn{\fB_1,\ldots, \fB_n}$ and $\fB_i\neq\funit$ for all $i\in\intset1n$,
		then apply the rule $\dconr$ to obtain sequents of pure formulas the form $\fB_i,\nfB_i$ for all $i\in\intset1{\aryof{\con}}$. 
		We conclude by inductive hypothesis.
	\end{itemize}
\end{proof}
\begin{corollary}\label{lem:gAX}
	The rule $\AXrule$ is derivable in $\PML$ and in $\PMLx$.
\end{corollary}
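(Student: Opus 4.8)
The plan is to read the derivation of $\AXrule$ straight off \Cref{lem:gdcon}. Recall that the conclusion of $\AXrule$ is the sequent $\vdash\fA,\nfA$ with $\fA$ pure, and that a pure \form can be viewed as a synthetic connective $\fA=\fA\connn{\fx_1,\ldots,\fx_n}$ over its literal occurrences (with $n\geq 1$, since $\fA$ is non-vacuous), which induces a canonical bijection between the literal occurrences of $\fA$ and those of $\nfA$ pairing each $\fx_i$ with its dual $\cneg\fx_i$. First I would apply $\dconr[\fA]$ — derivable in both $\PML$ and $\PMLx$ by \Cref{lem:gdcon} — taking all side contexts $\Gamma_i$ to be empty and instantiating its argument formulas by the literals of $\fA$ and $\nfA$ along this bijection; choosing the two symmetries allowed by the side condition so that the $i$-th premise is exactly $\vdash\fx_i,\cneg\fx_i$, every premise is an instance of the atomic axiom $\axrule$, and the derivation is complete.

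Two subsidiary points then have to be handled. Since $\fA$ may contain occurrences of $\funit$ (which are not literal slots), one uses that the derivation of $\dconr[\fA]$ produced by \Cref{lem:gdcon} absorbs each unit argument of a connective by two applications of $\unitor$ — one acting on the occurrence inside $\fA$ and one on the matching occurrence inside $\nfA$ — before recursing into the genuine subformulas; this is exactly the second case in the proof of \Cref{lem:gdcon}, and it is the only step that leaves $\PML$ (the rule $\unitor$ belongs to $\PMLx$ only). Hence the derivation so obtained lies entirely in $\PML$ when $\fA$ is unit-free, and in $\PMLx$ in general, which is the two halves of the statement.

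The hard part will be the permutation bookkeeping. When one descends through a connective $\con[\gP]$ of $\fA$, the corresponding connective of $\nfA$ is $\con[\cneg\gP]$ but with its arguments reordered by the permutation $\sigma_\gP$ fixed in the definition of linear negation, whereas the side condition of $\dconr$ only allows reshuffling the two argument lists by a symmetry of $\con[\gP]$ and a symmetry of $\con[\cneg\gP]$; one must check that, step by step, these moves suffice to bring every matched pair of literal leaves into a common premise. This is precisely the consistency that the permutations $\sigma_\gP$ are chosen to guarantee (cf.\ the footnote accompanying the definition of negation), so it is routine once the bookkeeping is set up; equivalently, the issue can be sidestepped by re-running the induction of \Cref{lem:gdcon} with the argument formulas pinned to the matched literal leaves of $\fA$ and $\nfA$ from the outset, which is the form in which the $\PML$/$\PMLx$-derivation of $\vdash\fA,\nfA$ is most directly read off.
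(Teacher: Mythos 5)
Your proposal is correct and follows essentially the same route as the paper: the corollary is obtained by instantiating the derivable rule $\dconr[\fC]$ of \Cref{lem:gdcon} with $\fC=\fA$ and empty contexts, recursing down to atomic instances of $\axrule$ at the literal leaves and absorbing units via $\unitor$ (hence in $\PMLx$ when units occur). Your extra remarks on the $\PML$/$\PMLx$ split and the permutation bookkeeping only make explicit points the paper leaves implicit.
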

%\begin{proof}
%	We apply the rule $\dconr[\fA]$ (see  \Cref{lem:gdcon}) 
%	and conclude by inductive hypothesis since both $\fB_i$ and $\nfB_i$ are pure by hypothesis.
%\end{proof}
\begin{theorem}\label{thm:gAX}
	The systems $\PML$ and $\PMLx$ are initial coherent (with respect to pure formulas).
\end{theorem}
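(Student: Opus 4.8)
The plan is to read the result off the derivability of the generalized axiom rule. Recall that, by definition, the implication $\fA\limp\fA$ is the formula $\cneg\fA\lpar\fA$, and that a proof of it using only atomic axioms is exactly what initial coherence asks for. So, given a pure formula $\fA$, I would build the proof in two moves: a single bottom-up application of the rule $\lparr$ to the goal sequent $\vdash\cneg\fA\lpar\fA$, reducing it to the premise $\vdash\cneg\fA,\fA$; and then, since a sequent is a set of formula occurrences, recognizing this premise as the conclusion $\vdash\fA,\cneg\fA$ of the rule $\AXrule$, whose side condition ``$\fA$ pure'' is met by hypothesis.

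It then remains to replace that $\AXrule$-leaf by an actual derivation, which is precisely \Cref{lem:gAX}: the rule $\AXrule$ is derivable in $\PML$, hence in $\PMLx$. Moreover, unfolding the proof of \Cref{lem:gdcon} on which \Cref{lem:gAX} rests, the derivation of $\vdash\fA,\cneg\fA$ is obtained by induction on $\fA$ using only the rules $\unitor$ and $\dconr$ and, at the literal leaves, the atomic axiom $\axrule$; in particular no non-atomic axiom is used. Placing this derivation above the $\lparr$ inference yields a proof of $\vdash\fA\limp\fA$ in $\PML$, and in $\PMLx$, all of whose axiom leaves are atomic instances of $\axrule$, which is what we want.

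There is no real obstacle here: the statement is essentially a repackaging of \Cref{lem:gdcon} and \Cref{lem:gAX}. The only point worth keeping in mind is that the inductive derivation of $\AXrule$ requires the immediate subformulas of a pure formula to be themselves pure or equal to $\funit$, which is immediate from the definition of purity, since every vacuous subformula of a pure formula is $\funit$; this is already taken care of inside the proof of \Cref{lem:gdcon}.
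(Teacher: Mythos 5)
Your proposal is correct and follows the paper's own route: the theorem is obtained by applying $\lparr$ to $\vdash\cneg\fA\lpar\fA$ and then invoking the derivability of $\AXrule$ (\Cref{lem:gAX}), which in turn unfolds via \Cref{lem:gdcon} into a derivation built from $\dconr$, $\unitor$ and atomic instances of $\axrule$ only. Your closing remark that purity is preserved when descending to subformulas is exactly the point that makes the induction in \Cref{lem:gdcon} go through, so nothing is missing.
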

%%%%%%%%%%%%%%%%%%%%%%%%%%%%%%%%%%%%%%%%%%%%%%%%%%

We then prove the admissibility of $\cutr$ via \emph{cut-elimination} by providing a cut-elimination procedure.

%%%%%%%%%%%%%%%%%%%%%%%%%%%%%%%%%%%%%%%%%%%%%%%%%%
\begin{theorem}[Cut-elimination]\label{thm:cutelim}
	Let $\XS\in\set{\PML,\PMLx}$.
	The rule $\cutr$ is admissible in $\XS$.
\end{theorem}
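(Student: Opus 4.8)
The plan is to establish cut-elimination by the standard route: give a terminating rewriting procedure on derivations that permutes cut instances upward and decomposes them, measuring progress by a suitable multiset ordering on the cut formulas and the derivation heights. The only genuinely new ingredient compared to ordinary $\MLL$ is the principal/principal case where the cut formula has a graphical connective $\con[\gP]$ as main connective, so that on one side it is introduced by $\dconr[\gP]$ and on the other side its dual $\ncon[\gP]$ is introduced by $\dconr[\gP]$ as well, and the additional cases created by the rules $\wdtr$ and $\unitor$ present in $\PMLx$.

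First I would set up the cut-reduction measure. To each $\cutr$ instance assign the pair $(|\fA|, h)$ where $\fA$ is the cut formula (measured by, say, the number of connective occurrences) and $h$ the sum of the heights of the two subderivations feeding the cut; the induction is on the multiset of such pairs under the usual multiset order. The commutative cases — where the cut formula is a side (not principal) formula of the last rule on one side — are handled exactly as in sequent $\MLL$: permute the cut above that rule, which strictly decreases $h$ while not increasing $|\fA|$, except for the branching rules $\ltensr$, $\dconr$, $\mixr$, $\wdtr$, $\deepr$ and $\cutr$ itself, where the cut is copied into one branch (the one containing the cut formula), again strictly decreasing the relevant component. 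Here one must check the side conditions of $\wdtr$ and $\unitor$ are preserved under the permutation, which they are, since those conditions only constrain the shape of the principal connective-formula, untouched by moving the cut.

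Next I would treat the key principal cases. If the cut formula is a literal $a$ introduced by $\axrule$ on one side, the cut is eliminated outright (the other premise already contains $\cneg a$ in the right place). If the cut formula is $\con[\gP]\connn{\fA_1,\ldots,\fA_n}$ introduced by $\dconr[\gP]$ on one side against $\ncon[\gP]\connn{\fB_1,\ldots,\fB_n}$ introduced by $\dconr[\gP]$ on the other, the two rule instances together supply premises $\vdash \Gamma_i,\fA_{\sigma(i)},\fC_{\tau(i)}$ and, from the other side, derivations of the $\fB_j$-components paired up; the reduction replaces the single big cut by $n$ cuts on the strictly smaller formulas $\fA_i$ (or $\fB_i$), wired together using $\mixr$/$\ltensr$ to recombine the contexts $\Gamma_1,\ldots,\Gamma_n$ — this is precisely the generalization of the $\ltensr$/$\lparr$ principal case of $\MLL$, and the matching of the permutations $\sigma,\tau$ is exactly what the side conditions $\sigma\in\Sym\con$, $\tau\in\Sym{\ncon}$ guarantee. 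The remaining principal cases involve $\wdtr$ or $\unitor$ producing the cut formula's main connective against a $\dconr$: here I would use the algebraic content of those rules — $\wdtr$ lets a $\ltens$-factored subformula be pushed out, $\unitor$ absorbs a $\funit$ — to rewrite the cut into one on a formula with one fewer occurrence of $\funit$ or a smaller connective, possibly after first permuting with $\deepr$ (Fig.~\ref{fig:admRules}), whose admissibility should be recorded as a lemma before this proof or derived inline.

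The main obstacle I expect is exactly this last family of cases: ensuring that the interaction of $\wdtr$ and $\unitor$ with $\dconr$ always yields a strict decrease in the measure, rather than merely shuffling $\funit$'s around, and that the side condition $\dagger$ of $\unitor$ (which forbids creating non-pure formulas, i.e.\ it must stay above $\unit$ in the $\gof\cdot$ image) can always be met by the residual rule instances. I would handle this by being careful to measure cut formulas so that each occurrence of $\funit$ counts, so that absorbing a unit is progress, and by invoking \Cref{thm:conDec} to normalize the modular decomposition of the cut formula so that $\wdtr$/$\unitor$ act on a well-defined factor. Once every case strictly decreases the multiset measure, the well-foundedness of the multiset order over $\N\times\N$ gives termination, and since a cut-free derivation is one to which no case applies, admissibility of $\cutr$ in both $\PML$ and $\PMLx$ follows.
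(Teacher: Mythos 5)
Your overall strategy is the paper's: a local rewriting procedure on derivations, split into principal and commutative cases, terminating by a size measure on cut formulas (the paper uses the sum of the sizes of the active cut formulas, counting each $\funit$, so that the $\unitor$/$\wdtr$ reductions are indeed progress, exactly as you anticipate). The commutative cases and the axiom and $\ltensr$/$\lparr$ principal cases are as you describe.

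There is, however, a concrete gap in the one case that is genuinely new here, the principal $\dconr$-versus-$\dconr$ reduction. Each instance of $\dconr$ introduces a \emph{dual pair} of formulas simultaneously: the instance producing the cut formula $\ncon\connn{\fB_1,\ldots,\fB_n}$ also introduces a partner $\con\connn{\fA_1,\ldots,\fA_n}$ into the same sequent, and the instance on the other side of the cut likewise introduces a partner $\ncon\connn{\fC_1,\ldots,\fC_n}$ alongside $\con\connn{\cneg\fB_1,\ldots,\cneg\fB_n}$. After you replace the big cut by the $n$ cuts $\vdash\Gamma_i,\Delta_i,\fA_i,\fC_i$ on the strictly smaller $\fB_i$, these two leftover partners still have to be built, and they form exactly a dual pair; the reduct therefore closes with a \emph{new instance of $\dconr$} applied to the $n$ cut conclusions. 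Your proposal instead wires the contexts together ``using $\mixr$/$\ltensr$'', which cannot work: $\mixr$ is not a rule of $\PML$ at all, and neither $\mixr$ nor $\ltensr$ can reassemble the loose $\fA_i$ and $\fC_i$ into $\con\connn{\fA_1,\ldots,\fA_n}$ and $\ncon\connn{\fC_1,\ldots,\fC_n}$. This is not a cosmetic slip --- the fact that the leftover partners always pair up into a fresh $\dconr$ redex is precisely why the rule is designed to introduce two dual connectives at once (a single-connective rule breaks this, cf.\ the packaging problem discussed for $\conr$ in the appendix), and it is the load-bearing observation of the whole cut-elimination argument.

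Two smaller points. First, $\deepr$ is an \emph{admissible} rule, not a rule of $\PML$ or $\PMLx$, so it neither generates commutative cases nor should be invoked inside the cut-elimination procedure; the paper proves its admissibility separately and independently. Second, your treatment of the $\PMLx$ principal cases is too coarse: besides $\wdtr$ or $\unitor$ ``against a $\dconr$'' one must also handle $\unitor$ against $\unitor$ and $\wdtr$ against $\wdtr$ (the latter reduct genuinely uses $\mixr$, which is available there), and the $\dconr$-versus-$\wdtr$ reduct requires passing to a sub-connective via $\unitor$ and the derived rule $\dconr[\fC]$; these need to be written out to check that the size measure strictly decreases, but they do not require any idea beyond the one you are missing above.
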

\begin{proof}
	We define the \emph{size} of a formula as sum of the number of $\funit$, connectives and twice the number of literals in it.
	The \emph{size} of a derivation is the sum of the sizes of the active formulas in all $\cutr$-rules.
	The result follows by the fact that each \emph{cut-elimination step} from \Cref{fig:cut-elimPML,fig:cut-elimPMLx} reduces the size of a derivation.

	Note that in order to ensure that both active formulas of a $\cutr$ are principal with respect to the rule immediately above it we also need to consider the \emph{commutative} cut-elimination steps from \Cref{fig:cut-elimCom}.
	The treatment of these rule, as well as the definition of a size taking into account them, is not covered in the detail here because it is standard in the literature (see, e.g., \cite{troelstra:schwichtenberg:00}).
\end{proof}
\begin{corollary}\label{cor:impTrans}
	Let $\XS\in\set{\PML,\PMLx}$.
	If $\proves[\XS]\fA\limp \fB$ and $\proves[\XS]\fB\limp \fC$, then $\proves[\XS]\fA\limp \fC$.
\end{corollary}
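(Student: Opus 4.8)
The plan is to reduce everything to the admissibility of $\cutr$, which we already have from \Cref{thm:cutelim}. Recall that $\fA\limp\fB$ is by definition $\cneg\fA\lpar\fB$, so what we must show is that from $\XS$-proofs of $\vdash\cneg\fA\lpar\fB$ and of $\vdash\cneg\fB\lpar\fC$ we can build an $\XS$-proof of $\vdash\cneg\fA\lpar\fC$. Since $\cutr$ is admissible in $\XS$, it is enough to exhibit a derivation of $\vdash\cneg\fA\lpar\fC$ all of whose leaves are $\XS$-provable and which uses only the rules of $\XS$ together with $\cutr$: one last appeal to \Cref{thm:cutelim} then promotes it to a genuine $\XS$-proof. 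Conceptually this is just the standard fact that cut-elimination makes linear implication transitive; the only thing to check is that the present, unusual set of rules lets us carry out the usual gluing.

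For the gluing I would first build an $\eta$-expanded axiom of shape $\vdash\cneg\fA\lpar\fC,\ \fA\ltens\cneg\fB,\ \fB\ltens\cneg\fC$. Concretely: apply the derivable rule $\AXrule$ (\Cref{lem:gAX}) to $\fA$, $\fB$, $\fC$ to get $\XS$-proofs of $\vdash\fA,\cneg\fA$, of $\vdash\fB,\cneg\fB$ and of $\vdash\fC,\cneg\fC$; glue the first two with a $\ltensr$ forming $\fA\ltens\cneg\fB$, obtaining $\vdash\cneg\fA,\ \fA\ltens\cneg\fB,\ \fB$; glue in the third with another $\ltensr$ forming $\fB\ltens\cneg\fC$, obtaining $\vdash\cneg\fA,\ \fA\ltens\cneg\fB,\ \fB\ltens\cneg\fC,\ \fC$; and finish with one $\lparr$ on $\cneg\fA$ and $\fC$. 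Now, modulo $\feq$, $\cneg{(\cneg\fA\lpar\fB)}=\fA\ltens\cneg\fB$ and $\cneg{(\cneg\fB\lpar\fC)}=\fB\ltens\cneg\fC$; so a first $\cutr$ between this sequent and the hypothesis $\proves[\XS]\cneg\fA\lpar\fB$, cutting on the pair $\cneg\fA\lpar\fB\mid\fA\ltens\cneg\fB$, yields $\vdash\cneg\fA\lpar\fC,\ \fB\ltens\cneg\fC$, and a second $\cutr$ with $\proves[\XS]\cneg\fB\lpar\fC$, cutting on $\cneg\fB\lpar\fC\mid\fB\ltens\cneg\fC$, yields exactly $\vdash\cneg\fA\lpar\fC$, i.e. $\proves[\XS]\fA\limp\fC$.

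I do not expect a genuine obstacle: all the real work is already done, namely cut-elimination (\Cref{thm:cutelim}) and the derivability of the generalized axiom (\Cref{lem:gdcon}, \Cref{lem:gAX}). The only points needing a little care are bookkeeping ones. First, one must match duals up to $\feq$ and up to the (possibly non-canonical) permutation in the definition of the linear negation, which is harmless since sequents and rules are read modulo $\feq$. Second, to be scrupulous about the side condition ``$\fA$ pure'' of $\AXrule$ one either restricts attention to pure $\fA,\fB,\fC$, or builds the $\eta$-expansion directly from the atomic rule $\axrule$ and the rule $\dconr$, exactly as in the proof of \Cref{lem:gdcon}. (Equivalently, the same construction can be packaged as: $\lparr$ is invertible in $\XS$ — via $\cutr$ against an $\eta$-expanded axiom — and then a single $\cutr$ composes the two resulting sequents.)
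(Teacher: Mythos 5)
Your proposal is correct and is exactly the argument the paper leaves implicit: the corollary is stated as a direct consequence of \Cref{thm:cutelim}, and composing the two hypotheses via $\cutr$ against an $\eta$-expanded axiom (built from $\AXrule$, two $\ltensr$ and one $\lparr$), then eliminating the cuts, is the intended route. Your caveat about the purity side condition on $\AXrule$ is well taken and is consistent with the paper's own restriction of initial coherence to pure formulas.
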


%%%%%%%%%%%%%%%%%%%%%%%%%%%%%%%%%%%%%%%%%%%%%%%%%%

%%%%%%%%%%%%%%%%%%%%%%%%%%%%%%%%%%%%%%%%%%%%%%%%%%
% FIG cut-elim steps PML
%%%%%%%%%%%%%%%%%%%%%%%%%%%%%%%%%%%%%%%%%%%%%%%%%%
\def\downsquigarrow{\rotatebox{-90}{$\rightsquigarrow$}}
\begin{figure*}
	\centering
	\adjustbox{max width=\textwidth}{$\begin{array}{c}
			%vs axiom
			\vlderivation{
				\vliin{\cutr}{}{\vdash a,\Gamma}{
					\vlin{\axrule}{}{\vdash a,\cneg a}{\vlhy{}}
				}{
					\vlhy{\vdash a, \Gamma}
				}
			}
			\;\rightsquigarrow\; 
			\vdash \cneg a,\Gamma
			\qquad
			% tens vs par
			%		
			\vlderivation{
				\vliin{\cutr}{}{
					\vdash\Gamma,\Delta,\Sigma
				}{
					\vliin{\ltensr}{}{
						\vdash\Gamma, \Delta, \fA\ltens\fB 
					}{
						\vlhy{\vdash\Gamma,\fA}
					}{
						\vlhy{\vdash\Delta,\fB}
					}
				}{
					\vlin{\lparr}{}{
						\vdash \Sigma, \nfA \lpar \nfB
					}{
						\vlhy{\vdash\Sigma, \nfA\lpar\nfB}
					} 
				}
			}
			\;\rightsquigarrow\; 
			\vlderivation{
				\vliin{\cutr}{}{
					\vdash\Gamma,\Delta,\Sigma
				}{
					\vlhy{\vdash\Gamma,\fA}
				}{
					\vliin{\cutr}{}{
						\vdash \Delta,\Sigma, \nfA
					}{
						\vlhy{\vdash\Delta,\fB}	
					}{
						\vlhy{\vdash\Sigma, \nfA,\nfB}
					}
				}
			}		
%			\vlderivation{
%				\vliin{\cutr}{}{
%					\vdash\Gamma_1,\ldots, \Gamma_n,\Delta
%				}{
%					\vliiin{\mltensr}{}{\vdash\Gamma_1,\ldots, \Gamma_n, \mltens\connn{\fA_1,\ldots, \fA_n} }{\vlhy{\vdash\Gamma_1,\fA_1}}{\vlhy{\cdots}}{\vlhy{\vdash\Gamma_n,\fA_n}}
%				}{
%					\vlin{\mlparr}{}{\vdash\Delta, \mlpar\connn{\nfA_1,\ldots, \nfA_n}}{\vlhy{\vdash\Delta, \nfA_1,\ldots, \nfA_n}} 
%				}
%			}
%			%			
%			\;\rightsquigarrow\; 
%			%				
%			\vlderivation{
%				\vliin{\cutr}{}{
%					\vdash\Gamma_1,\ldots, \Gamma_n,\Delta
%				}{
%					\vlhy{\vdash\Gamma_1,\fA_1}
%				}{
%					\vliin{\cutr}{}{
%						\reflectbox{$\ddots$}
%					}{
%						\vlhy{\vdash\Gamma_{n},\fA_n}	
%					}{
%						\vlhy{\vdash\Delta, \nfA_1,\ldots, \nfA_n}
%					}
%				}
%			}
			%	
			\\\\\hline\\
			%
			%dconr vs drule
			\vlderivation{
				\vliin{\cutr}{}{
					\vdash\Gamma_1,\ldots, \Gamma_n, \Delta_1, \ldots, \Delta_n,\con\connn{\fA_1, \ldots, \fA_n},\ncon\connn{\fC_1,\ldots, \fC_n}
				}{
					\vliiin{\dconr}{}{
						\vdash\Gamma_1,\ldots, \Gamma_n, \con\connn{\fA_1, \ldots, \fA_n},\ncon\connn{\fB_1, \ldots, \fB_n}
					}{\vlhy{\vdash\Gamma_1,\fA_1,\fB_1}}{\vlhy{\qquad\cdots\qquad}}{\vlhy{\vdash\Gamma_n,\fA_n,\fB_n}}
				}{
					\vliiin{\dconr}{}{
						\vdash\Delta_1,\ldots, \Delta_n, \con\connn{\nfB_1, \ldots, \nfB_n},\ncon\connn{\fC_1,\ldots, \fC_n}
					}{\vlhy{\vdash\Delta_1,\nfB_1,\fC_1}}{\vlhy{\qquad\cdots\qquad}}{\vlhy{\vdash\Delta_n,\nfB_n,\fC_n}}
				}
			}
			%
			%				\;\rightsquigarrow\; 
			\\\downsquigarrow\\
			\vlderivation{
				\vliiin{\dconr}{}{
					\vdash\Gamma_1,\ldots, \Gamma_n, \Delta_1, \ldots, \Delta_n,\ncon\connn{\fB_1, \ldots, \fB_n},\con\connn{\fC_1,\ldots, \fC_n}
				}{
					\vliin{\cutr}{}{\vdash\Gamma_1,\Delta_1,\fA_1,\fC_1}{\vlhy{\vdash\Gamma_1,\fA_1,\fB_1}}{\vlhy{\vdash\Delta_1,\nfB_1,\fC_1}}
				}{
					\vlhy{\quad\cdots\quad}
				}{
					\vliin{\cutr}{}{\vdash\Gamma_n,\fA_n,\fC_n}{\vlhy{\vdash\Gamma_n,\fA_n,\fB_n}}{\vlhy{\vdash\Delta_n,\nfB_n,\fC_n}}
				}
			}
		\end{array}$}
	\caption{Cut-elimination steps for $\PML$.}
	\label{fig:cut-elimPML}
\end{figure*}
%%%%%%%%%%%%%%%%%%%%%%%%%%%%%%%%%%%%%%%%%%%%%%%%%%
%%%%%%%%%%%%%%%%%%%%%%%%%%%%%%%%%%%%%%%%%%%%%%%%%%

%%%%%%%%%%%%%%%%%%%%%%%%%%%%%%%%%%%%%%%%%%%%%%%%%%
% FIG cut-elimination steps PMLx
%%%%%%%%%%%%%%%%%%%%%%%%%%%%%%%%%%%%%%%%%%%%%%%%%%
\begin{figure*}[t]
	\centering
	\adjustbox{max width=\textwidth}{$\begin{array}{c}
		%unit vs unit
%		\vlderivation{
%			\vliin{\cutr}{}{\vdash \funit,\funit}{
%				\vlin{\urule}{}{\vdash \funit, \funit}{\vlhy{}}
%			}{
%				\vlhy{\vdash \funit, \Gamma}
%			}
%		}
%		%			
%		\;\rightsquigarrow\; 
%		%				
%		\vdash \funit,\Gamma
%		%
%		\\\\\hline\\
		% good unitor vs unitor
		\vlderivation{
			\vliin{\cutr}{}{
				\vdash\Gamma,\Delta
			}{
				\vlin{\unitor}{}{
					\vdash\Gamma, \fP\connn{\funit,\fA_2, \ldots, \fA_n}
				}{
					\vlhy{\vdash\Gamma, \fC\connn{\fA_2,\ldots, \fA_n}}
				}
			}{
				\vlin{\unitor}{}{
					\vdash \Delta, \nfP\connn{\funit,\nfA_2, \ldots, \nfA_n}
				}{
					\vlhy{\vdash\Delta, \nfC\connn{\nfA_2,\ldots, \nfA_n}}
				}
			}
		}
		\;\rightsquigarrow\; 
		\vlderivation{
			\vliin{\cutr}{}{
				\vdash\Gamma,\Delta
			}{
				\vlhy{\vdash\Gamma, \fC\connn{\fA_2,\ldots, \fA_n}}
			}{
				\vlhy{\vdash\Delta, \nfC\connn{\nfA_2,\ldots, \nfA_n}}
			}
		}
		\\\\\hline\\
		% good wd vs wd
		\vlderivation{
			\vliin{\cutr}{}{
				\vdash\Gamma_1,\Gamma_2,\Delta_1,\Delta_2
			}{
				\vliin{\wdtr}{}{
					\vdash\Gamma_1,\Gamma_2, \fP\connn{\fA_1, \ldots, \fA_n}
				}{
					\vlhy{\vdash\Gamma_1,\fA_1}
				}{
					\vlhy{\vdash\Gamma_2, \fP\connn{\funit,\fA_2,\ldots, \fA_n}}
				}
			}{
				\vliin{\wdtr}{}{
					\vdash\Delta, \nfP\connn{\nfA_1, \ldots, \nfA_n}
				}{
					\vlhy{\vdash\Delta_1,\nfA_1}
				}{
					\vlhy{\vdash\Delta_2, \nfP\connn{\funit,\nfA_2,\ldots, \nfA_n}}
				}
			}
		}
		\\\downsquigarrow\\\\
		\vlderivation{
			\vliin{\mixr}{}{
				\vdash\Gamma_1,\Gamma_2,\Delta_1,\Delta_2
			}{
				\vliin{\cutr}{}{
					\vdash\Gamma_1,\Delta_1
				}{
					\vlhy{\vdash\Gamma_1,\fA_1}
				}{
					\vlhy{\vdash\Delta_1,\nfA_1}
				}
			}{
				\vliin{\cutr}{}{
					\vdash\Gamma_2,\Delta_2
				}{
					\vlhy{\vdash, \Gamma_2,\fP\connn{\funit,\fA_2,\dots \fA_n}} 
				}{
					\vlhy{\vdash\Delta_2, \nfP\connn{\funit,\nfA_2,\dots \nfA_n}} 
				}
			}
		}
		%
%		\\\\\hline\\
%		% dcon vs unitor
%		\vlderivation{
%			\vliin{\cutr}{}{
%				\vdash\Gamma_1,\ldots, \Gamma_n, \Delta_1,\Delta_2,\fP\connn{\fA_1, \ldots, \fA_n}
%			}{
%				\vliiin{\dconr}{}{
%					\vdash\Gamma_1,\ldots, \Gamma_n, \fP\connn{\funit,\fA_2, \ldots, \fA_n},\nfP\connn{\funit,\fB_2, \ldots, \fB_n}
%				}{
%					\vlin{\urule}{}{\vdash \funit, \funit}{\vlhy{}}
%				}{
%					\vlhy{\vdash\Gamma_2,\fA_2,\fB_2\qquad\cdots\qquad}
%				}{
%					\vlhy{\vdash\Gamma_n,\fA_n,\fB_n}
%				}
%			}{
%				\vlin{\unitor}{}{
%					\vdash \Delta, \fP\connn{\funit,\nfB_2, \ldots, \nfB_n}
%				}{
%					\vlhy{\vdash \Delta, \fC\connn{\nfB_2,\ldots, \nfB_n}}
%				}
%			}
%		}
%		%
%		\\\downsquigarrow\\\\
%		%
%		\vlderivation{
%			\vliin{\wdtr}{}{
%				\vdash\Gamma_1,\ldots, \Gamma_n, \Delta_1,\Delta_2,\fP\connn{\fA_1, \ldots, \fA_n}
%			}{
%				\vliin{\cutr}{}{
%					\vdash \Gamma_1, \Delta_1, \fA_1
%				}{
%					\vlhy{\vdash \Gamma_1, \fA_1,\fB_1}
%				}{
%					\vlhy{\vdash \Delta_1, \nfB_1}
%				}
%			}{
%				\vlin{\unitor}{}{
%					\vdash \Gamma_2,\ldots, \Gamma_n,\Delta_2, \fP\connn{\funit,\fA_2, \ldots, \fA_n}
%				}{
%					\vliin{\cutr}{}{
%						\vdash \Gamma_2,\ldots, \Gamma_n,\Delta_2, \fC\connn{\fA_2, \ldots, \fA_n}
%					}{
%						\vliiin{\dconr[\fC]}{}{
%							\vdash\Gamma_2,\ldots, \Gamma_n, \fC\connn{\fA_1, \ldots, \fA_n},\nfC\connn{\fB_1, \ldots, \fB_n}
%						}{
%							\vlhy{\vdash\Gamma_2,\fA_2,\fB_2}
%						}{
%							\vlhy{\qquad\cdots\qquad}
%						}{
%							\vlhy{\vdash\Gamma_n,\fA_n,\fB_n}
%						}
%					}{
%						\vlhy{\vdash \Delta_2, \fC\connn{\nfB_2,\ldots, \nfB_n}}
%					}
%				}
%			}
%		}
		%	
		\\\\\hline\\
		% dcon vs wd
		%
		\vlderivation{
			\vliin{\cutr}{}{
				\vdash\Gamma_1,\ldots, \Gamma_n, \Delta,\Sigma,\fP\connn{\fA_1, \ldots, \fA_n}
			}{
				\vliiin{\dconr}{}{
					\vdash\Gamma_1,\ldots, \Gamma_n, \fP\connn{\fA_1, \ldots, \fA_n},\nfP\connn{\fB_1, \ldots, \fB_n}
				}{
					\vlhy{\vdash\Gamma_1,\fA_1,\fB_1}
				}{
					\vlhy{\qquad\cdots\qquad}
				}{
					\vlhy{\vdash\Gamma_n,\fA_n,\fB_n}
				}
			}{
				\vliin{\wdtr}{}{
					\vdash \Delta,\Sigma, \fP\connn{\nfB_1, \ldots, \nfB_n}
				}{
					\vlhy{\vdash \Delta,\nfB_1}
				}{
					\vlhy{\vdash \Sigma, \fP\connn{\funit,\nfB_2,\ldots, \nfB_n}}
				}
			}
		}
		\\\downsquigarrow\\\\
		\vlderivation{
			\vliin{\wdtr}{}{
				\vdash\Gamma_1,\ldots, \Gamma_n, \Delta,\Sigma,\fP\connn{\fA_1, \ldots, \fA_n}
			}{
				\vliin{\cutr}{}{
					\vdash \Gamma_1, \Delta, \fA_1
				}{
					\vlhy{\vdash \Gamma_1, \fA_1,\fB_1}
				}{
					\vlhy{\vdash \Delta, \nfB_1}
				}
			}{
				\vliin{\cutr}{}{
					\vdash \Gamma_2,\ldots, \Gamma_n,\Sigma, \fP\connn{\funit,\fA_2, \ldots, \fA_n}
				}{
					\vlin{2\times \unitor}{}{
						\vdash\Gamma_2,\ldots, \Gamma_n, 
						\fP\connn{\funit,\fA_1, \ldots, \fA_n},
						\nfP\connn{\funit,\fB_1, \ldots, \fB_n}
					}{
						\vliin{\dconr[\fC]}{}{
							\vdash\Gamma_2,\ldots, \Gamma_n, 
							\con[\fC]\connn{\fA_1, \ldots, \fA_n},
							\ncon[\fC]\connn{\fB_1, \ldots, \fB_n}
						}{
							\vlhy{\vdash\Gamma_2,\fA_2,\fB_2\qquad\cdots\qquad}
						}{
							\vlhy{\vdash\Gamma_n,\fA_n,\fB_n}
						}
					}
				}{
					\vlhy{\vdash \Sigma, \fP\connn{\funit,\nfB_2,\ldots, \nfB_n}}
				}
			}
		}
	\end{array}$}
	\caption{
		Additional cut-elimination steps in $\GMLx$.
	}
	\label{fig:cut-elimPMLx}
\end{figure*}
%%%%%%%%%%%%%%%%%%%%%%%%%%%%%%%%%%%%%%%%%%%%%%%%%%
%%%%%%%%%%%%%%%%%%%%%%%%%%%%%%%%%%%%%%%%%%%%%%%%%%

%%%%%%%%%%%%%%%%%%%%%%%%%%%%%%%%%%%%%%%%%%%%%%%%%%
%FIG cut-elimination setps commutative
%%%%%%%%%%%%%%%%%%%%%%%%%%%%%%%%%%%%%%%%%%%%%%%%%%
\begin{figure*}[t]
	\centering
	\adjustbox{max width=\textwidth}{$\begin{array}{c}
		\vlderivation{
			\vliin{\cutr}{}{\vdash\Gamma_1, \Gamma_2,\Delta}{
				\vlin{\rho}{}{\vdash\Gamma_1,\Delta, \fA}{
					\vlhy{\vdash\Gamma_1,\Delta',\fA}
				}
			}{
				\vlhy{\vdash\nfA,\Gamma_2}
			}
		}
		\quad\rightsquigarrow\quad
		\vlderivation{
			\vlin{\rho}{}{\vdash\Gamma_1, \Gamma_2,\Delta'}{
				\vliin{\cutr}{}{\vdash\Gamma_1,\Gamma_2,\Delta}{
					\vlhy{\vdash\Gamma_1,\Delta',\fA}
				}{
					\vlhy{\vdash\nfA,\Gamma_2}
				}
			}
		}
		\\\\
		\vlderivation{
			\vliin{\cutr}{}{
				\vdash\Gamma_1,\ldots, \Gamma_{n+1},\Delta
			}{
				\vliiin{\rho}{}{
					\vdash\Gamma_1,\ldots, \Gamma_n, \Delta,\fA
				}{
					\vlhy{\vdash\Gamma_1, \Delta_1'}
				}{\vlhy{\cdots}}{
					\vlhy{\vdash\Gamma_n,\Delta_n',\fA}
				}
			}{
				\vlhy{\vdash\nfA,\Gamma_{n+1}}
			}
		}
		\quad\rightsquigarrow\quad
		\vlderivation{
			\vliiiin{\rho}{}{
				\vdash\Gamma_1,\ldots,\Gamma_{n+1},\Delta
			}{
				\vlhy{\vdash\Gamma_1, \Delta_1'}
			}{\vlhy{\cdots}}{
				\vlhy{\vdash\Gamma_{n-1},\Delta_{n-1}'}
			}{
				\vliin{\cutr}{}{
					\vdash\Gamma_n,\Gamma_{n+1},\Delta_n'
				}{
					\vlhy{\vdash\Gamma_n,\Delta_n',\fA}
				}{
					\vlhy{\vdash\Gamma_{n+1},\nfA}
				}
			}
		}
		\\\\
		\vlderivation{
			\vlin{\unitor}{}{
				\vdash
				\Gamma
				\fP\connn{\fA_1,\ldots, \fA_{i-1},\funit,\fA_{i+1},\ldots,\fA_{j-1},\funit,\fA_{j+1},\ldots,  \fA_n}
			}{
				\vlin{\unitor}{}{
					\vdash
					\Gamma, 
					\fcon{\gP}\connn{\fA_1,\ldots, \fA_{i-1},\funit,\fA_{i+1},\ldots,\fA_{j-1},\fA_{j+1},\ldots,  \fA_n}
				}{
					\vlhy{
						\vdash
						\Gamma,
						\fC\connn{\fA_1,\ldots, \fA_{i-1},\fA_{i+1},\ldots,\fA_{j-1},\fA_{j+1},\ldots,  \fA_n}}
				}
			}
		}
		\;\rightsquigarrow\;
		\vlderivation{
			\vlin{\unitor}{}{
				\vdash
				\Gamma,
				\fP\connn{\fA_1,\ldots, \fA_{i-1},\funit,\fA_{i+1},\ldots,\fA_{j-1},\funit,\fA_{j+1},\ldots,  \fA_n}
			}{
				\vlin{\unitor}{}{
					\vdash
					\Gamma,
					\fcon{\gP'}\connn{\fA_1,\ldots, \fA_{i-1},\fA_{i+1},\ldots,\fA_{j-1},\funit,\fA_{j+1},\ldots,  \fA_n}
				}{
					\vlhy{
						\vdash
						\Gamma,
						\fC\connn{\fA_1,\ldots, \fA_{i-1},\fA_{i+1},\ldots,\fA_{j-1},\fA_{j+1},\ldots,  \fA_n}}
				}
			}
		}
	\end{array}$}
	\caption{
		Commutative cut-elimination steps.
%		\matteo{in caso mettere cosa brutta unitor vs dconr}
	}
	\label{fig:cut-elimCom}
\end{figure*}
%%%%%%%%%%%%%%%%%%%%%%%%%%%%%%%%%%%%%%%%%%%%%%%%%%
%%%%%%%%%%%%%%%%%%%%%%%%%%%%%%%%%%%%%%%%%%%%%%%%%%

The admissibility of the $\cutr$-rule implies analyticity of $\PML$ via the standard \emph{sub-formula property}, that is, all (occurrences of) formulas occurring in the premises of a rule are subformulas of the ones in the conclusion.
%%%%%%%%%%%%%%%%%%%%%%%%%%%%%%%%%%%%%%%%%%%%%%%%%%
\begin{corollary}[Analyticity of $\PML$]
	Let $\Gamma$ be a sequent.
	If $\proves[\PML]\Gamma$, then there is a proof of $\Gamma$ in $\PML$ only containing occurrences of sub-formulas of formulas $\Gamma$.
\end{corollary}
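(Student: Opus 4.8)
The plan is to obtain analyticity as a direct consequence of \Cref{thm:cutelim} together with the observation that every rule of $\PML$ enjoys the \emph{sub-formula property}: every formula occurring in a premise is a sub-formula of a formula occurring in the conclusion. First I would recall that, by \Cref{def:GML}, the system $\PML$ does not contain $\cutr$ among its rules, so every $\PML$-proof is already cut-free; the point of invoking \Cref{thm:cutelim} is only that this restriction is harmless, i.e.\ $\proves[\PML]\Gamma$ holds precisely when $\Gamma$ is derivable using the rules of $\PML$ possibly together with $\cutr$.

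Second, I would verify the sub-formula property by inspecting the rules of $\PML$ in \Cref{fig:rules}. For $\axrule$ there is nothing to check, as it has no premises. In $\lparr$ the active formulas $\fA$ and $\fB$ are immediate sub-formulas of the principal formula $\fA\lpar\fB$ while the context $\Gamma$ is carried unchanged, and similarly for $\ltensr$ with principal formula $\fA\ltens\fB$. In $\dconr[\con]$ each active formula $\fA_{\sigma(i)}$ is an immediate sub-formula of the principal formula $\con\connn{\fA_1,\ldots,\fA_n}$, each active formula $\fB_{\tau(i)}$ is an immediate sub-formula of $\ncon\connn{\fB_1,\ldots,\fB_n}$, and the contexts $\Gamma_1,\ldots,\Gamma_n$ are again unchanged. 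Hence every rule of $\PML$ has the sub-formula property.

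Third, I would conclude by an induction on the height of a cut-free proof $\pi$ of $\Gamma$: the formulas of its root sequent are sub-formulas of $\Gamma$ by definition, and applying the sub-formula property of the last rule of $\pi$ together with the inductive hypothesis on each immediate sub-proof shows that every formula occurring anywhere in $\pi$ is a sub-formula of some formula of $\Gamma$. Combined with the first step, which guarantees the existence of such a $\pi$ whenever $\proves[\PML]\Gamma$, this proves the statement.

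I do not anticipate a genuine obstacle here. The only subtlety worth flagging is that rule applications are understood modulo the equivalence $\feq$ (see \Cref{nota:noEX}), so that ``sub-formula'' must be read either up to $\feq$ or relative to a fixed representative of each $\feq$-class; with this convention the associativity and symmetry equations of \Cref{def:feq} do not disturb the inclusions used above, and the whole argument is routine, exactly as in ordinary sequent calculi.
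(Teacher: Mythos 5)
Your argument is correct and matches the paper's own (implicit) justification: cut-elimination (\Cref{thm:cutelim}) plus the observation that every rule of $\PML$ in \Cref{fig:rules} satisfies the sub-formula property, followed by a routine induction on the proof. Your remark that sub-formulas must be read up to $\feq$ is a reasonable extra precision, but otherwise there is nothing to add.
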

%\begin{proof}
%	It suffices to remark that the rules in $\PML$ satisfy the subformula property, that is, all formulas occurring in a premise of a rule are subformulas of the formulas occurring in the conclusion.
%\end{proof}
%%%%%%%%%%%%%%%%%%%%%%%%%%%%%%%%%%%%%%%%%%%%%%%%%%
However, the same result does not hold for $\PMLx$ because of the rule $\unitor$.
In fact, the presence of more-than-binary connectives and their units (in this case, a unique unit $\funit$) implies,
as observed in the previous works on graphical logic~\cite{acc:hor:str:LICS2020,acc:LMCS,acc:FSCD22},
the possibility of having \emph{sub-connectives}, that is, connectives with smaller arity behaving as if certain entries of the connective are fixed to be units.
%%%%%%%%%%%%%%%%%%%%%%%%%%%%%%%%%%%%%%%%%%%%%%%%%%
\def\qsubf{quasi-subformula\xspace}
\begin{definition}
	Let $\gP$ and $\gQ$ be prime graphs.
	If 
	$\gP\connn{\funits, v_{i_1},\funits, v_{i_k},\funits}\isym\gQ\connn{v_1,\dots, v_n}$ 
	for single-vertex graphs $v_1, \ldots,v_n$
	and 
	for some distinct $i_1,\ldots,i_k\in\intset1n$,
%	such that $i_i<\cdots <i_k$,
	then we may write
	$\con[\gP\restr{i_1,\ldots, i_k}]=\con[\gQ]$
	and we say that
	the connective $\con[\gQ]$ is a \defn{sub-connective} of if $\con[\gP]$.

	A \defn{\qsubf} of a formula 
	$\fA=\con[\gP]\connn{\fB_1,\ldots,\fB_n}$ 
%	is a literal in $\fA$ or 
	is a formula of the form
	$\con[\gP'\restr{i_1,\ldots, i_k}]\connn{\fB'_{i_1},\ldots,\fB'_{i_k}}$
	with
	$\fB'_{i_j}$ a \qsubf of $\fB_{i_j}$ for all $i_j\in\intset{i_1}{i_k}$.
\end{definition}
\begin{corollary}[Analyticity of $\PMLx$]
	Let $\Gamma$ be a sequent.
	If $\proves[\PMLx]\Gamma$ then there is a proof of $\Gamma$ in $\PMLx$ only containing occurrences of \qsubf of formulas in $\Gamma$.
\end{corollary}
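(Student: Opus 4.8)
The plan is to first reduce to cut-free proofs via cut-elimination, and then to establish a \emph{quasi-subformula property} for $\PMLx$ by a routine inspection of its rules. Before that I would record two elementary facts about the \qsubf relation, both immediate from its definition: it is reflexive and transitive, and every subformula of a formula $\fA$ is in particular a \qsubf of $\fA$ (take the full index set $\intset1n$, for which $\gP\restr{1,\ldots,n}$ is $\gP$ itself, and iterate; a singleton restriction $\gP\restr{\set j}$ is trivial, so each $\fB_j$ is already a \qsubf of $\con[\gP]\connn{\fB_1,\ldots,\fB_n}$). These facts let us freely compose quasi-subformula inclusions along a derivation.

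Next, by \Cref{thm:cutelim} any $\PMLx$-provable sequent $\Gamma$ admits a cut-free proof $\pi$. I would then prove, by induction on the height of $\pi$, that every occurrence of a formula appearing in $\pi$ is a \qsubf of some formula in $\Gamma$ (all statements understood up to $\feq$). The inductive step reduces to checking, for each rule of $\PMLx$ --- namely $\axrule$, $\lparr$, $\ltensr$, $\dconr[\gP]$ (for every prime $\gP$), $\mixr$, $\wdtr$ and $\unitor$ --- that every formula occurring in a premise is a \qsubf of a formula occurring in the conclusion. For $\axrule$ this is vacuous; for $\lparr$, $\ltensr$, $\dconr[\gP]$ and $\mixr$ the premises contain only (proper) subformulas of the conclusion's principal formulas together with context formulas carried over verbatim, so the claim follows from the remarks above.

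The two cases that genuinely require the notion of sub-connective are $\wdtr$ and $\unitor$. For $\wdtr$ with principal formula $\con[\gP]\connn{\fA_1,\ldots,\fA_n}$, the left premise contributes the subformula $\fA_k$, while the right premise contributes $\con[\gP]\connn{\fA_1,\ldots,\fA_{k-1},\funit,\fA_{k+1},\ldots,\fA_n}$; since plugging $\emptygraph$ into the $k$-th factor of $\gP$ just deletes the corresponding vertex, this formula is $\feq$-equal to $\con[\gP\restr{S}]\connn{\fA_1,\ldots,\fA_{k-1},\fA_{k+1},\ldots,\fA_n}$ with $S=\intset1n\setminus\set k$, hence a \qsubf of $\con[\gP]\connn{\fA_1,\ldots,\fA_n}$ by definition. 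For $\unitor$, the conclusion has principal formula $\con[\gP]\connn{\fA_1,\ldots,\fA_k,\funit,\fA_{k+1},\ldots,\fA_n}$ and the premise has $\fC\connn{\fA_{\sigma(1)},\ldots,\fA_{\sigma(n)}}$ with $\gof{\fC\connn{\fA_{\sigma(1)},\ldots,\fA_{\sigma(n)}}}=\gof{\con[\gP]\connn{\fA_1,\ldots,\fA_k,\funit,\fA_{k+1},\ldots,\fA_n}}$; deleting from this graph the empty factor produced by the $\funit$ entry gives $\gP\restr{S}\connn{\gof{\fA_1},\ldots,\gof{\fA_n}}$, where $S$ collects the $n$ non-unit positions, so by the essential uniqueness of modular decomposition (\Cref{thm:conDec} and \Cref{prop:feq}) the synthetic connective $\fC$ coincides, up to $\feq$, with $\con[\gP\restr{S}]$; consequently $\fC\connn{\fA_{\sigma(1)},\ldots,\fA_{\sigma(n)}}$ is a \qsubf of the principal formula of the conclusion. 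In every rule the context formulas are carried over unchanged, so the induction closes.

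I expect the $\unitor$ case to be the main obstacle: unlike the other rules it does not merely expose subformulas of its principal formula, but genuinely replaces a connective by one of strictly smaller arity (the synthetic connective $\fC$), so one must verify that this new connective is precisely the sub-connective $\con[\gP\restr{S}]$ predicted by the definition of \qsubf. This is exactly where the uniqueness half of \Cref{thm:conDec} enters, and it is also the reason why the quasi-subformula relation --- rather than the ordinary subformula relation --- is the correct notion for the analyticity statement of $\PMLx$.
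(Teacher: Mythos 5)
Your proof is correct and follows exactly the route the paper intends (and leaves implicit): cut-elimination via \Cref{thm:cutelim}, then a rule-by-rule verification that every formula occurring in a premise is a \qsubf of a formula in the conclusion, with $\wdtr$ and $\unitor$ as the only cases genuinely requiring sub-connectives. One small caution: $\gP\restr{S}$ need not be prime, so in the $\unitor$ case the synthetic connective $\fC$ is in general the modular decomposition of the restricted graph rather than a single graphical connective $\con[\gP\restr{S}]$ from the base --- but this is how the paper's (loosely stated) definition of \qsubf has to be read anyway, and \Cref{thm:conDec} supplies the uniqueness you invoke either way.
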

%%%%%%%%%%%%%%%%%%%%%%%%%%%%%%%%%%%%%%%%%%%%%%%%%%

%%%%%%%%%%%%%%%%%%%%%%%%%%%%%%%%%%%%%%%%%%%%%%%%%%
\begin{corollary}[Conservativity]
	The logic $\PML$ is a conservative extension of $\MLL$.
	The logic $\PMLx$ is a conservative extension of $\MLLx$.
\end{corollary}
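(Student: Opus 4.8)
The plan is to obtain both statements from cut-elimination (\Cref{thm:cutelim}) together with the two analyticity corollaries, by inspecting which rules can survive in an analytic proof whose end-sequent consists only of $\MLL$- (resp.\ $\MLLx$-) formulas. The ``extension'' half of each statement is immediate: $\axrule$, $\lparr$ and $\ltensr$ are shared by the cut-free presentations of $\MLL$ and $\PML$, non-atomic axioms are derivable in $\PML$ by \Cref{lem:gAX}, and $\mixr$ together with the unit laws $\funit\lpar\fA\feq\fA\feq\funit\ltens\fA$ of $\MLLx$ are available in $\PMLx$ (the latter via $\unitor$); since $\MLL$ and $\MLLx$ are cut-free complete, this already gives $\MLL\subseteq\PML$ and $\MLLx\subseteq\PMLx$ on the relevant sequents.

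For conservativity of $\PML$ over $\MLL$, I would take a sequent $\Gamma$ of $\MLL$-formulas with $\proves[\PML]\Gamma$; by \Cref{thm:cutelim} and the analyticity corollary for $\PML$ there is a proof $\pi$ of $\Gamma$ in which every formula is a sub-formula of a formula of $\Gamma$, hence an $\MLL$-formula. Therefore no $\con[\gP]$-formula with $\gP\in\primesign\setminus\set{\lpar,\ltens}$ occurs in $\pi$, so the rules appearing in $\pi$ are exactly $\axrule$, $\lparr$, $\ltensr$ and instances of $\dconr[\lpar]$ (equivalently $\dconr[\ltens]$). Since $\Sym\lpar=\Sym\ltens=\set{\idperm,(1\,2)}$, such a $\dconr[\lpar]$ instance has principal formulas $\fA_1\lpar\fA_2$ and $\fB_1\ltens\fB_2$ with premises $\vdash\Gamma_1,\fA_1,\fB_1$ and $\vdash\Gamma_2,\fA_2,\fB_2$ (up to swapping of the components), and it is derivable in $\MLL$ by one application of $\ltensr$ (splitting the context into $\Gamma_1,\fA_1$ and $\Gamma_2,\fA_2$) followed by one application of $\lparr$. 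Replacing each such instance in $\pi$ by this two-step derivation yields a proof of $\Gamma$ in $\MLL$.

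For conservativity of $\PMLx$ over $\MLLx$ the same strategy applies, now invoking the analyticity corollary for $\PMLx$: a proof $\pi$ of $\Gamma$ may be taken to use only quasi-sub-formulas of formulas of $\Gamma$, and since $\lpar$ and $\ltens$ are binary with $\lpar\connn{\funit,v}\isym v\isym\ltens\connn{\funit,v}$, every quasi-sub-formula of an $\MLLx$-formula is, up to $\feq$, again an $\MLLx$-formula. Hence the rules in $\pi$ lie among $\axrule$, $\lparr$, $\ltensr$, $\dconr[\lpar]$, $\mixr$ and those instances of $\wdtr$ and $\unitor$ whose principal formula is a $\lpar$- or $\ltens$-formula. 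The first five are handled as before ($\mixr\in\MLLx$, $\dconr[\lpar]$ simulated as above); a $\unitor$ instance with principal $\con[\gP]\connn{\funit,\fA}$, $\gP\in\set{\lpar,\ltens}$, has conclusion $\feq$-equal to $\fA$ and hence is the corresponding unit law of $\MLLx$; a $\wdtr$ instance with principal $\fA_1\ltens\fA_2$ reduces to $\ltensr$, and one with principal $\fA_1\lpar\fA_2$ reduces to $\mixr$ followed by $\lparr$, in both cases using $\funit\ltens\fA\feq\fA$ resp.\ $\funit\lpar\fA\feq\fA$ in $\MLLx$. Replacing the rule instances in $\pi$ by these $\MLLx$-derivations produces an $\MLLx$-proof of $\Gamma$.

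I expect the only delicate point to be the $\PMLx$ case: there only the quasi-sub-formula property is available, so one must argue with care that an analytic proof of an $\MLLx$-sequent still stays inside the $\MLLx$-fragment, and that the unit-handling rules $\wdtr$ and $\unitor$ specialise (via the unit laws) to rules that are admissible in $\MLLx$; the remainder is routine bookkeeping over the finitely many rule shapes that can survive analyticity.
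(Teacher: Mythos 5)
Your proposal is correct and follows essentially the same route as the paper: cut-elimination plus the (quasi-)subformula property confines an analytic proof of an $\MLL$- (resp.\ $\MLLx$-) sequent to the corresponding fragment, and the surviving rule instances ($\dconr[\lpar]$, and for $\PMLx$ also $\wdtr$ and $\unitor$ on $\lpar$/$\ltens$-formulas) are simulable in $\MLL$ (resp.\ $\MLLx$). The paper's proof is just a terser statement of the same argument — in particular your observation that quasi-subformulas of $\MLLx$-formulas remain $\MLLx$-formulas is exactly the paper's remark that $\lpar$ and $\ltens$ have no sub-connectives.
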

\begin{proof}
	For $\PML$ it is consequence of the subformula property.
	For $\PMLx$ it suffices to remark that $\lpar$ and $\ltens$ have no sub-connectives, therefore \qsubf are simply sub-formulas.
\end{proof}
%%%%%%%%%%%%%%%%%%%%%%%%%%%%%%%%%%%%%%%%%%%%%%%%%%

%
%As consequence of analyticity, we obtain an equivalence result for $\PMLx$ and $\PMLe$ if we consider unit-free formulas only.
%
%%%%%%%%%%%%%%%%%%%%%%%%%%%%%%%%%%%%%%%%%%%%%%%%%%%
%\begin{corollary}
%	Let $\Gamma$ be a unit-free sequent.
%	Then $\proves[\PMLx]\Gamma$ iff 
%	$\proves[\PMLe]\Gamma$.
%\end{corollary}
%\begin{proof}
%	If $\Gamma$ is unit-free, then each application of a $\wdtr$ is associated to a $\unitor$ removing the occurrence of $\funit$ which has been introduced by the former rule.
%	Therefore each instance of $\sur$ can be replaced (bottom-up) by a $\wdtr$ followed by $\unitor$.
%\end{proof}
%%%%%%%%%%%%%%%%%%%%%%%%%%%%%%%%%%%%%%%%%%%%%%%%%%%

For both $\PML$ and $\PMLx$ we have the following result which takes the name of \emph{splitting} in the deep inference literature (see, e.g, \cite{gug:tub:split,gug:str:01,gug:str:02}).
This result states that is always possible, during proof search, to apply a rule removing a connective after having applied certain rules in the context.%
\footnote{
	Note that in the linear logic literature, the splitting lemma is usually formulated as a special case of the lemma here, where a $\ltens$ is removed without requiring the application of rules to the context.
}
%
%, therefore  which a rule can be applied, resulting in  splitting the context into two premises, that is, as a specific instance of this more general formulation.

%%%%%%%%%%%%%%%%%%%%%%%%%%%%%%%%%%%%%%%%%%%%%%%%%%
\begin{lemma}[Splitting]\label{lem:seqSplit}
	Let 
	$\Gamma, \con\connn{\fA_1, \ldots, \fA_n}$ be a sequent
	and 
	let
	$\XS\in\set{\PML,\PMLx}$.
	If $\proves[\XS]\Gamma,\con\connn{\fA_1, \ldots, \fA_n} $,
	then there is
%	are sequents $\Delta_1, \ldots,\Delta_n,\Gamma'$ 
%	and 
	a derivation of the following shape
	\begin{equation*}
		\vlderivation{
			\vlde{\pi_0}{}{
				\vdash \Gamma,\con\connn{\fA_1, \ldots,\fA_{k-1},\funit,\fA_{k+1}, \fA_n}
			}{
				\vlin{\unitor}{}{
					\vdash \Gamma',\con\connn{\fA_1, \ldots,\fA_{k-1},\funit,\fA_{k+1}, \fA_n}
				}{
					\vlpr{\pi_1}{}{\vdash \Gamma',\fC\connn{\fA_1, \ldots,\fA_{k-1},\fA_{k+1}, \fA_n}}
				}
			}
		}	
		\qquor
		\vlderivation{
			\vlde{\pi_0}{}{
				\vdash \Gamma,\con\connn{\fA_1, \ldots, \fA_n}
			}{\vliiin{\rho}{}{
					\vdash \Gamma',\con\connn{\fA_1, \ldots, \fA_n}
				}{
					\vlpr{\pi_1}{}{\vdash \Delta_1,\fA_1}
				}{
					\vlhy{\cdots}
				}{
					\vlpr{\pi_n}{}{\vdash \Delta_n,\fA_n}
				}
			}
		}
		\mbox{ with  $\rho \in\set{\lparr,\ltensr,\dconr}$}
	\end{equation*}
%	with $\rho \in\set{\mlparr,\mltensr,\dconr}$
%	for some proofs $\pi_1,\ldots, \pi_n$ and an open derivation $\pi_0$.
\end{lemma}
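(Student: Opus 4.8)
The plan is to reduce to cut-free proofs via \Cref{thm:cutelim} and then induct on the bottommost rule. So let $\pi$ be a cut-free proof of $\vdash\Gamma,\con\connn{\fA_1,\ldots,\fA_n}$; note that if $\XS=\PML$ the only available rules are $\axrule$, $\lparr$, $\ltensr$ and the rules $\dconr[\gP]$, so that the analysis below collapses to the genuinely easy cases. We argue by induction on the number of rule instances of $\pi$, writing $\rho'$ for its bottommost rule.

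If $\con\connn{\fA_1,\ldots,\fA_n}$ is principal in $\rho'$ and $\rho'\in\set{\lparr,\ltensr,\dconr}$, we are done with the second conclusion of the statement, taking $\pi_0$ empty and $\Gamma'=\Gamma$; and if $\rho'=\unitor$ is principal — which forces $\fA_k=\funit$ for some $k$ — we are done with the first conclusion, again with $\pi_0$ empty. If instead $\con\connn{\fA_1,\ldots,\fA_n}$ occurs only in the context of $\rho'$, then when $\rho'$ has a single premise (e.g.\ $\lparr$, $\wdpr$, or $\unitor$ applied to a formula of $\Gamma$) the formula $\con\connn{\fA_1,\ldots,\fA_n}$ sits untouched in that premise, so we apply the induction hypothesis to it and prepend $\rho'$ underneath, where it becomes the new bottommost step of $\pi_0$; and when $\rho'$ is context-splitting ($\ltensr$, $\dconr$, $\mixr$, or $\wdtr$ applied to a formula of $\Gamma$) the formula $\con\connn{\fA_1,\ldots,\fA_n}$ lies entirely in one of its premises, so we apply the induction hypothesis to the proof of that premise and prepend $\rho'$ underneath, keeping its already-closed sibling subproofs, so that once more $\rho'$ is absorbed into $\pi_0$. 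All of these steps are routine rule permutations of the kind already used in \Cref{fig:cut-elimCom}.

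The one genuinely delicate case is $\rho'=\wdtr$ with $\con\connn{\fA_1,\ldots,\fA_n}$ itself principal, say on slot $k$, so that the premises are $\vdash\Gamma_1,\fA_k$ (a closed subproof) and $\vdash\Gamma_2,\con\connn{\fA_1,\ldots,\fA_{k-1},\funit,\fA_{k+1},\ldots,\fA_n}$ with $\Gamma=\Gamma_1\uplus\Gamma_2$. Here we invoke the induction hypothesis on the (shorter) proof of the second premise, whose displayed connective now has a $\funit$-slot, and then permute all the context rules of the resulting derivation — which touch only $\Gamma_2$ — below the $\wdtr$, merging them into $\pi_0$. When the induction hypothesis returns the first shape whose final $\unitor$ removes exactly the slot $k$ vacated by $\wdtr$, the composite ``$\wdtr$ then $\unitor$ on slot $k$'' is precisely an instance of the admissible rule $\deepr$ of \Cref{fig:admRules} with context $\con\connn{\fA_1,\ldots,\fA_{k-1},\chole,\fA_{k+1},\ldots,\fA_n}$; re-expanding this $\deepr$ through the derivability of $\dconr[\fC]$ (\Cref{lem:gdcon}) and $\unitor$ brings the derivation back to one acting on $\con\connn{\fA_1,\ldots,\fA_n}$ or on a sub-connective of it of strictly smaller arity, and a secondary induction on the arity of $\con$ closes this case.

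The main obstacle is exactly this last case: managing the induction measure while $\wdtr$ repeatedly vacates slots and $\unitor$ collapses them, and checking that the various contexts ($\Gamma_1$, $\Gamma_2$, and the subproofs hanging off the $\ltensr$/$\dconr$/$\mixr$/$\wdtr$ steps along the branch carrying $\con\connn{\fA_1,\ldots,\fA_n}$) recombine into the exact shapes displayed in the statement. Everything else is a sequence of rule permutations of the kind already needed for cut-elimination.
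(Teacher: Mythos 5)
Your overall strategy --- induction on the last rule, closing immediately when $\con\connn{\fA_1,\ldots,\fA_n}$ is principal for $\lparr$, $\ltensr$, $\dconr$ or $\unitor$, and otherwise recursing into the premise carrying the formula and absorbing the bottommost rule into $\pi_0$ --- is exactly the paper's proof, which is a short case analysis of precisely this kind. (Two small remarks: the preliminary appeal to \Cref{thm:cutelim} is vacuous, since $\cutr$ is not a rule of $\PML$ or $\PMLx$ and every proof in $\XS$ is already cut-free; likewise $\wdpr$ cannot occur as a rule of $\pi$, only as an admissible rule.)

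The divergence, and the genuine gap, is in the case you yourself single out as delicate: $\wdtr$ principal on $\con\connn{\fA_1,\ldots,\fA_n}$. First, the appeal to $\deepr$ is circular: the admissibility of $\deepr$ (\Cref{lem:deepr}) is proved in the paper \emph{from} the splitting lemma, so neither that lemma nor the re-expansion performed in its proof is available here. Observing that the composite ``$\unitor$ then $\wdtr$'' is a $\deepr$ instance is harmless, but ``re-expanding'' it via \Cref{lem:gdcon} does not produce either required shape: $\dconr[\fC]$ introduces a \emph{pair} of dual formulas $\fC\connn{\cdots}$ and $\nfC\connn{\cdots}$, and in the $\wdtr$-principal case the sequent need contain no occurrence of $\ncon$ to pair with, so the derivation cannot be made to end in shape 2 (whose only rule introducing a prime connective other than $\lpar,\ltens$ is $\dconr$), nor in shape 1 unless $\fA_k=\funit$. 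Second, the ``secondary induction on the arity of $\con$'' is never set up: after a $\wdtr$/$\unitor$ collapse the formula carries a \emph{different} connective $\fC$, not $\con$ at smaller arity, and you do not state what is being proved at the smaller measure or why it terminates in one of the two displayed shapes. By contrast, the paper disposes of this case in one line by recursing into the second premise of the $\wdtr$, which still carries the head connective $\con$ (with $\funit$ in slot $k$), and treating the $\wdtr$ as absorbed below; whatever one thinks of that level of detail, your replacement for it does not close the case, so as written the proposal does not establish the lemma.
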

\begin{proof}
	By case analysis of the last rule occurring in a proof $\pi$ of $\Gamma, \con\connn{\fA_1, \ldots, \fA_n}$:
	\begin{itemize}
		\item 
		the last rule cannot be a $\axrule$ since $\con\connn{\fA_1, \ldots, \fA_n}$ contains at least one connective;
		
		\item if the last rule is a $\lparr$ or a $\unitor$, then 
		either this is the desired rule, or we conclude by inductive hypothesis on its premise;
		
		\item if the last rule is a $\mixr$, then we conclude by inductive hypothesis on the premise containing the formula $\con\connn{\fA_1, \ldots, \fA_n}$; 
		
		\item if the last rule is in $\set{\ltensr,\dconr,\wdtr,\unitor}$
		then either this is the desired rule or one of the (provable) premises of this rule is of the shape $\Gamma',\con\connn{\fA_1, \ldots, \fA_n}$,
		allowing us to conclude by inductive hypothesis.
		\qedhere
	\end{itemize}
\end{proof}
%%%%%%%%%%%%%%%%%%%%%%%%%%%%%%%%%%%%%%%%%%%%%%%%%%

We conclude this section proving the admissibility of the rule
$\wdpr$ which we use to simplify proofs in the next section.

%%%%%%%%%%%%%%%%%%%%%%%%%%%%%%%%%%%%%%%%%%%%%%%%%%
\begin{lemma}\label{lem:wdpr}
	The rule $\wdpr$ is admissible in $\PMLx$.
\end{lemma}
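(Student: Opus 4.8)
The plan is to prove, by structural induction on a proof $\pi$ of $\vdash\Gamma,\con\connn{\fA,\fB_1,\ldots,\fB_n}$ in $\PMLx$, that $\vdash\Gamma,\con\connn{\funit,\fB_1,\ldots,\fB_n},\fA$ is provable (we tacitly assume the formulas involved are not vacuous, so that the non-vacuity side-conditions of the $\unitor$ steps below are met; the degenerate cases are immediate). If the last rule of $\pi$ does not have $\con\connn{\fA,\fB_1,\ldots,\fB_n}$ as principal formula, then that formula occurs in the context of one of the premises, and I conclude by applying the induction hypothesis to that premise and re-applying the rule; this disposes of every rule — in particular $\mixr$ — except when the last rule introduces $\con\connn{\fA,\fB_1,\ldots,\fB_n}$. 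The last rule cannot be $\axrule$ (the formula has a main connective), and it can be $\unitor$ only when some argument of $\con\connn{\fA,\fB_1,\ldots,\fB_n}$ is literally $\funit$; deferring that subcase, three genuine cases remain.

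If the last rule is $\lparr$ then $\con=\lpar$, $n=1$, and the premise is $\vdash\Gamma,\fA,\fB_1$; a single application of $\unitor$ replacing $\fB_1$ by $\funit\lpar\fB_1$ — legitimate because $\gof{\funit\lpar\fB_1}=\gof{\fB_1}\neq\unit$ — yields $\vdash\Gamma,\fA,\funit\lpar\fB_1$. If the last rule is $\ltensr$ then $\con=\ltens$, $n=1$, and the premises are $\vdash\Delta_1,\fA$ and $\vdash\Delta_2,\fB_1$ with $\Gamma=\Delta_1,\Delta_2$; applying $\unitor$ to the second premise gives $\vdash\Delta_2,\funit\ltens\fB_1$, and one $\mixr$ with the first premise produces $\vdash\Delta_1,\Delta_2,\fA,\funit\ltens\fB_1$.

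The main case is that the last rule is $\dconr[\gP]$: here $\con\connn{\fA,\fB_1,\ldots,\fB_n}$ (with $\con=\con[\gP]$ of arity $n+1$, the first slot corresponding to a vertex $v_1$ of $\gP$ and holding $\fA$) is introduced together with a dual formula $\ncon\connn{\fM_1,\ldots,\fM_{n+1}}$, and — suppressing, as in \Cref{nota:noEX}, the symmetries and dualizing symmetries used by the rule — the $n+1$ premises are $\vdash\Sigma_1,\fA,\fM_1$ and $\vdash\Sigma_i,\fB_{i-1},\fM_i$ for $2\le i\le n+1$, where $\fM_1$ is the factor of the dual paired with the first slot and $\Gamma=\Sigma_1,\ldots,\Sigma_{n+1},\ncon\connn{\fM_1,\ldots,\fM_{n+1}}$. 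Let $\fC$ be a modular decomposition, read as a formula over $n$ atoms, of the induced subgraph $\gP\restr{\vP\setminus\set{v_1}}$; then $\gof{\fC\connn{\fB_1,\ldots,\fB_n}}=\gof{\con\connn{\funit,\fB_1,\ldots,\fB_n}}$, so $\fC$ and $\nfC$ are precisely the sub-connectives obtained from $\con$ and $\ncon$ by fixing to $\funit$ the first slot and the slot paired with it. Feeding the premises $\vdash\Sigma_i,\fB_{i-1},\fM_i$ ($2\le i\le n+1$) into the derivable rule $\dconr[\fC]$ of \Cref{lem:gdcon} gives $\vdash\Sigma_2,\ldots,\Sigma_{n+1},\fC\connn{\fB_1,\ldots,\fB_n},\nfC\connn{\fM_2,\ldots,\fM_{n+1}}$; two applications of $\unitor$ turn this into $\vdash\Sigma_2,\ldots,\Sigma_{n+1},\con\connn{\funit,\fB_1,\ldots,\fB_n},\ncon\connn{\funit,\fM_2,\ldots,\fM_{n+1}}$, the $\funit$ in the second connective occupying the slot paired with the first slot of $\con$. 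Finally one $\wdtr$ with principal formula $\ncon\connn{\fM_1,\ldots,\fM_{n+1}}$, acting on that slot, combines this sequent with the remaining premise $\vdash\Sigma_1,\fA,\fM_1$ — the side formula $\fA$ being carried along in the context — and yields $\vdash\Sigma_1,\ldots,\Sigma_{n+1},\ncon\connn{\fM_1,\ldots,\fM_{n+1}},\con\connn{\funit,\fB_1,\ldots,\fB_n},\fA$, which is the desired sequent.

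Two bureaucratic subcases close the argument. If the last rule is $\wdtr$ with $\con\connn{\fA,\fB_1,\ldots,\fB_n}$ as the rebuilt formula and it rebuilt the first slot, then its premises are $\vdash\Gamma_1,\fA$ and $\vdash\Gamma_2,\con\connn{\funit,\fB_1,\ldots,\fB_n}$ and a single $\mixr$ finishes; if it rebuilt a later slot (holding some $\fB_j$), its other premise $\vdash\Gamma_2,\con\connn{\fA,\ldots,\funit,\ldots,\fB_n}$ still has $\fA$ in the first slot of the same connective, so I apply the induction hypothesis there and re-apply $\wdtr$. If the last rule is $\unitor$ introducing $\con\connn{\fA,\fB_1,\ldots,\fB_n}$ with a $\funit$ among its arguments: when $\fA=\funit$ the claim amounts to appending a disconnected $\funit$ to a provable sequent, obtained by one $\unitor$ (forming a $\lpar$-node with $\funit$) followed by $\lparr$; when some $\fB_i=\funit$, the claim follows by a routine reduction to the induction hypothesis for the sub-connective in which that unit has been absorbed, restoring it by a final $\unitor$. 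I expect the only delicate point to be the $\dconr$ case, specifically the bookkeeping of the rule's symmetries and dualizing symmetries needed to identify $\fC$, $\nfC$ and the slot of $\ncon\connn{\fM_1,\ldots,\fM_{n+1}}$ that the concluding $\wdtr$ must re-attach; once the conventions of \Cref{nota:noEX} are in force this is mechanical.
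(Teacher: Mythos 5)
Your proof is correct and follows essentially the same route as the paper: the paper presents the argument as a top-down procedure of local permutation steps (\Cref{fig:wdprElim}) together with the commutative steps of \Cref{fig:cut-elimCom}, and your induction on the proof of the premise performs exactly those steps case by case — in particular your main $\dconr$ case (apply $\dconr[\fC]$ to the remaining premises, then two $\unitor$ steps, then a concluding $\wdtr$ that re-attaches $\fM_1$ while carrying $\fA$ in the context) is the paper's key reduction verbatim. The only blemish is your parenthetical handling of $\fA=\funit$ — a $\unitor$ followed by $\lparr$ builds a formula $\funit\lpar\fB$, not a sequent containing a detached $\funit$, and no rule of $\PMLx$ ever produces a bare top-level $\funit$, so that degenerate instance is genuinely not admissible — but the paper's own proof silently excludes it too, and it is irrelevant to every use of the lemma (where $\fA$ is always non-vacuous), so this does not affect the substance of your argument.
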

\begin{proof}
	In \Cref{fig:wdprElim} we providde a procedure to remove (top-down) all occurrences of $\wdpr$.
	Similar to cut-elimination, we use the commutative steps from \Cref{fig:cut-elimCom}
	to ensure that the active formula of the $\wdpr$ we want to remove is principal with respect to the rule immediately above it.
\end{proof}
%%%%%%%%%%%%%%%%%%%%%%%%%%%%%%%%%%%%%%%%%%%%%%%%%%

%%%%%%%%%%%%%%%%%%%%%%%%%%%%%%%%%%%%%%%%%%%%%%%%%%
% FIG wdpr-elim
%%%%%%%%%%%%%%%%%%%%%%%%%%%%%%%%%%%%%%%%%%%%%%%%%%
\begin{figure*}[t]
	\def\wdprred{\quad\rightsquigarrow\quad}
	\adjustbox{max width = \textwidth}{$\begin{array}{c}
		\vlderivation{
			\vlin{\wdpr}{}{
				\vdash \Gamma, \fA,\funit\lpar \fB
			}{
				\vlin{\lparr}{}{
					\vdash \Gamma, \fA\lpar \fB
				}{
					\vlpr{\pi_1}{}{\vdash \Gamma, \fA, \fB}
				}
			}
		}
		\wdprred
		\vlderivation{
			\vlin{\unitor}{}{
				\vdash \Gamma, \fA,\funit\lpar \fB
			}{
				\vlpr{\pi_1}{}{\vdash \Gamma, \fA,\fB}
			}
		}
		\qquad\qquad
		\vlderivation{
			\vlin{\wdtr}{}{
				\vdash \Gamma, \Delta, \fA,\funit\ltens \fB
			}{
				\vliin{\ltensr}{}{
					\vdash \Gamma, \Delta, \fA\ltens \fB
				}{
					\vlpr{\pi_1}{}{\vdash \Gamma, \fA}
				}{
					\vlpr{\pi_2}{}{\vdash \Delta, \fB}
				}
			}
		}
		\wdprred
		\vlderivation{
			\vlin{\unitor}{}{
				\vdash \Gamma, \Delta, \fA, \funit\ltens\fB
			}{
				\vliin{\mixr}{}{
					\vdash \Gamma, \Delta, \fA, \fB
				}{
					\vlpr{\pi_1}{}{\vdash \Gamma, \fA}
				}{
					\vlpr{\pi_2}{}{\vdash \Delta, \fB}
				}
			}
		}
		\\\\
		\vlderivation{
			\vlin{\wdpr}{}{
				\vdash \Gamma_1,\ldots, \Gamma_n, \ncon\connn{\fB_1,\ldots, \fB_n}, \con\connn{\funit, \fA_2, \ldots, \fA_n},\fA
			}{
				\vliiiin{\dconr}{}{
					\vdash \Gamma_1,\ldots, \Gamma_n, \ncon\connn{\fB_1,\ldots, \fB_n}, \con\connn{\fA, \fA_2, \ldots, \fA_n}
				}{
					\vlpr{\pi_1}{}{\vdash \Gamma_1, \fA, \fB_1 }
				}{	
					\vlpr{\pi_2}{}{\vdash \Gamma_2, \fA_2,\fB_2}
				}{\vlhy{\qquad\cdots\qquad}}{
					\vlpr{\pi_n}{}{\vdash \Gamma_n, \fA_n,\fB_n}
				}
			}
		}
		\wdprred
		\vlderivation{
			\vliin{\wdtr}{}{
				\vdash \Gamma_1,\ldots, \Gamma_n,  \ncon\connn{\fB_1,\ldots, \fB_n},
				\con\connn{\funit, \fA_2, \ldots, \fA_n},\fA
			}{
				\vlpr{\pi_1}{}{\vdash \Gamma_1, \fA, \fB_1 }
			}{
				\vlin{2\times\unitor}{}{
					\vdash \Gamma_2,\ldots, \Gamma_n, 
					\ncon\connn{\funit,\fB_1,\ldots, \fB_n},
					\con\connn{\funit, \fA_2, \ldots, \fA_n}
				}{
					\vliiin{\dconr[\fC]}{}{
						\vdash \Gamma_2,\ldots, \Gamma_n, 
						\nfC\connn{\fB_2,\ldots, \fB_n},
						\fC\connn{\fA_2, \ldots, \fA_n}
					}{
						\vlpr{\pi_2}{}{\vdash \Gamma_2, \fB_2,\fC_2}
					}{\vlhy{\qquad\cdots\qquad}}{
						\vlpr{\pi_n}{}{\vdash \Gamma_n, \fB_n,\fC_n}
					}
				}
			}
		}
		\\\\
		\vlderivation{
			\vlin{\wdpr}{}{
				\vdash \Gamma_1,\Gamma_2, \con\connn{\funit,\fB_2, \ldots, \fB_n},\fA
			}{
				\vliin{\wdtr}{}{
					\vdash \Gamma_1,\Gamma_2, \con\connn{\fA,\fB_2, \ldots, \fB_n}
				}{
					\vlpr{\pi_1}{}{\vdash \Gamma_1,\fB_{k}}
				}{
					\vlpr{\pi_2}{}{\vdash \Gamma_2, \con\connn{\fA,\fB_2, \ldots, \fB_{k-1},\funit, \fB_{k+1},\ldots, \fB_n}}
				}
			}
		}
		\wdprred
		\vlderivation{
			\vliin{\wdtr}{}{
				\vdash \Gamma, \con\connn{\funit,\fB_2, \ldots, \fB_k,\fB', \fB_{k+1},\ldots, \fB_n},\fA
			}{
				\vlpr{\pi_1}{}{\vdash \Gamma_1,\fB'}
			}{
				\vlin{\wdpr}{}{
					\vdash \Gamma_2, \con\connn{\funit,\fB_2, \ldots, \fB_{k-1},\funit, \fB_{k+1},\ldots, \fB_n},\fA
				}{
					\vlpr{\pi_2}{}{\vdash \Gamma_2, \con\connn{\fA,\fB_2, \ldots, \fB_{k-1},\funit, \fB_{k+1},\ldots, \fB_n}}
				}
			}
		}
		\\\\
		\vlderivation{
			\vlin{\wdpr}{}{
				\vdash \Gamma, \con\connn{\funit,\fB_2, \ldots, \fB_{n-1},\funit},\fA
			}{
				\vlin{\unitor}{}{
					\vdash \Gamma, \con\connn{\fA,\fB_2, \ldots, \fB_{n-1},\funit}
				}{
					\vlpr{\pi_1}{}{
						\vdash \Gamma, \fC\connn{\fA,\fB_2, \ldots, \fB_{n-1}}
					}
				}
			}
		}
		\wdprred
		\vlderivation{
			\vlin{\unitor}{}{
				\vdash \Gamma, \con\connn{\funit,\fB_2, \ldots, \fB_{n-1},\funit},\fA
			}{
				\vlin{\wdpr}{}{
					\vdash \Gamma, \fC\connn{\funit,\fB_2, \ldots, \fB_{n-1}},\fA
				}{
					\vlpr{\pi_1}{}{
						\vdash \Gamma, \fC\connn{\fA,\fB_2, \ldots, \fB_{n-1}}
					}
				}
			}
		}
	\end{array}$}
	
	\caption{Steps to eliminate $\wdpr$ rules.}
	\label{fig:wdprElim}
\end{figure*}
%%%%%%%%%%%%%%%%%%%%%%%%%%%%%%%%%%%%%%%%%%%%%%%%%%
%%%%%%%%%%%%%%%%%%%%%%%%%%%%%%%%%%%%%%%%%%%%%%%%%%

%%%%%%%%%%%%%%%%%%%%%%%%%%%%%%%%%%%%%%%%%%%%%%%%%%
\begin{lemma}\label{lem:deepr}
	The rule $\deepr$ is admissible in $\PMLx$.
\end{lemma}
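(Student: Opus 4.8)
The plan is to eliminate $\deepr$ top-down from the proof $\pi_2$ of its right premise $\proves[\PMLx]\Delta,\fB$, in the same style as the cut-elimination of \Cref{thm:cutelim} and the $\wdpr$-elimination of \Cref{lem:wdpr}. Concretely, I would argue by induction on the pair consisting of the size of the context $\fcC\conso$ and the height of $\pi_2$, ordered lexicographically, with a case analysis on the shape of $\fcC\conso$ and the last rule $\rho$ of $\pi_2$. Throughout, the generalized rule $\dconr[\fC]$ (\Cref{lem:gdcon}) and the admissible rule $\wdpr$ (\Cref{lem:wdpr}) are used freely in the re-assembly, as in \Cref{fig:cut-elimPMLx,fig:wdprElim}; I also use the auxiliary fact that $\proves[\PMLx]\Gamma_i,\fB_i$ for all $i$ implies $\proves[\PMLx]\Gamma_1,\dots,\Gamma_n,\con[\gQ]\connn{\fB_1,\dots,\fB_n}$ for any graphical connective $\gQ$ of arity $n$ (peel $\gQ$ with $\wdtr$ and $\unitor$ down to $\ltensr$, $\lparr$ and $\mixr$).

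First the degenerate case $\fcC\conso=\chole$. Then $\fcC\cons\fA=\fA$ and $\gof\fB=\gof{\fcC\consfu}=\gof\funit=\emptygraph$, so $\fB$ is vacuous, and a short sub-induction on $\pi_2$ proves the erasure statement: if $\proves[\PMLx]\Delta,\fB$ with $\fB$ vacuous then $\proves[\PMLx]\Delta$. Indeed, a rule acting on a subformula of $\fB$ either becomes trivial once the erased occurrences are deleted, or is replaced by $\mixr$ — this covers $\ltensr$, $\wdtr$ and $\dconr$ (whose dual principal formula is again vacuous and whose premises are recombined using the auxiliary fact above) — while $\unitor$ can never apply to a vacuous formula by its side condition $\neq\unit$. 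The conclusion of $\deepr$ is then obtained by $\mixr$ from $\proves[\PMLx]\Gamma,\fA$ and $\proves[\PMLx]\Delta$. The commutative cases, where $\fB$ is only a side formula of $\rho$, are immediate: permute $\deepr$ above $\rho$, plugging $\pi_1$ into the unique premise that carries $\fB$ — these are the steps of \Cref{fig:cut-elimCom} with $\fB$ in the role of the cut formula, and they strictly decrease $\pi_2$.

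The principal cases are the core. Here $\fcC\conso=\con[\gP]\connn{\fcC'\conso,\dots}$ with the hole in one factor $\fcC'\conso$ (the first, WLOG, using a symmetry of $\gP$), and $\fB$ is principal in $\rho$, which then belongs to $\set{\axrule,\lparr,\ltensr,\dconr,\wdtr,\wdpr,\unitor}$ ($\axrule$ occurs only when $\gof{\fcC\consfu}$ is a single vertex). Since $\gof\fB=\gof{\fcC\consfu}=\gP\connn{\gof{\fcC'\consfu},\dots}$, the essential uniqueness of modular decomposition (\Cref{thm:conDec}) dictates how the factors of $\fB$ regroup into the modules $\gof{\fcC'\consfu},\dots$ of $\gof\fB$, and in particular which premise(s) of $\rho$ carry the part of $\fB$ realizing the module $\gof{\fcC'\consfu}$. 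I apply the induction hypothesis with the strictly smaller context $\fcC'\conso$ to that premise (against $\pi_1$), and then re-apply $\rho$, possibly after inserting $\wdtr$ or $\unitor$ steps to reconcile units placed differently in $\fB$ than in $\fcC\consfu$. When $\fcC'\conso=\chole$ the invoked instance of the induction hypothesis degenerates to a single $\wdtr$ (a primitive rule), which is the base of the descent into $\fcC$.

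The real obstacle is precisely this bookkeeping: $\fB$ and $\fcC\consfu$ need only coincide as graphs, not as formulas, so units may be absorbed into connectives (as in the $\unitor$-equalities $\con[\gP\restr{\dots}]=\con[\gQ]$) or floated out (as in $\wdpr$), and the same graph is realized by several non-principal modular decompositions because of associativity of $\lpar$, $\ltens$ and symmetries of graphical connectives. Matching $\rho$'s decomposition of $\fB$ to the fixed decomposition $\gof{\fcC\consfu}=\gP\connn{\dots}$ — in particular when $\rho=\wdpr$ or $\rho=\dconr$, whose principal formulas come in pairs — is where \Cref{thm:conDec} and the side conditions of $\wdtr$ and $\unitor$ do the work; every remaining subcase is a routine adaptation of the corresponding step from \Cref{fig:cut-elimPML,fig:cut-elimPMLx,fig:wdprElim}.
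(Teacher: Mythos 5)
Your proposal is correct and is, at its core, the same argument as the paper's: an induction on the size of the context $\fcC\conso$, with the base case $\fcC\conso=\con\connn{\chole,\fB'_1,\ldots,\fB'_n}$ discharged by a $\unitor$ on the right premise followed by $\wdtr$, and the inductive step obtained by exposing, inside the proof of $\vdash\Delta,\fB$, the rule that creates the factor of $\fB$ sitting at the hole's position, applying the induction hypothesis to that premise with the strictly smaller context $\fcC'\conso$, and reassembling. The one genuine difference is how that exposure is achieved: the paper invokes the already-established Splitting lemma (\Cref{lem:seqSplit}), which packages exactly the rule-permutation analysis you carry out by hand, so its induction is a single descent on $\sizeof{\fcC\conso}$ with no lexicographic measure and no commutative cases; your version is more self-contained but re-derives that analysis and must therefore confront directly the unit bookkeeping (a $\ltensr$ or $\wdtr$ whose principal formula has a vacuous factor, a $\unitor$ absorbing a unit) that the two output shapes of Splitting already normalize away. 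Two smaller remarks. First, the paper excludes the case $\fcC\conso=\chole$ outright (its proof opens with the assumption $\fcC\consfu\neq\funit$), so your erasure lemma for vacuous $\fB$ is unnecessary overhead; note also that, as you state it, it is delicate when $\Delta$ is empty, since the empty sequent is not provable in $\PMLx$ --- though that situation cannot arise from a provable premise. Second, your auxiliary recombination fact is exactly the rule $\conr$ shown derivable in $\PMLx$ in the appendix, so it is safe to use. Nothing in your sketch would fail, but the shortest route is the paper's: prove Splitting once and let it absorb all the commutative cases.
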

\begin{proof}
	Since $\fcC\consfu\neq\funit$, 
	then
	w.l.o.g., 
	$\fcC\conso=\con\connn{\fcC'\conso,\fB'_1, \ldots,\fB'_n}$.
	If $\fcC'\conso=\chole$, then w.l.o.g.,
	$\fB=\fC\connn{\fB'_1, \ldots,\fB'_n}$
	and we conclude since we have 
	$$
	\vlderivation{
		\vliin{\wdpr}{}{
			\vdash \Gamma, \Delta, \con\connn{\fA,\fB'_1, \ldots,\fB'_n}
		}{
			\vlhy{\vdash \Gamma, \fA}
		}{
			\vlin{\unitor}{}{
				\vdash \Delta, \con\connn{\funit,\fB'_1, \ldots,\fB'_n}
			}{
				\vlhy{\vdash \Delta, \fC\connn{\fB'_1, \ldots,\fB'_n}}
			}
		}
	}
	$$
	Otherwise
	we conclude by inductive hypothesis on the size of $\fcC\conso$
	since by \Cref{lem:seqSplit} we can define a derivation of the form 
	$$
	\vlderivation{
		\vlde{\pi_0}{}{
			\vdash \Gamma,\Delta, \con\connn{\fcC\cons{\fA},\fB_1, \ldots,\fB_{k-1},\funit,\fB_{k+1}\ldots \fB_n}
		}{\vlin{\unitor}{}{
				\vdash \Gamma',\Delta',\con\connn{\fcC\cons{\fA},\fB_1, \ldots,\fB_{k-1},\funit,\fB_{k+1}\ldots \fB_n}
			}{
				\vlpr{}{\IH}{\vdash \Gamma',\Delta',\fC\connn{\fcC\cons{\fA},\fB_1, \ldots, \fB_n}}
			}
		}
	}	
	\quor
	\vlderivation{
		\vlde{\pi_0}{}{
			\vdash \Gamma,\Delta, \con\connn{\fcC\cons{\fA},\fB'_1, \ldots,\fB'_n}
		}{\vliiiin{\rho}{}{
				\vdash \Gamma',\Delta',\con\connn{\fcC\cons{\fA},\fB'_1, \ldots,\fB'_n}
			}{
				\vlpr{}{\IH}{\vdash \Gamma', \Delta_0, \fcC\cons{\fA}}
			}{
				\vlpr{\pi_1}{}{\vdash \Delta_1,\fB'_1}
			}{
				\vlhy{\cdots}
			}{
				\vlpr{\pi_n}{}{\vdash \Delta_n,\fB'_n}
			}
		}
	}
	$$
	with $\rho\in\set{\lparr,\ltensr,\dconr}$.
\end{proof}
%%%%%%%%%%%%%%%%%%%%%%%%%%%%%%%%%%%%%%%%%%%%%%%%%%

%%%%%%%%%%%%%%%%%%%%%%%%%%%%%%%%%%%%%%%%%%%%%%%%%%%
%\begin{theorem}\label{thm:PMLdeep}
%	Let $\Gamma$ be a sequent of unit-free formulas.
%	Then 
%	$\proves[\PML\cup\set{\deepr}]\Gamma$ iff $\proves[\PMLx]\Gamma$.
%\end{theorem}
%\begin{proof}
%	The left-to-right implication follows from \Cref{lem:deepr}.
%	To prove the converse, it suffices to remark that any if $\Gamma$ is unit-free, then each $\unitor$-rule is uniquely associated to a $\wdtr$-rule.
%	Any such pair of $\unitor$ and $\wdtr$ can be replaced by a $\deepr$.
%\end{proof}
%%%%%%%%%%%%%%%%%%%%%%%%%%%%%%%%%%%%%%%%%%%%%%%%%%%

%%%%%%%%%%%%%%%%%%%%%%%%%%%%%%%%%%%%%%%%%%%%%%%%%%
%%%%%%%%%%%%%%%%%%%%%%%%%%%%%%%%%%%%%%%%%%%%%%%%%%
\subsection{Graph Isomorphism as Logical Equivalence}\label{subsec:sound}
%%%%%%%%%%%%%%%%%%%%%%%%%%%%%%%%%%%%%%%%%%%%%%%%%%
%%%%%%%%%%%%%%%%%%%%%%%%%%%%%%%%%%%%%%%%%%%%%%%%%%

In this sub-section we prove that two formulas $\phi$ and $\psi$ are interpreted by a same graph (i.e., $\gof\phi=\gof\phi$) iff they are logically equivalent (i.e., $\phi\limpeq\psi$).
For this purpose, we show that all equivalence and De Morgan laws from \Cref{def:feq} can be reformulated as logical equivalences.

We first prove that connectives symmetries are derivable in $\PML$.
%%%%%%%%%%%%%%%%%%%%%%%%%%%%%%%%%%%%%%%%%%%%%%%%%%
\begin{lemma}\label{lem:iso}
	The following rules are admissible in $\PML$.
	\begin{equation}
		\vlidf{\symr}{
			\text{\scriptsize$\sigma\in\Sym{\gQ}$}
		}{
			\vdash\Gamma, \con\connn{\fA_{\sigma(1)}, \ldots, \fA_{\sigma(n)}}
		}{
			\vdash\Gamma, \con\connn{\fA_1, \ldots, \fA_n}
		}
		\qquad
		\vlidf{\dsymr}{
			\text{\scriptsize$\rho\in\Dsym{\gQ}$}
		}{
			\vdash\Gamma, \con\connn{\fA_{\rho(1)}, \ldots, \fA_{\rho(n)}}
		}{
			\vdash\Gamma,\cneg\con\connn{\fA_1, \ldots, \fA_n}
		}
	\end{equation}
\end{lemma}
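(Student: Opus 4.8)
The plan is to derive both rules uniformly by exhibiting an explicit $\PML$-derivation that builds the target sequent from the premise using a single $\dconr$-instance (together with $\AXrule$, which is derivable by \Cref{lem:gAX}). First I would treat $\symr$. Suppose $\proves[\PML]\Gamma, \con\connn{\fA_1,\dots,\fA_n}$ and let $\sigma\in\Sym\gQ$. Recall that, by definition of $\Sym\gQ$, the identity $\gQ\connn{a_1,\dots,a_n}\greq\gQ\connn{a_{\sigma(1)},\dots,a_{\sigma(n)}}$ holds for single-vertex graphs, so the connective $\con$ and its dual $\ncon$ are unchanged under reindexing by $\sigma$. The idea is to cut the formula $\con\connn{\fA_1,\dots,\fA_n}$ against an instance of the derivable axiom $\vdash \con\connn{\cneg\fA_1,\dots,\cneg\fA_n}, \con\connn{\fA_{\sigma(1)},\dots,\fA_{\sigma(n)}}$, which we obtain from $\dconr[\con]$ applied to the $n$ premises $\vdash \cneg\fA_i, \fA_i$ (each derivable by \Cref{lem:gAX}), choosing the symmetry $\sigma$ on one side and the identity on the other. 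Composing with $\cutr$ yields $\vdash \Gamma, \con\connn{\fA_{\sigma(1)},\dots,\fA_{\sigma(n)}}$, and since $\cutr$ is admissible in $\PML$ by \Cref{thm:cutelim}, this gives the admissibility of $\symr$. (Alternatively, one avoids $\cutr$ entirely by applying $\dconr[\con]$ directly, feeding the premise $\vdash \Gamma, \con\connn{\fA_1,\dots,\fA_n}$ into the combined step together with the axioms $\vdash \cneg\fA_i,\fA_i$; I expect this direct route to be the cleaner one to write down.)

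For $\dsymr$ the argument is the same, but now using the defining property of a dualizing symmetry: for $\rho\in\Dsym\gQ$ we have $\cneg{\left(\gQ\connn{a_1,\dots,a_n}\right)}\greq\gQ\connn{\cneg a_{\rho(1)},\dots,\cneg a_{\rho(n)}}$, i.e.\ $\cneg{\con\connn{\fA_1,\dots,\fA_n}}$ and $\con\connn{\cneg\fA_{\rho(1)},\dots,\cneg\fA_{\rho(n)}}$ denote the same formula up to $\feq$ (equivalently, are interpreted by the same graph). Thus starting from $\proves[\PML]\Gamma, \cneg\con\connn{\fA_1,\dots,\fA_n}$ we rewrite the principal formula as $\con[\gQ']\connn{\cneg\fA_{\rho(1)},\dots,\cneg\fA_{\rho(n)}}$ where $\gQ'$ is the connective dual to $\gQ$ — but by the $\Dsym$ condition $\gQ'=\gQ$ — and then apply the $\symr$ case (or a single $\dconr$ step) to reach $\vdash \Gamma, \con\connn{\fA_{\rho(1)},\dots,\fA_{\rho(n)}}$ with the labels negated back. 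Concretely I would again build the derivable axiom $\vdash \cneg\con\connn{\fA_1,\dots,\fA_n}, \con\connn{\fA_{\rho(1)},\dots,\fA_{\rho(n)}}$ from $\dconr$ with the symmetries $\mathsf{id}$ and $\rho$ on the two sides (this is legitimate precisely because $\rho$ is a dualizing symmetry, so it lies in $\Sym{\cneg\con}$ after the De Morgan rewriting), and cut.

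The only genuinely delicate point is bookkeeping: making sure that the permutation chosen on the "$\con$-side" of the $\dconr$ step and the one chosen on the "$\ncon$-side" together produce exactly the desired reindexing of the $\fA_i$, and that both permutations genuinely lie in the relevant symmetry groups $\Sym\con$, $\Sym{\ncon}$ (for $\symr$) resp.\ are compatible with the De Morgan identification (for $\dsymr$). This is a routine but fiddly check using the definitions of $\Sym\gQ$ and $\Dsym\gQ$ and \Cref{rem:compVia}; I do not expect any conceptual obstacle, since $\dconr$ is designed exactly to absorb such symmetries, and $\cutr$-admissibility does the rest.
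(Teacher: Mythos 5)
Your proposal is correct and follows essentially the same route as the paper: derive the implication $\con\connn{\fA_1,\ldots,\fA_n}\limp\con\connn{\fA_{\sigma(1)},\ldots,\fA_{\sigma(n)}}$ (resp.\ its dualizing analogue for $\rho\in\Dsym\gQ$) by a single $\dconr$ instance over $\AXrule$ premises with the symmetries chosen appropriately, then compose with the premise by $\cutr$ and invoke \Cref{thm:cutelim}. The only slip is notational: the auxiliary sequent you cut against must carry the \emph{dual} connective on the negated side, i.e.\ $\vdash\ncon\connn{\cneg\fA_1,\ldots,\cneg\fA_n},\con\connn{\fA_{\sigma(1)},\ldots,\fA_{\sigma(n)}}$ (up to the De Morgan permutation), which is exactly the shape $\dconr$ produces, whereas your parenthetical cut-free variant does not match the premise format of $\dconr$ and should be dropped.
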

\begin{proof}
	By \Cref{thm:cutelim}, it suffices to prove that the following implications are derivable.
	$$
	\underbrace{
		\begin{array}{l}
			\con\connn{\fA_1, \ldots, \fA_n }\limp \con\connn{\fA_{\sigma(1)}, \ldots, \fA_{\sigma(n)}}	
			\\
			\con\connn{\fA_{\sigma(1)}, \ldots, \fA_{\sigma(n)}} \limp \con\connn{\fA_1, \ldots, \fA_n}
	\end{array}}_{\mbox{ for all }\sigma\in\Sym{\gQ}}
	\qquand
	\underbrace{
		\begin{array}{l}
			\con\connn{\fA_1, \ldots, \fA_n }\limp \ncon\connn{\fA_{\rho(1)}, \ldots, \fA_{\rho(n)}}
			\\
			\ncon\connn{\fA_{\rho(1)}, \ldots, \fA_{\rho(n)}}\limp \con\connn{\fA_1, \ldots, \fA_n}
	\end{array}}_{\mbox{ for all }\tau\in\Dsym{\gQ}}
	$$
	These are easily derivable using an instance of $\dconr$ and $\AXrule$-rules.
\end{proof}
%%%%%%%%%%%%%%%%%%%%%%%%%%%%%%%%%%%%%%%%%%%%%%%%%%

%%%%%%%%%%%%%%%%%%%%%%%%%%%%%%%%%%%%%%%%%%%%%%%%%%
\begin{remark}
	The rule $\symr[\lpar]$ is derivable directly because sequents are sets if occurrences of formulas, therefore the order of the occurrences of the formulas in a sequent is not relevant, and we can permute this order before applying the rule $\lpar$. 
	This because the interpretation of the meta-connective comma we use to separate formulas in a sequent is the same  of $\lpar$.
	
	Similarly, the rule $\symr[\ltens]$ is derivable because in our sequent system, as in standard sequent calculus, the order of the premises of the rules is not relevant.
	Said differently, the space between branches in a derivation is a commutative meta-connective which is internalized by the $\ltensr$.
\end{remark}
%%%%%%%%%%%%%%%%%%%%%%%%%%%%%%%%%%%%%%%%%%%%%%%%%%

Similarly we can prove that the associativity of $\lpar$ and $\ltens$ is derivable.

%%%%%%%%%%%%%%%%%%%%%%%%%%%%%%%%%%%%%%%%%%%%%%%%%%
\begin{lemma}\label{lem:asso}
	The following rules are admissible.
	
	\begin{equation}
		\adjustbox{max width = .9\textwidth}{$
			\vlidf{\passr}{}{
				\vdash\Gamma, \fA_1\lpar(\fA_2\lpar \fA_3)
			}{
				\vdash\Gamma, (\fA_1\lpar \fA_2)\lpar \fA_3
			}
			\qquad
			\vlidf{\tassr}{}{
				\vdash\Gamma, \fA_1\ltens(\fA_2\ltens \fA_3)
			}{
				\vdash\Gamma, (\fA_1\ltens \fA_2)\ltens \fA_3
			}
			$}
	\end{equation}
\end{lemma}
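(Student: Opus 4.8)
The plan is to mirror the proof of \Cref{lem:iso}: by the admissibility of $\cutr$ (\Cref{thm:cutelim}) it suffices to exhibit derivations of the associativity implications, since then a single $\cutr$ on the principal formula of $\passr{}$ (resp.\ $\tassr{}$) turns a proof of the rule's premise into a proof of its conclusion. Concretely, for $\passr{}$ I would derive $(\fA_1\lpar\fA_2)\lpar\fA_3\limp\fA_1\lpar(\fA_2\lpar\fA_3)$, that is, $\vdash (\cneg{\fA_1}\ltens\cneg{\fA_2})\ltens\cneg{\fA_3},\,\fA_1\lpar(\fA_2\lpar\fA_3)$, together with the converse implication; for $\tassr{}$ one derives the symmetric pair with the roles of $\lparr$ and $\ltensr$ exchanged. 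Only one direction is strictly needed to justify each rule as stated, but both hold, which is what is wanted for the logical-equivalence reading of this subsection.

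Each such implication is obtained in a uniform way. Reading the derivation from the conclusion upwards: apply $\lparr$ twice to the right-hand formula to reach $\vdash (\cneg{\fA_1}\ltens\cneg{\fA_2})\ltens\cneg{\fA_3},\,\fA_1,\fA_2,\fA_3$; then apply $\ltensr$ to the $\ltens$-formula, sending $\fA_1,\fA_2$ to the branch carrying $\cneg{\fA_1}\ltens\cneg{\fA_2}$ and $\fA_3$ to the branch carrying $\cneg{\fA_3}$; a further $\ltensr$ splits the first branch into $\vdash \cneg{\fA_1},\fA_1$ and $\vdash \cneg{\fA_2},\fA_2$. The three leaves $\vdash \cneg{\fA_i},\fA_i$ are then closed by \Cref{lem:gAX}. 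The converse direction, and both $\tassr{}$-implications, follow by the very same pattern after dualizing (i.e.\ swapping which of the two formulas is decomposed by $\lparr$ and which by $\ltensr$).

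I do not expect any genuine obstacle here; the argument is routine sequent manipulation, and the resulting derivation lives already in $\PML$ (hence also in $\PMLx$). The only points deserving a word of care are that $\ltensr$ is the sole rule partitioning the side formulas, so the context split must be chosen exactly as above for the leaves to match, and that those leaves are instances of the generalized axiom of \Cref{lem:gAX} rather than of $\axrule$, since the $\fA_i$ need not be literals. For $\passr{}$ one can in fact dispense with $\cutr$ entirely, as $\lparr$ collapses both $(\fA_1\lpar\fA_2)\lpar\fA_3$ and $\fA_1\lpar(\fA_2\lpar\fA_3)$ to the same sequent $\vdash\Gamma,\fA_1,\fA_2,\fA_3$; the cut-based presentation is kept only because the same handful of lines then also settles $\tassr{}$.
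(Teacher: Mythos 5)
Your proposal is correct and follows exactly the paper's route: derive the four associativity implications (via $\lparr$, $\ltensr$ and the generalized axiom of \Cref{lem:gAX}) and conclude by \Cref{thm:cutelim}, which is all the paper's own one-line proof does. The only caveat is your closing aside that $\passr{}$ needs no cut because $\lparr$ ``collapses'' both formulas to $\vdash\Gamma,\fA_1,\fA_2,\fA_3$: applied to an already-given proof of the premise this tacitly uses invertibility of $\lparr$, which the paper never establishes, so it is safer to keep the cut-based argument you give as primary.
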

\begin{proof}
	The result follows by \Cref{thm:cutelim} after showing that following implications hold:
	$$\begin{array}{r@{\;\limp\;}l@{\qquad}r@{\;\limp\;}l}
		\fA_1\lpar(\fA_2\lpar \fA_3)
		&
		(\fA_1\lpar \fA_2)\lpar \fA_3
		&
		(\fA_1\lpar \fA_2)\lpar \fA_3
		&
		\fA_1\lpar(\fA_2\lpar \fA_3)
		\\
		\fA_1\ltens(\fA_2\ltens \fA_3)
		&
		(\fA_1\ltens \fA_2)\ltens \fA_3
		&
		(\fA_1\ltens \fA_2)\ltens \fA_3
		&
		\fA_1\ltens(\fA_2\ltens \fA_3)
	\end{array}$$
\end{proof}
%\begin{remark}
%	This lemma allows us to forget about the side condition for the rule $\mlparr$ and $\mltensr$ requesting the principal formula to be compact.
%\end{remark}
%%%%%%%%%%%%%%%%%%%%%%%%%%%%%%%%%%%%%%%%%%%%%%%%%%

We can therefore immediately conclude that $\PML$ is sound and complete with respect to graph isomorphism if we consider unit-free formulas.
%%%%%%%%%%%%%%%%%%%%%%%%%%%%%%%%%%%%%%%%%%%%%%%%%%
\begin{proposition}\label{prop:feq:PML}
	Let $\fA$ and $\fB$ be unit-free formulas.
	Then $\fA\feq\fB$ iff $\proves[\PML]\fA\limpeq \fB$.
\end{proposition}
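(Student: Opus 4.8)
The plan is to prove the two directions of the biconditional separately. The implication $\fA\feq\fB\Rightarrow\proves[\PML]\fA\limpeq\fB$ will follow by showing that provable logical equivalence in $\PML$ contains all the generating equations of $\feq$ and is a congruence; the converse will be reduced, via \Cref{prop:feq}, to the soundness statement $\proves[\PML]\fA\limpeq\fB\Rightarrow\gof\fA\greq\gof\fB$.

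For the first direction, I would first record that the relation ``$\proves[\PML]\fA\limpeq\fB$'' on formulas is an equivalence relation: reflexivity follows from the derivability of $\AXrule$ (\Cref{lem:gAX}) together with $\lparr$ and $\ltensr$; symmetry is immediate from the definition of $\limpeq$; and transitivity follows from \Cref{cor:impTrans}, applied to both component implications and recombined with $\ltensr$. Next I would show it is a congruence: viewing a context $\fcC\conso$ as a synthetic connective (its hole becoming one argument position, and the associated formula being pure since the ambient formulas are unit-free), applying the derivable generalized rule $\dconr[\fC]$ of \Cref{lem:gdcon} with the inductive hypothesis $\fA\limp\fB$ as the premise at the hole position and instances of $\axrule$ at the other positions yields $\fcC\cons{\fA}\limp\fcC\cons{\fB}$, and symmetrically $\fcC\cons{\fB}\limp\fcC\cons{\fA}$; $\ltensr$ then combines the two into $\fcC\cons{\fA}\limpeq\fcC\cons{\fB}$. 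Finally I would check that each defining equation of $\feq$ from \Cref{def:feq} is itself a provable $\limpeq$: the connective-symmetry laws and the dualizing-symmetry / De Morgan laws are exactly the admissible rules $\symr$ and $\dsymr$ of \Cref{lem:iso}, read in both directions (the freedom in the permutation used to define linear negation being absorbed by an extra $\symr$), the two associativity laws are \Cref{lem:asso}, and $\cnegneg\fA\feq\fA$ is trivial since both sides are literally the same formula. As $\feq$ is the congruence generated by these equations, chaining the finitely many rewriting steps relating $\fA$ to $\fB$ yields $\proves[\PML]\fA\limpeq\fB$.

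For the converse, assume $\proves[\PML]\fA\limpeq\fB$. A $\PML$ proof of $\vdash(\fA\limp\fB)\ltens(\fB\limp\fA)$ necessarily ends with $\ltensr$ — the only rule of $\PML$ whose principal formula can have main connective $\ltens$ — so $\proves[\PML]\fA\limp\fB$ and $\proves[\PML]\fB\limp\fA$. By \Cref{prop:feq} it now suffices to prove $\gof\fA\greq\gof\fB$. To this end I would read off, from a $\PML$ proof of $\fA\limp\fB$, its axiom linking on the literal occurrences. Since $\PML$ has no $\mixr$ rule, no axiom link can join two literal occurrences of the same formula — such an internal link, together with the tree structure of that formula, would force a switching cycle — so the linking is a label-preserving bijection between the literal occurrences of $\fA$ and those of $\fB$, that is, between the vertex sets of $\gof\fA$ and $\gof\fB$. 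The remaining point, which I expect to be the real work, is to show that this bijection also preserves the edge relation, hence is an isomorphism $\gof\fA\greq\gof\fB$; this uses the proof-net ($\RB$-graph) correctness criterion for $\PML$-provability, applied to the linkings of both $\fA\limp\fB$ and $\fB\limp\fA$.

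The main obstacle is thus the soundness direction: the completeness direction is essentially bookkeeping on top of \Cref{lem:iso} and \Cref{lem:asso}, whereas for soundness one must genuinely show that a $\PML$-provable implication already constrains the edge relations of the two graphs, and that having such an implication in both directions pins them down to a single isomorphism — an argument that must go through the correctness criterion rather than through the sequent calculus alone.
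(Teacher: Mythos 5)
Your first direction (completeness: $\fA\feq\fB$ implies $\proves[\PML]\fA\limpeq\fB$) is correct and is exactly the paper's argument: an induction on the formulas in which each generating equation of $\feq$ is discharged by \Cref{lem:iso} and \Cref{lem:asso}, congruence comes from $\dconr[\fC]$ together with $\AXrule$, and transitivity from \Cref{cor:impTrans}. So far this matches the paper, which proves the proposition ``by induction on the formulas using \Cref{lem:asso,lem:iso}.''

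The converse direction is where the proposal has a genuine gap, and you are right to flag it as the real work --- but the argument you sketch does not go through as stated. First, the claim that in the absence of $\mixr$ no axiom link can join two literal occurrences of the same formula is false: the sequent $\vdash \cneg c,\; c\ltens(a\lpar\cneg a)$ is provable in $\MLL$ without $\mixr$ (split the tensor into $\vdash \cneg c, c$ and $\vdash a,\cneg a$), and the link $a$--$\cneg a$ is internal to the second formula; your switching-cycle argument fails precisely because the least common ancestor of the two linked literals is a $\lpar$, so only one of the two tree edges survives any switching. Consequently the linking extracted from a single proof of $\fA\limp\fB$ need not be a bijection between the literals of $\fA$ and those of $\fB$, and you would have to exploit \emph{both} implications before even getting a vertex bijection, let alone edge preservation. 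Second, the step you defer to ``the proof-net ($\RB$-graph) correctness criterion for $\PML$-provability'' relies on machinery the paper does not provide --- proof nets for $\PML$ are explicitly listed as future work --- so the soundness half of the biconditional remains unproven in your proposal. (For what it is worth, the paper's own one-line proof is equally silent on this direction: \Cref{lem:iso,lem:asso} only yield completeness. But as submitted, your ``if'' direction rests on a false lemma and an unavailable correctness criterion, so the biconditional is not established.)
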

\begin{proof}
	By induction on the formulas $\fA$ and $\fB$ using \Cref{lem:asso,lem:iso}.
\end{proof}
%%%%%%%%%%%%%%%%%%%%%%%%%%%%%%%%%%%%%%%%%%%%%%%%%%

For a stronger result on general formulas, we need to show that for any two formulas $\fA$ and $\fB$ are interpreted (via $\gof\cdot$) by the same non-empty graph, 
both these formulas are equivalent to a unit-free formula $\fC$ representing the modular decomposition of this graph via graphical connectives.

%%%%%%%%%%%%%%%%%%%%%%%%%%%%%%%%%%%%%%%%%%%%%%%%%%
\begin{theorem}\label{thm:impISsound}
	Let $\fA$ and $\fB$ be pure formulas.
		Then $\fA\feq\fB$ iff $\proves[\PMLx]\fA\limpeq \fB$.
%	If $\gof\fA=\gof\fB\neq\unit$, then $\fA$ and $\fB$ are equivalent in $\PMLx$, that is, $\fA\limpeq \fB$ is valid in $\PMLx$.
\end{theorem}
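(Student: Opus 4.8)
The plan is to prove the two implications separately, with a normalization step reducing everything to the unit-free fragment already settled by \Cref{prop:feq:PML}. For the forward direction, suppose $\fA\feq\fB$. Then $\gof\fA\greq\gof\fB$ by \Cref{prop:feq}, and this graph is non-empty since a pure formula is non-vacuous. I would first prove a \emph{normalization lemma}: for every pure formula $\fM$ there is a unit-free formula $\can\fM$ with $\gof{\can\fM}\greq\gof\fM$ and $\proves[\PMLx]\fM\limpeq\can\fM$. Granting this, $\can\fA$ and $\can\fB$ are unit-free formulas with the same graph, hence $\can\fA\feq\can\fB$ by \Cref{prop:feq}, hence $\proves[\PML]\can\fA\limpeq\can\fB$ by \Cref{prop:feq:PML}, hence $\proves[\PMLx]\can\fA\limpeq\can\fB$ since $\PML\subseteq\PMLx$; splitting each equivalence into its two implications, composing by transitivity of $\limp$ (\Cref{cor:impTrans}), and recombining with a final application of $\ltensr$ gives $\proves[\PMLx]\fA\limpeq\fB$.

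To prove the normalization lemma I would recurse on $\fM$, with measure (number of occurrences of $\funit$, size) ordered lexicographically. If $\fM$ is unit-free, take $\can\fM\coloneqq\fM$; the equivalence is initial coherence (\Cref{thm:gAX}). If $\fM=\con[\gP]\connn{\fB_1,\dots,\fB_m}$ and no $\fB_k$ equals $\funit$, then each $\fB_k$ is pure (a subformula of a pure formula is pure or equal to $\funit$), so $\proves[\PMLx]\fB_k\limpeq\can{\fB_k}$ by induction; then $\proves[\PMLx]\fM\limpeq\con[\gP]\connn{\can{\fB_1},\dots,\can{\fB_m}}$ by functoriality of $\con[\gP]$, which follows from the sub-implications by a single application of $\dconr$ whose side premises are exactly the unfoldings of $\fB_k\limp\can{\fB_k}$, so no unit-slot obstruction occurs at this level. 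If instead some $\fB_j=\funit$, then (not all $\fB_k$ being $\funit$, as $\fM$ is pure) the graph $\gof\fM$ is $\gP$ with one vertex deleted and the remaining modules composed, which by \Cref{thm:conDec} equals $\gof{\fC\connn{\fB_1,\dots,\fB_{j-1},\fB_{j+1},\dots,\fB_m}}$ for a unit-free synthetic connective $\fC$; applying $\unitor$ once in each direction — with $\symr$, $\dsymr$ from \Cref{lem:iso} to absorb the bookkeeping permutations and $\AXrule$ for the residual (pure) formula — yields $\proves[\PMLx]\fM\limpeq\fC\connn{\fB_1,\dots,\fB_{j-1},\fB_{j+1},\dots,\fB_m}$, and the right-hand side is again pure with strictly fewer occurrences of $\funit$, so the induction hypothesis applies to it.

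For the backward direction, $\proves[\PMLx]\fA\limpeq\fB$ gives $\proves[\PMLx]\fA\limp\fB$ and $\proves[\PMLx]\fB\limp\fA$, since any proof of $\vdash\fG\ltens\fH$ with $\fG,\fH$ distinct from $\funit$ must reduce to proofs of $\vdash\fG$ and $\vdash\fH$. This is the soundness half: I would show, by induction on cut-free proofs (\Cref{thm:cutelim}), that a $\PMLx$-proof of $\vdash\cneg\fG\lpar\fH$ realizes a relation $\gof\fG\preceq\gof\fH$ between graphs which is a preorder that is antisymmetric up to isomorphism on finite graphs, so that $\gof\fA\preceq\gof\fB\preceq\gof\fA$ forces $\gof\fA\greq\gof\fB$, i.e. $\fA\feq\fB$. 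I expect the main obstacle to be exactly this backward direction: identifying the graph relation realized by $\PMLx$-provable implication and proving it antisymmetric is a self-contained fragment of soundness that must be carried out here, before the $\GS$-correspondence of \Cref{sec:GS} is available; in the forward direction the only delicate point is the bookkeeping in the $\unitor$ step — checking that the vertex-deletion graph is named by a genuine synthetic connective with the permutations of $\unitor$ correctly matched, and that the rewrite preserves purity while strictly decreasing the unit count.
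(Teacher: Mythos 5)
Your forward direction is essentially the paper's proof: the same normalization of a pure formula $\fM$ to a unit-free $\can\fM$ with $\gof{\can\fM}=\gof\fM$ and $\proves[\PMLx]\fM\limpeq\can\fM$ (literal case; a single $\dconr[\fC]$ for functoriality when no argument is $\funit$; $\unitor$ in both directions when some argument is $\funit$, with the induction driven by the number of occurrences of $\funit$), followed by \Cref{prop:feq}, \Cref{prop:feq:PML}, and transitivity of $\limpeq$ via \Cref{cor:impTrans}. That half, including the bookkeeping you worry about in the $\unitor$ case, matches the paper.

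The genuine gap is your backward direction. You propose to prove from scratch that $\PMLx$-provable implication realizes a preorder $\preceq$ on graphs that is antisymmetric up to isomorphism, but you neither identify the relation nor give any argument for antisymmetry --- and that claim \emph{is} the entire content of this direction; it amounts to a soundness theorem for the logic, which is not a routine induction on cut-free proofs. The paper does not do this. It closes the backward direction with the very reduction you already set up: from $\proves[\PMLx]\fA\limpeq\fB$ together with $\proves[\PMLx]\fA\limpeq\can\fA$ and $\proves[\PMLx]\fB\limpeq\can\fB$, transitivity gives $\proves[\PMLx]\can\fA\limpeq\can\fB$ for the unit-free normal forms, and then the right-to-left direction of \Cref{prop:feq:PML} (stated there as a biconditional for unit-free formulas) yields $\can\fA\feq\can\fB$, hence $\gof\fA=\gof{\can\fA}=\gof{\can\fB}=\gof\fB$ and so $\fA\feq\fB$. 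The only residue is the mild mismatch that \Cref{prop:feq:PML} speaks of $\PML$-provability while transitivity lands you in $\PMLx$; if you want to be more careful than the paper at that point, that is the (small, localized) place to invest effort --- not in a free-standing antisymmetry argument, which as written is a promissory note rather than a proof.
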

\begin{proof}
	Given any formula $\fA$, we can define
	by induction on the number of units $\funit$ occurring in 
	a unit-free formula $\fA'$ such that $\fA\limpeq\fA'$.
	\begin{itemize}
		\item if $\fA$ is a literal, then $\fA'=\fA$;
		\item if $\fA=\con\connn{\fA_1,\ldots, \fA_n}$ and $\fA_i\neq\funit$ for all $i\in\intset1n$, then $\fA'=\con\connn{\fA_1',\ldots, \fA_n'}$.
		Otherwise, w.l.o.g., we assume $\fA_i=\funit$ and we let $\fA'=\fC\connn{\fA_2',\ldots, \fA_n'}$ for a compact formula  $\fC$ such that 
		$\gof{\con\connn{\funit,\fA_2,\ldots,\fA_n}}
		=
		\gof{\fC\connn{\fA_2,\ldots, \fA_n}}$
		and we conclude by inductive hypothesis since we the following derivations:
		$$
		\hskip-2em
		\vlderivation{
			\vlin{\lparr}{}{
				\vdash
				\ncon\connn{\funit,\nfA_2,\ldots,\nfA_n}
				\lpar
				\fC\connn{\fA'_2,\ldots, \fA'_n}
			}{
				\vlin{\unitor}{}{
					\vdash
					\ncon\connn{\funit,\nfA_2,\ldots,\nfA_n}
					,
					\fC\connn{\fA'_2,\ldots, \fA'_n}
				}{
					\vliiin{\dconr[\fC]}{}{
						\vdash
						\nfC\connn{\nfA_2,\ldots,\nfA_n}
						,
						\fC\connn{\fA'_2,\ldots, \fA'_n}
					}{
						\vlpr{\IH}{}{\vdash \nfA_2,\fA'_2}
					}{\vlhy{\cdots}}{
						\vlpr{\IH}{}{\vdash \nfA_n,\fA'_n}
					}
				}
			}
		}
		\quand
		\vlderivation{
			\vlin{\lparr}{}{
				\vdash
				\nfC\connn{\fA'_2,\ldots, \cneg{\fA'}_n}
				\lpar
				\con\connn{\funit,\fA_2,\ldots,\fA_n}
			}{
				\vlin{\unitor}{}{
					\vdash
					\nfC\connn{\cneg{\fA'}_2,\ldots, \cneg{\fA'}_n}
					,
					\con\connn{\funit,\fA_2,\ldots,\fA_n}
				}{
					\vliiin{\dconr[\fC]}{}{
						\vdash
						\nfC\connn{\funit,\cneg{\fA'}_2,\ldots, \cneg{\fA'}_n}
						,
						\fC\connn{\funit,\fA_2,\ldots,\fA_n}
					}{
						\vlpr{\IH}{}{\vdash \cneg{\fA'}_2,\fA_2}
					}{\vlhy{\cdots}}{
						\vlpr{\IH}{}{\vdash \cneg{\fA'}_n,\fA_n}
					}
				}
			}
		}
		$$
		
	\end{itemize}

	Therefore we can find unit-free formulas $\fA'$ and $\fB'$ such that 
	$\fA\limpeq\fA'$ and $\fB\limpeq\fB'$.
	Moreover, by definition of $\gof\cdot$ and the rule $\unitor$  we have $\gof\fA'=\gof\fA=\gof\fB=\gof\fB'$.
	We conclude by \Cref{prop:feq:PML} and the transitivity of $\limpeq$.
\end{proof}
%%%%%%%%%%%%%%%%%%%%%%%%%%%%%%%%%%%%%%%%%%%%%%%%%%

%%%%%%%%%%%%%%%%%%%%%%%%%%%%%%%%%%%%%%%%%%%%%%%%%%
%%%%%%%%%%%%%%%%%%%%%%%%%%%%%%%%%%%%%%%%%%%%%%%%%%
%%%%%%%%%%%%%%%%%%%%%%%%%%%%%%%%%%%%%%%%%%%%%%%%%%
\section{Soundness and Completeness of $\PMLx$ with respect to $\GS$}\label{sec:GS}
%%%%%%%%%%%%%%%%%%%%%%%%%%%%%%%%%%%%%%%%%%%%%%%%%%
%%%%%%%%%%%%%%%%%%%%%%%%%%%%%%%%%%%%%%%%%%%%%%%%%%
%%%%%%%%%%%%%%%%%%%%%%%%%%%%%%%%%%%%%%%%%%%%%%%%%%

In this section we prove that set of graphs which are derivable in the graphical logic $\GS$ from \cite{acc:hor:str:LICS2020,acc:LMCS} 
is the same set of graph corresponding to formulas which are provable in $\PMLx$.

In \Cref{fig:DIrules} we recall the definition of the rules of the deep inference system%
\footnote{
	The definition of deep inference systems operating on graphs can be found in \cite{acc:LMCS} or in  \Cref{app:deep}.
}
$\GS=\Set{\aidr,\sdr,\sur,\pdr}$.

%%%%%%%%%%%%%%%%%%%%%%%%%%%%%%%%%%%%%%%%%%%%%%%%%%
\begin{remark}\label{rem:GS}
	At the syntactical level, the system $\GS$ operates on graphs by manipulating their
	modular decompositions trees.
	Therefore, for any graph occurring in a derivation in $\GS$ we assume a unique formula $\fof\gG$ to be given.
	Note that in $\GS$ the authors allow themselves to consider modular decomposition trees in which leaves may be empty graphs, corrisponding to formulas with unit.
\end{remark}
%%%%%%%%%%%%%%%%%%%%%%%%%%%%%%%%%%%%%%%%%%%%%%%%%%

%%%%%%%%%%%%%%%%%%%%%%%%%%%%%%%%%%%%%%%%%%%%
%%%%			Figure rules
%%%%%%%%%%%%%%%%%%%%%%%%%%%%%%%%%%%%%%%%%%%%
\begin{figure*}[t!]
	\centering
	\adjustbox{max width=\textwidth}{$
		\begin{array}{c@{\quad}c@{\quad}c@{\quad}c}
			\vlinf{\aidr}{}{\cneg a\lpar a}{\unit}
			&
			\vlinf{\pdr}{}
			{\nP\connn{\gM_1,\ldots,\gM_n}\lpar \gP\connn{\gM'_1,\ldots,\gM'_n}}
			{(M_1\lpar N_1)\ltens \cdots \ltens\, (\gM_n\lpar \gM'_n)}
			\\\\
			\vlinf{\sdr}{}{\gM_i \lpar \gP\connn{\gM_1, \ldots, \gM_{i-1}, \gN,\gM_{i+1}, \ldots, \gM_n}}{\gP\connn{\gM_1, \ldots, \gM_{i-1}, \gM_i\lpar \gN, \gM_{i+1},\ldots \gM_n}}
			&
			\vlinf{\sur}{}{\gP\connn{\gM_1,\ldots, \gM_{i-1},\gM_i \ltens \gN,\gM_{i+1}, \ldots, \gM_n}}{\gM_i\ltens\gP\connn{\gM_1,\ldots,\gM_{i-1}, \gN, \gM_{i+1},\ldots, \gM_n}}
		\end{array}
		$}
	\caption{Inference rules for the system $\GS$, where $\gP$ is a prime graph and $\gM_i\neq\unit\neq\gM_i'$ for all $i\in\intset1n$.}
	\label{fig:DIrules}
\end{figure*}
%%%%%%%%%%%%%%%%%%%%%%%%%%%%%%%%%%%%%%%%%%%%

%%%%%%%%%%%%%%%%%%%%%%%%%%%%%%%%%%%%%%%%%%%%
\begin{remark}
	The set of rules we consider here is a slightly different formulation of with respect to \cite{acc:hor:str:LICS2020} and \cite{acc:LMCS}:
	we consider a $\prule$-rules with a stronger side condition (all factors to be non-empty)
	which is balanced by the presence of $\sur$ in the system.
	The proof that the formulation we consider in this paper is equivalent to the ones in the literature is provided in \Cref{app:GS}.
\end{remark}
%%%%%%%%%%%%%%%%%%%%%%%%%%%%%%%%%%%%%%%%%%%%

We can easily prove that each sequent provable in $\PMLx$ is interpreted by $\gof\cdot$ as a graph which is admitting a proof in $\GS$.

%%%%%%%%%%%%%%%%%%%%%%%%%%%%%%%%%%%%%%%%%%%%%%%%%%
\begin{lemma}[label=lem:GStoSEQ]
	Let $\Gamma$ be a sequent.
	If $\proves[\PMLx]\Gamma$,
	then $\proves[\GS]\gof{\Gamma}$.
\end{lemma}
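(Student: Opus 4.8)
The plan is to proceed by induction on the structure of a cut-free proof $\pi$ of $\Gamma$ in $\PMLx$, translating each sequent rule into a corresponding derivation step in $\GS$ acting on the modular decomposition of $\gof{\Gamma}$. The guiding idea is that a $\PMLx$-sequent $\vdash \fA_1, \ldots, \fA_k$ should be read as the single graph $\gof{\fA_1 \lpar \cdots \lpar \fA_k}$, so a proof of $\Gamma$ will be turned into a $\GS$-derivation of $\gof{\fA_1 \lpar \cdots \lpar \fA_k}$ from the empty graph $\unit$ (recall from \Cref{rem:GS} that $\GS$ is allowed to manipulate modular decomposition trees with empty leaves). Since $\cutr$ is admissible in $\PMLx$ by \Cref{thm:cutelim}, and $\wdpr$ is admissible by \Cref{lem:wdpr}, I may assume $\pi$ uses only the rules $\set{\axrule, \lparr, \ltensr, \dconr[\gP], \mixr, \wdtr, \unitor}$.

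\textbf{First} I would set up the statement in the stronger inductive form: if $\vdash \fA_1, \ldots, \fA_k$ is provable, then $\proves[\GS] \gof{\fA_1 \lpar \cdots \lpar \fA_k}$. The base case is $\axrule$: a proof of $\vdash a, \cneg a$ maps to $\cneg a \lpar a$, which is exactly the conclusion of $\aidr$ from $\unit$. For $\lparr$, the premise and conclusion have literally the same $\gof\cdot$-image (the comma and $\lpar$ both become the stable-set composition), so nothing needs to be done. The interesting cases are the ones where a connective is split or a unit manipulated:
\begin{itemize}
\item For $\ltensr$, the conclusion graph is $\gof{\Gamma} \ltens \gof{\fA \ltens \fB} \ltens \gof{\Delta}$ read modularly; from the two induction hypotheses I get $\GS$-proofs of $\gof{\Gamma} \lpar \gof{\fA}$ and $\gof{\fB} \lpar \gof{\Delta}$, and I need to combine them. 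This is where one uses the fact that $\GS$ (as $\GVsl$ restricted to undirected graphs) is closed under the relevant "tensor-in-context" reasoning; concretely one reassembles the two derivations and applies $\sur$ to push the tensor outwards.
\item For $\dconr[\gP]$, each premise gives a $\GS$-proof of $\Gamma_i \lpar \fA_{\sigma(i)} \lpar \fB_{\tau(i)}$, and the target is $\nP\connn{\ldots}\lpar\gP\connn{\ldots}$ in context; this is precisely the shape handled by the $\pdr$ rule read bottom-up, modulo the symmetries $\sigma \in \Sym\gP$, $\tau\in\Sym{\cneg\gP}$, which are harmless since $\GS$ acts on graphs rather than formulas.
\item For $\mixr$, the conclusion graph is the disjoint union $\gof{\Gamma_1}\lpar\gof{\Gamma_2}$, obtained from the two sub-derivations by placing them side by side in the same modular-decomposition tree under a stable-set node.
\item For $\wdtr$ and $\unitor$, the premises and conclusion are related precisely by (instances of) $\sur$ and the observation that composing via $\gP$ with one factor equal to $\unit$ collapses to composition via the sub-connective $\gP\restr{\ldots}$; here the side conditions of $\wdtr$ and $\unitor$ (involving $\gof\cdot$ and purity) guarantee that the resulting tree is well-formed.
\end{itemize}

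\textbf{The main obstacle} I expect is the $\ltensr$ (and, relatedly, $\wdtr$) case: passing from separate $\GS$-proofs of two sequents to a $\GS$-proof of their "tensor in context" requires knowing that $\GS$-provability is preserved under putting a proved graph into an arbitrary module position and under the weak-distributivity moves $\sur$, which is exactly what the rules $\sur$ and $\pdr$ are designed to provide but needs to be spelled out carefully against the deep-inference formalism of \cite{acc:LMCS} (or \Cref{app:deep}). I would isolate this as an auxiliary lemma — "if $\proves[\GS]\gof{\fC}$ and $\cC\conso$ is a context formula with $\gof{\cC\consfu}$ a valid tree, then $\proves[\GS]\gof{\cC\cons{\fC}}$ up to the relevant $\GS$-reductions" — prove it by a straightforward induction on $\cC\conso$ using $\sdr$ and $\sur$, and then the $\ltensr$, $\wdtr$, $\mixr$ cases all reduce to it. Everything else is bookkeeping about how $\gof\cdot$ interacts with the comma, with $\lpar$, and with sub-connectives, which \Cref{prop:feq} and \Cref{def:formulas} already make routine.
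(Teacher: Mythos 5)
Your proposal is correct and follows essentially the same route as the paper: an induction on the last rule of the (already cut-free) $\PMLx$ proof, translating $\axrule$ to $\aidr$, $\lparr$ and $\unitor$ to identities on graphs, $\ltensr$ and $\dconr$ to instances of $\pdr$, $\mixr$ to juxtaposition under a $\lpar$, and $\wdtr$ to $\pdr$ followed by $\sur$ (see \Cref{fig:GMLtoGS}). The ``main obstacle'' you isolate — placing a proved graph into a module position — is not an issue needing a separate lemma, since the open deduction formalism of \Cref{app:deep} builds closure under contexts directly into the definition of derivation via $\gG\connn{\dD_1,\dots,\dD_n}$ and $\cC\cons{\dD}$.
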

\begin{proof}
	\def\transto{\;\rightsquigarrow\;}
	We define a derivation $\gof\pi$ of $\gof \Gamma$ in $\GS$ 
	by induction by induction on the last rule $\rrule$ in a derivation $\pi$ of $\Gamma$ in $\PMLx$ 
	according to \Cref{fig:GMLtoGS}.
\end{proof}
%%%%%%%%%%%%%%%%%%%%%%%%%%%%%%%%%%%%%%%%%%%%%%%%%%

%%%%%%%%%%%%%%%%%%%%%%%%%%%%%%%%%%%%%%%%%%%%%%%%%%
\begin{figure}[t]
	\adjustbox{max width = \textwidth}{$\begin{array}{c}
		\vlinf{\axrule}{}{\vdash a,\cneg a}{}
		\transto
		\odiso{\odn{\unit}{\aidr}{a\lpar \cneg a}{}}{\gof{a,\cneg a}}
		\qquad
		\vlderivation{\vlin{\lparr}{}{\vdash \Delta, \fA\lpar \fB}{\vldr{\pi_1}{\vdash \Delta, \fA,\fB}}}
		\transto
		\odiso{
			\od{\odp{\gof{\pi_1}}{\gof{\Delta, \fA,\fB}}{\IH}}
		}{
			\gof{\Delta, \fA\lpar\fB}
		}
		\qquad
		\vlderivation{
			\vliin{\ltensr}{}{
				\vdash\Delta_1, \Delta_2,\fA\ltens\fB
			}{
				\vldr{\pi_1}{\vdash\Delta_1, \fA}
			}{
				\vldr{\pi_n}{\vdash\Delta_2, \fB}
			}	
		}
		\transto
		\odiso{
			\odn{
				\od{\odp{\gof{\pi_1}}{\odiso{\gof{\Delta_1,  \fA}}{\gof{\Delta_1} \lpar \gof{\fA}}}{\IH}}
				\ltens
				\od{\odp{\gof{\pi_2}}{\odiso{\gof{\Delta_2,\fB}}{\gof{\Delta_2} \lpar \gof\fB}}{\IH}}
			}{\pdr}{
				\gof{\Delta_1} \lpar \gof{\Delta_2}
				\lpar 
				\left(\fA\ltens\fB \right)
			}{}
		}{
			\gof{\Delta_1 , \Delta_2, \fA \ltens\fB }
		}
		\\\\
		\vlderivation{\vliiin{\dconr}{}{\vdash \Delta_1,\ldots,\Delta_n, \fP\connn{\fA_1, \ldots, \fA_n},\nfP\connn{\fB_1, \ldots, \fB_n}}
			{\vldr{\pi_1}{\vdash \Delta_1, \fA_1, \fB_1}}{\vlhy{\qquad\dots\qquad}}{\vldr{\pi_n}{\vdash \Delta_n, \fA_n, \fB_n}}
		}
		\quad \transto\quad
		\odiso{\odn{
				\od{\odp{\dD_{\pi_1}}{
						\odiso{\gof{\Delta_1,\fA_1,\fB_1}}{\gof{\Delta_1} \lpar (\gof{\fA_1}\lpar \gof{\fB_1})}}{\IH}}
				\ltens\cdots\ltens
				\od{\odp{\dD_{\pi_n}}{
						\odiso{\gof{\Delta_n,\fA_n,\fB_n}}{\gof{\Delta_1} \lpar (\gof{\fA_n}\lpar \gof{\fB_n})}}{\IH}}
			}{\pdr}{
				\mlpar\connn{\gof{\Delta_1}, \ldots ,\gof{\Delta_n}}
				\lpar
				\odn{
					(\gof{\fA_1}\lpar \gof{\fB_1})\ltens \cdots\ltens (\gof{\fA_n}\lpar \gof{\fB_n})
				}{\pdr}{
					\gP\connn{\gof{\fA_1}, \ldots, \gof{\fA_n}}
					\lpar
					\nP\connn{\gof{\fB_1}, \ldots, \gof{\fB_n}}
				}{}
			}{}
		}{
			\gof{\Delta_1,\ldots, \Delta_n,\fP\connn{\fA_1, \ldots,  \fA_n},\nfP\connn{\fB_1, \ldots, \fB_n}}
		}
		\\\\
		\vlderivation{\vliin{\mixr}{}{\vdash \Delta_1,\Delta_2}{\vldr{\pi_1}{\vdash \Delta_1}}{\vldr{\pi_2}{\vdash \Delta_2}}}
		\transto
		\odiso{
			\ods{\unit}{\gof{\pi_1}}{\gof{\Delta_1}}{\IH}
			\lpar
			\ods{\unit}{\gof{\pi_1}}{\gof{\Delta_2}}{\IH}
		}{\gof{\Delta_1,\Delta_2}}
		\qquad
		\vlinf{\unitor}{}{
			\vdash \Gamma, \con\connn{\fA_1,\ldots, \fA_k,\funit,\fA_{k+1},\ldots,\fA_n}
		}{
			\vdash \Gamma,\fC\connn{\fA_{\sigma(1)},\ldots, \fA_{\sigma(n)}}
		}
		\transto
		\odiso{
			\gof{\vdash \Gamma,\fC\connn{\fA_{\sigma(1)},\ldots, \fA_{\sigma(n)}}}
		}{
			\gof{\Gamma, \con\connn{\fA_1,\ldots, \fA_k,\funit,\fA_{k+1},\ldots,\fA_n}}
		}
		\\\\
		%
%		\vlinf{\funit}{}{\vdash\funit,\funit}{}
%		\transto
%		\unit
%		%
%		\qquad
		%
		\vlderivation{\vliin{\wdtr}{}{
				\vdash \Delta_1,\Delta_2, \fP\connn{\fA_1, \ldots, \fA_n}
			}{
				\vldr{\pi_1}{\vdash \Delta_1, \fA_1}
			}{
				\vldr{\pi_2}{\vdash \Delta_n, \fC\connn{\fA_2,\ldots, \fA_n}}
			}
		}
		\quad \transto\quad
		\odiso{
			\odn{
				\od{\odp{\dD_{\pi_1}}{\odiso{\gof{\Delta_1,\fA_1}}{\gof{\Delta_1} \lpar \gof{\fA_1} }}{\IH}}
				\ltens
				\od{\odp{\dD_{\pi_2}}{\odiso{\gof{\Delta_2,\fC\connn{\fA_2,\ldots, \fA_n}}}{\gof{\Delta_1} \lpar \gof{\fC\connn{\fA_2,\ldots, \fA_n}}}}{\IH}}
			}{\pdr}{
				\left(\gof{\Delta_1}\lpar \gof{\Delta_2}\right)
				\lpar 
				\odiso{\gof{\fA_1}\ltens \gof{\fC\connn{\fA_2, \ldots, \fA_n}}}{
					\odn{\gof{\fA_1}\ltens \gof{\fC}\connn{\gof{\fA_2}, \ldots, \gof{\fA_n}}}{\sur}{
						\gP\connn{\gof{\fA_1}, \ldots,  \gof{\fA_n}}
					}{}
				}
			}{}
		}{
			\gof{\Delta_1, \Delta_2,\fP\connn{\fA_1, \ldots,  \fA_n}} 
		}
	\end{array}$}
	\caption{Rules to translate derivations in $\PMLx$ into derivations in $\GS$.}
	\label{fig:GMLtoGS}
\end{figure}
%%%%%%%%%%%%%%%%%%%%%%%%%%%%%%%%%%%%%%%%%%%%%%%%%%

To prove the converse, we use the admissibility of $\wdpr$ to prove that 
every time there is a rule in $\GS$ with premise $\gH$ and conclusion $\gG$, 
then there are formulas $\fA$ and $\fB$ such that $\gof\fA$ and $\gof\fB$, and such that $\fB\limp \fA$.
%%%%%%%%%%%%%%%%%%%%%%%%%%%%%%%%%%%%%%%%%%%%%%%%%%
\begin{lemma}\label{lem:DrulesCons}
	Let $\rrule\in\set{\sdr,\sur,\pdr}$.
	If $\vlinf{\rrule}{}{\gG}{\gH}$,
	then there are formulas 
	$\fA$ and $\fB$ with
	$\gof\fA={\gG}$ and $\gof\fB={\gH}$
	such that 
	$\proves[\PMLx]\cneg\fB, \fA$.
\end{lemma}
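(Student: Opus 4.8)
The plan is to analyze each of the three deep inference rules $\sdr$, $\sur$, $\pdr$ separately, using \Cref{rem:GS} to fix, for each graph $\gG$ and $\gH$ appearing in the rule, the formula $\fof\gG$ (resp. $\fof\gH$) determined by its modular decomposition tree, and then exhibit an explicit $\PMLx$-derivation of $\cneg\fB,\fA$ where $\fB=\fof\gH$ and $\fA=\fof\gG$. In every case the identity $\gof\fA=\gG$ and $\gof\fB=\gH$ holds by construction of $\fof\cdot$, so the real content is producing the derivation. Since $\cneg\fB,\fA$ is, read as an implication, exactly $\fB\limp\fA$, and since we have cut-elimination (\Cref{thm:cutelim}), transitivity of $\limp$ (\Cref{cor:impTrans}), the admissible rules $\wdpr$ (\Cref{lem:wdpr}) and $\deepr$ (\Cref{lem:deepr}), derivability of $\dconr[\fC]$ (\Cref{lem:gdcon}), and derivability of $\AXrule$ (\Cref{lem:gAX}), we have ample tools to build these derivations compositionally.

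First I would handle $\pdr$, which is the interesting case: here $\gH=(\gM_1\lpar\gN_1)\ltens\cdots\ltens(\gM_n\lpar\gN_n)$ and $\gG=\nP\connn{\gM_1,\ldots,\gM_n}\lpar\gP\connn{\gN_1,\ldots,\gN_n}$ (matching the names in \Cref{fig:DIrules}). Taking $\fB=(\fof{\gM_1}\lpar\fof{\gN_1})\ltens\cdots\ltens(\fof{\gM_n}\lpar\fof{\gN_n})$ and $\fA=\con[\nP]\connn{\fof{\gM_1},\ldots,\fof{\gM_n}}\lpar\con[\gP]\connn{\fof{\gN_1},\ldots,\fof{\gN_n}}$, the sequent $\cneg\fB,\fA$ unfolds — after applying $\lparr$ to $\fA$ and then repeatedly $\ltensr$ to split $\cneg\fB=(\cneg{\fof{\gM_1}}\ltens\cneg{\fof{\gN_1}})\lpar\cdots$, i.e. applying $\lparr$ to the $\cneg\fB$ side and then distributing via $\ltensr$ — to a single application of $\dconr$ with premises $\vdash\cneg{\fof{\gM_i}},\cneg{\fof{\gN_i}},\fof{\gM_i},\fof{\gN_i}$ for each $i$, each of which is closed by two instances of $\AXrule$ (\Cref{lem:gAX}). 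I expect this to be the main obstacle, mostly bookkeeping: getting the context splitting in the $\ltensr$ applications to match the $n$-ary branching of the $\dconr$ rule, and being careful that $\con[\gP]$ and $\con[\nP]$ are genuinely dual connectives in the base $\primesign$ so that $\dconr[{\con[\gP]}]$ applies with the right symmetry — this is where one must invoke the uniqueness of the base and the definition of linear negation in \Cref{def:feq}.

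Next I would do $\sur$: here $\gH=\gM_i\ltens\gP\connn{\gM_1,\ldots,\gM_{i-1},\gN,\gM_{i+1},\ldots,\gM_n}$ and $\gG=\gP\connn{\gM_1,\ldots,\gM_{i-1},\gM_i\ltens\gN,\gM_{i+1},\ldots,\gM_n}$. This is precisely the weak-distributivity shape, so the derivation of $\cneg\fB,\fA$ is obtained by applying $\wdtr$ (with the $k$-th slot being the one holding $\gM_i\ltens\gN$), reducing to premises $\vdash\cneg{\fof{\gM_i}},\fof{\gM_i}$ (closed by $\AXrule$) and a sequent that, modulo $\unitor$, reduces to $\dconr[{\con[\gP]}]$ against $\con[\nP]$ with $\AXrule$ leaves — or more directly via \Cref{lem:deepr}, since $\gG$ is obtained from $\gH$ by plugging into a context. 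The case $\sdr$ is the mirror image: its premise/conclusion shape is the one handled by $\wdpr$ (a $\lpar$ pushed out of a connective slot), so admissibility of $\wdpr$ (\Cref{lem:wdpr}) together with $\symr[\lpar]$-style reordering gives the derivation directly, again bottoming out in $\dconr$ plus $\AXrule$. Finally I note that the $\aidr$ case is not required by the statement (it only asks for $\rrule\in\set{\sdr,\sur,\pdr}$), and in all three cases the side condition $\gM_i\neq\unit\neq\gM_i'$ guarantees the formulas $\fof{\gM_i}$ are unit-headed-or-literal in the right way for $\dconr$ to apply without $\unitor$ obstructions, though a few $\unitor$ steps may still be needed when the prime graph $\gP$ has nontrivial interaction with a trivial factor.
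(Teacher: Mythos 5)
Your shallow-case analysis follows the paper's proof in spirit ($\pdr$ via $\dconr$ plus $\AXrule$, $\sdr$ via $\wdpr$, $\sur$ via a weak-distributivity/$\deepr$-style argument), but there is a genuine gap: you never treat the case where the $\GS$-rule fires \emph{inside a context}. Since $\GS$ is a deep inference system, an instance of $\rrule\in\set{\sdr,\sur,\pdr}$ with premise $\gH$ and conclusion $\gG$ in general has the form $\gG=\cC\cons{\gG'}$ and $\gH=\cC\cons{\gH'}$ for an arbitrary context $\cC\conso$, and this is exactly how the lemma is invoked in \Cref{thm:GSisISOMIX}. The paper handles this by induction on $\cC\conso$: writing $\cC\conso=\con[\gP]\connn{\cC'\conso,\gM_1,\ldots,\gM_n}$, it closes the inductive step with a single $\dconr$ whose first premise is the inductive hypothesis $\vdash \cneg{\left(\fcC'\cons{\fB'}\right)},\fcC'\cons{\fA'}$ and whose remaining premises are instances of $\AXrule$. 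Without this ``functoriality of contexts'' step your argument only establishes the lemma for top-level rule applications.

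Two smaller points on the shallow cases. In $\pdr$ you cannot apply $\ltensr$ to the formulas $\cneg\fM_i\ltens\cneg\fN_i$ \emph{before} the $\dconr$: the tensor rule would have to split the sequent, separating $\cneg\fN_i$ from the only formula, $\fP\connn{\fN_1,\ldots,\fN_n}$, that can prove against it. The correct order is $\dconr$ first, with $i$-th premise $\vdash\cneg\fM_i\ltens\cneg\fN_i,\fM_i,\fN_i$ (the context $\Gamma_i$ being the single tensor formula), followed by $\ltensr$ and $\AXrule$ inside each branch. Likewise your $\wdtr$-based sketch for $\sur$ does not typecheck (its first premise would be $\vdash\Gamma,\fM_i\ltens\fN$, not $\vdash\cneg\fM_i,\fM_i$); your fallback via $\deepr$ does work and is legitimate (that lemma is proved earlier and independently), whereas the paper instead chooses $\fB=\fM_i\ltens\con\connn{\ldots,\funit\ltens\fN,\ldots}$ --- exploiting that only $\gof\fB=\gH$ is required, so units may be inserted freely --- and works on the dual side with $\wdpr$ and $\AXrule$.
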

\begin{proof}
	We first discuss the case if $\cC\conso=\chole$:
	\begin{itemize}	
		\item 
		if $\rrule=\sdr$, then 
		$\fA=\fM_i\lpar \con\connn{\fM_1, \ldots, \fM_{i-1}, \funit\lpar\fN, \fM_{i+1},\ldots \fM_n}$ 
		and 
		$\fB=\con\connn{\fM_1, \ldots, \fM_{i-1}, \fM_i\lpar \fN, \fM_{i+1},\ldots \fM_n}$ 
		for some formulas $\mu_1,\ldots,\mu_n,\nu$ such that $\gof{\fM_i}=\gM_i$ for all $i\in\intset1n$ and $\gof\fN=\gN$.
		We conclude by \Cref{lem:gAX,lem:wdpr} since we have the following derivation
		$$\vlderivation{
			\vlin{\lpar}{}{
				\vdash
				\cneg\fB, \fA
			}{\vlin{\wdpr}{}{
					\vdash
					\cneg{\fB}, \fM_i , \con\connn{\fM_1, \ldots, \fM_{i-1}, \funit \lpar \fN,\fM_{i+1}, \ldots, \fM_n}
				}{
					\vlin{\AXrule}{}{
						\vdash\cneg\fB, \con\connn{\fM_1, \ldots, \fM_{i-1}, \fM_i \lpar \fN,\fM_{i+1}, \ldots, \fM_n}}{\vlhy{}}
			}}
		}
		$$
		
		\item
		if $\rrule=\sur$ then 
		$\fA=\con\connn{\fM_1, \ldots, \fM_{i-1}, \fM_i\ltens\fN, \fM_{i+1},\ldots \fM_n}$ 
		and 
		$\fB= \fM_i\ltens \con\connn{\fM_1, \ldots, \fM_{i-1}, \funit\ltens \fN, \fM_{i+1},\ldots \fM_n}$ 
		for some formulas $\mu_1,\ldots,\mu_n,\nu$ such that $\gof{\fM_i}=\gM_i$ for all $i\in\intset1n$ and $\gof\fN=\gN$.
		We conclude by \Cref{lem:gAX,lem:wdpr} since we have the following derivation
		$$
		\vlderivation{
			\vlin{\lpar}{}{
				\vdash
				\cneg\fB, \fA
			}{\vlin{\cpr}{}{
					\vdash
					\nfM_i, \ncon\connn{\nfM_1, \ldots, \nfM_{i-1},  \funit\lpar\nfN, \nfM_{i+1},\ldots \nfM_n}
					,
					\fA
				}{
					\vlin{\AXrule}{}{
						\vdash
						\ncon\connn{\nfM_1, \ldots, \nfM_{i-1}, \nfM_i\lpar\nfN, \nfM_{i+1},\ldots \nfM_n}
						,
						\fA
					}{\vlhy{}}
			}}
		}
		$$

		\item
		if $\rrule=\pdr$ then 
		$\fA=\nfP\connn{\fM_1,\ldots,\fM_n}\lpar \fP\connn{\fN_1,\ldots,\fN_n}$
		and
		$\cneg\fB=(\cneg\fM_1\ltens \cneg\fN_1)\lpar \cdots \lpar(\cneg\fM_n\ltens \cneg\fN_n)$
		for some formulas 
		$\mu_1,\ldots,\mu_n,\nu_1,\ldots, \nu_n$ such that $\gof{\fM_i}=\gM_i\neq\unit$ and $\gof{\fN_i}=\gN_i\neq\unit$ for all $i\in\intset1n$.
		We conclude since we have the following derivation
		
		$$
		\vlderivation{
			\vliq{\lpar}{}{
				(\cneg\fM_1\ltens \cneg\fN_1)
				\lpar \cdots \lpar
				(\cneg\fM_n\ltens \cneg\fN_n)
				,
				\fA
			}{
				\vliiin{\dconr}{}{
					\vdash 
					(\cneg\fM_1\ltens \cneg\fN_1)
					, \ldots ,
					(\cneg\fM_n\ltens \cneg\fN_n),
					\fA
				}{
					\vliin{\ltens}{}{\vdash\cneg\fM_1\ltens \cneg\fN_1 , \fM_1,\fN_1}{\vlin{\AXrule}{}{\vdash \fM_1 ,\cneg\fM_1}{\vlhy{}}}{\vlin{\AXrule}{}{\vdash \fN_1 ,\cneg\fN_1}{\vlhy{}}}
				}{
					\vlhy{\;\cdots\;}
				}{
					\vliin{\ltens}{}{\vdash\cneg\fM_n\ltens \cneg\fN_n , \fM_n,\fN_n}{\vlin{\AXrule}{}{\vdash \fM_n ,\cneg\fM_n}{\vlhy{}}}{\vlin{\AXrule}{}{\vdash \fN_n ,\cneg\fN_n}{\vlhy{}}}
				}
			}
		}
		$$
		
	\end{itemize}

	If 
	$\cC\conso=\con[\gP]\connn{\cC'\conso, \gM_1, \ldots, \gM_n}\neq\chole$, then we assume w.l.o.g., 
	there is a context formula
	$\fcC\conso=\fP\connn{\fcC'\conso, \fM_1, \ldots, \fM_n}$
	such that
	$\gof{\fcC\conso}=\cC\conso$ and $\gof{\fcC'\conso}=\cC'\conso$ .
	We conclude since, by inductive hypothesis on the structure of $\cC\conso$, there is a derivation of the following form:
	$$
	\vlderivation{
		\vliiiin{\dconr}{}{
			\vdash 
			\nfP\Connn{\cneg{\left(\fcC'\cons{\fB'}\right)}, \nfM_1, \ldots, \nfM_n}
			,
			\fP\Connn{\fcC'\cons{\fA'}, \fM_1, \ldots, \fM_n}
		}{
			\vlpr{}{\IH}{\vdash \cneg{\left(\fcC'\cons{\fB'}\right)} , \fcC'\cons{\fA'}}
		}{
			\vlin{\AXrule}{}{\vdash \cneg\fM_1, \fM_1}{\vlhy{}}
		}{
			\vlhy{\cdots}
		}{
			\vlin{\AXrule}{}{\vdash \cneg\fM_n, \fM_n}{\vlhy{}}
		}
	}
	\qquad.		
	$$
\end{proof}
%%%%%%%%%%%%%%%%%%%%%%%%%%%%%%%%%%%%%%%%%%%%%%%%%%

%	\item 
%if an isomorphism is applied, then $\gG=\gH$
%and we conclude by  \Cref{thm:impISsound}, letting $\fA=\fof\gG$ and $\fB=\fof\gH$ (see \Cref{rem:GS});
%
%
%\item 
%if $\rrule=\aidr$, then 
%$\fA=a\lpar \cneg a$ and $\fB=\funit$ and
%a derivation
%$$
%\vlderivation{
%	\vlin{\unitor}{}{
%		\vdash \funit\lpar (a\lpar \cneg a)
%	}{
%		\vlin{\lpar}{}{\vdash a \lpar \cneg a}{\vlin{\axrule}{}{\vdash a,\cneg a}{\vlhy{}}}
%	}
%}
%$$
We are now able to prove the main result of this section, that is, establishing a correspondence between graphs provable in $\GS$ and graphs which are interpretation via $\gof\cdot$ of formulas provable in $\PMLx$.

%
%%%%%%%%%%%%%%%%%%%%%%%%%%%%%%%%%%%%%%%%%%%%%
%\begin{definition}
%	We define the following graphical logics (i.e. sets of graphs):
%	\begin{equation}\label{eq:grSyss}
%		\begin{array}{l@{\colon\;}l@{\;=\;}l}
%			\mbox{\defn{Graphical Multiplicative Logic}}
%			&
%			\GML
%			&
%			\Set{\gof \fA\mid \fA \mbox{ formula such that }\proves[\PML]\fA}
%			\\
%			\mbox{\defn{Graphical IsoMix Logic}}
%			&
%			\GMLx
%			&
%			\Set{\gof \fA\mid \fA \mbox{ formula such that }\proves[\PMLx]\fA}
%		\end{array}
%	\end{equation}
%	We say that $\gG$ is \defn{provable} in $\XS\in\set{\GML,\GMLx}$ 
%	(denoted $\proves[\XS]\gG$)
%	if 
%	there is a formula $\fA$ such that $\proves[\XS]\fA$ and $\gof\fA=\gG$.
%\end{definition}
%%%%%%%%%%%%%%%%%%%%%%%%%%%%%%%%%%%%%%%%%%%%%

%%%%%%%%%%%%%%%%%%%%%%%%%%%%%%%%%%%%%%%%%%%%%%%%%%
\begin{theorem}\label{thm:GSisISOMIX}
	Let $\gG\neq\unit$ be a graph and $\fA$ a pure formula such that $\gof\fA=\gG$.
	Then
	$\proves[\GS]\gG$ iff $\proves[\PMLx]\fA$.
\end{theorem}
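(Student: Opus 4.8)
The statement combines two implications. The direction ``$\proves[\PMLx]\fA\Rightarrow\proves[\GS]\gG$'' is immediate: it is \Cref{lem:GStoSEQ} applied to the one-formula sequent $\Gamma=\set\fA$, whose interpretation $\gof\Gamma$ equals $\gof\fA=\gG$. For the converse I will use throughout the following \emph{normalisation fact}: if $\fC,\fC'$ are pure formulas with $\gof\fC=\gof{\fC'}$, then $\proves[\PMLx]\fC$ iff $\proves[\PMLx]\fC'$. Indeed, \Cref{thm:impISsound} (via \Cref{prop:feq,prop:feq:PML}, and using that $\lparr$ is invertible by \Cref{thm:cutelim}) gives a $\PMLx$-proof of the sequent $\cneg\fC,\fC'$; a $\cutr$ on $\fC$, admissible by \Cref{thm:cutelim}, then converts a proof of $\fC$ into one of $\fC'$. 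Hence it suffices to exhibit, for each non-empty $\gG$ with $\proves[\GS]\gG$, \emph{some} pure formula interpreted by $\gG$ and provable in $\PMLx$.

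We argue by induction on the length of a $\GS$-proof $\pi$ of $\gG$, drawn with conclusion $\gG$ at the bottom and topmost premise $\unit$. Since $\gG\neq\unit$, $\pi$ has a bottommost rule, transforming its premise $\gH$ into $\gG$, and by \Cref{rem:GS} the part of $\pi$ above that rule is a $\GS$-proof of $\gH$. Every $\GS$-rule preserves the vertex set except $\aidr$, which fires only from $\unit$; so if $\gH=\unit$ the bottommost rule is $\aidr$ and $\gG=\cneg a\lpar a$, which is provable in $\PMLx$ by an $\axrule$ followed by $\lparr$, and we conclude by the normalisation fact. Otherwise $\gH\neq\unit$, so there is a pure formula $\fC$ with $\gof\fC=\gH$, and by the induction hypothesis $\proves[\PMLx]\fC$.

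It remains to push the bottommost rule through. If it is $\sdr$, $\sur$ or $\pdr$ (possibly fired inside a graph context), then \Cref{lem:DrulesCons} supplies formulas $\fA',\fB'$ with $\gof{\fA'}=\gG$, $\gof{\fB'}=\gH$, together with a $\PMLx$-proof of the sequent $\cneg{\fB'},\fA'$; by the normalisation fact $\proves[\PMLx]\fB'$ (since $\gof{\fB'}=\gof\fC$), a $\cutr$ on $\fB'$ yields $\proves[\PMLx]\fA'$, and the normalisation fact once more gives $\proves[\PMLx]\fA$. If the bottommost rule is $\aidr$, it inserts a fresh module consisting of two non-adjacent vertices labelled $\cneg a$ and $a$; by modular decomposition (\Cref{thm:conDec}), reassociating the ambient stable set if necessary, we may choose a pure (indeed unit-free) formula $\fA^{*}$ with $\gof{\fA^{*}}=\gG$ in which this module occurs as a literal subformula $\cneg a\lpar a$, say $\fA^{*}=\fcC\cons{\cneg a\lpar a}$; then $\gof{\fcC\consfu}=\gH$, because replacing a module by the empty graph deletes it, and in particular $\fcC\consfu\neq\funit$ since $\gH\neq\unit$. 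Applying the admissible rule $\deepr$ (\Cref{lem:deepr}) with left premise $\proves[\PMLx]\cneg a\lpar a$ and right premise $\proves[\PMLx]\fC$ (its side condition $\gof{\fcC\consfu}=\gof\fC$ holding as both sides are $\gH$) produces $\proves[\PMLx]\fcC\cons{\cneg a\lpar a}=\fA^{*}$, and the normalisation fact gives $\proves[\PMLx]\fA$. This closes the induction, and with the easy direction above it proves the theorem.

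The place where care is needed is the bookkeeping around units. By \Cref{rem:GS}, $\GS$ operates on modular-decomposition trees that may carry $\funit$-leaves, so the formula one would naively attach to an intermediate graph need not be pure, and — as the remark following \Cref{prop:feq} shows — $\gof\fC=\gof{\fC'}$ fails to imply $\fC\feq\fC'$ in the presence of units, even for pure formulas. The strategy is to never fix the $\GS$-side formula of an intermediate graph but always pick a fresh pure representative and transport provability along the $\PMLx$-equivalences of \Cref{thm:impISsound}, exploiting that \Cref{lem:DrulesCons}, the rule $\deepr$, and all the side conditions involved constrain graph interpretations only, not syntax. The one genuinely technical point is to check that the normalisation fact extends to the (possibly non-pure, but non-vacuous) formulas $\fA',\fB'$ output by \Cref{lem:DrulesCons} when the $\GS$-rule has a vacuous factor (e.g.\ $\gN=\unit$ in $\sdr$); this is handled by first running the unit-elimination construction from the proof of \Cref{thm:impISsound} to replace such a formula by a unit-free equivalent before invoking the fact. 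I expect this unit bookkeeping, rather than anything about the rule correspondence itself, to be the main obstacle.
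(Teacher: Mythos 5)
Your proof is correct and follows essentially the same route as the paper: induction on the length of the $\GS$-proof, handling $\sdr$, $\sur$, $\pdr$ via \Cref{lem:DrulesCons} followed by an admissible $\cutr$, and handling a deep $\aidr$ via \Cref{lem:deepr}. Your explicit ``normalisation fact'' (transporting provability between pure formulas with the same graph interpretation via \Cref{thm:impISsound}) and the attendant unit bookkeeping only make precise a step the paper leaves implicit when it passes from ``some formula $\fB$ with $\gof\fB=\gH$'' to the given representative.
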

\begin{proof}
	By  \Cref{lem:GStoSEQ}, if $\proves[\PMLx]\fA$, then by there is a proof of $\gof\fA$ in $\PMLx$.
	
	To prove the converse, let $\dD$ be a proof of $\gG\neq\unit$ in $\GS$.
	We define a proof $\pi_\dD$ of $\fA$ 
	by induction on the number $n$ of rules in $\dD$.
	
	\begin{itemize}
		\item 
		We cannot have $n=0$ since we are assuming $\gG\neq\unit$.
		
		\item 
		If $n=1$, then $\gG=a\lpar \cneg a$
		and
		$\pi_\dD=\vlsmash{\vlderivation{\vlin{\lpar}{}{\vdash a\lpar \cneg a}{\vlin{\axrule}{}{\vdash a, \cneg a}{\vlhy{}}}}}$
		.
		
		\item 
		If $n>1$, then $\dD=\odn{\od{\odp{\dD'}{\gH}{}}}{\rrule}{\gG}{}$
		then by inductive hypothesis
		we have a proof $\pi_{\dD'}$ of a formula $\fB$ such that
		$\gof\fB=\gH$.
		If $\rrule\in\set{\sdr,\sur,\pdr}$, then by  \Cref{lem:DrulesCons}
		we have a derivation with $\cutr$ as the one below on the left of a formula $\fA$ such that $\gof\fA=\gG$.
		We then conclude by \Cref{thm:cutelim}.
		$$\begin{array}{c|c}
		\vlderivation{
			\vliin{\cutr}{}{\vdash \fA}{\vlpr{}{\IH}{\psi}}{\vlpr{}{\text{\Cref{lem:DrulesCons}}}{\vdash \cneg\fB, \fA}}	
		}
		\quad\overset{\Cref{thm:cutelim}}{\rightsquigarrow^*}\quad
		\vlderivation{\vlpr{}{\PMLx}{\fA}}
		\qquad\quad&\quad\qquad
		\vlderivation{
			\vlid{=}{}{
				\vdash \fA
			}{
				\vliin{\deepr}{}{
					\vdash \fcC\cons{a\lpar\cneg a}
				}{
					\vlin{\lparr}{}{\vdash a\lpar \cneg a}{\vlin{\axrule}{}{\vdash a, \cneg a}{\vlhy{}}} 
				}{
					\vlpr{\pi_{\dD'}}{\IH}{\fB}
				}	
			}
		}
		\end{array}$$
		Otherwise $\rrule=\aidr$, then it must have been applied deep inside a context 
		$\cC\conso=\gof{\fcC\conso}\neq\chole$ such that $\cC\consu=\gH=\gof{\fB}$.
		Therefore $\fA=\fcC\cons{a\lpar\cneg a}$.
		We conclude by applying 
		\Cref{lem:deepr} to the derivation above on the right.
		\qedhere
	\end{itemize}
\end{proof}
%%%%%%%%%%%%%%%%%%%%%%%%%%%%%%%%%%%%%%%%%%%%%%%%%%

%%%%%%%%%%%%%%%%%%%%%%%%%%%%%%%%%%%%%%%%%%%%%%%%%%
%%%%%%%%%%%%%%%%%%%%%%%%%%%%%%%%%%%%%%%%%%%%%%%%%%
%%%%%%%%%%%%%%%%%%%%%%%%%%%%%%%%%%%%%%%%%%%%%%%%%%
\section{Classical Logic Beyond Cographs}\label{sec:LK}
%%%%%%%%%%%%%%%%%%%%%%%%%%%%%%%%%%%%%%%%%%%%%%%%%%
%%%%%%%%%%%%%%%%%%%%%%%%%%%%%%%%%%%%%%%%%%%%%%%%%%
%%%%%%%%%%%%%%%%%%%%%%%%%%%%%%%%%%%%%%%%%%%%%%%%%%

We conclude this paper by providing an extension of $\PML$ with standard \emph{contraction} and \emph{weakening} structural rules, showing that it provides a conservative extension of propositional classical logic.
We then show decomposition results allowing us to factorize any proof into a linear proof (i.e., a proof in $\PML$) and a \emph{resource management} proof (i.e., a derivation only using weakening and contraction rules).

%%%%%%%%%%%%%%%%%%%%%%%%%%%%%%%%%%%%%%%%%%%%%%%%%%
\begin{definition}
	We define the following sequent system:
	\begin{equation}\label{eq:LKsys}
		\begin{array}{l@{\colon\;}l@{\;=\;}l}
			\mbox{\defn{Classical Graphical  Logic}}
			&
			\PLK
			&
			\PML\cup\set{\wrule,\crule}
%			\\
%			\mbox{\defn{Classical Graphical  Logic}}
%			&
%			\GLK
%			&
%			\Set{\gof \fA\mid \fA \mbox{ formula such that }\proves[\PLK]\fA}
		\end{array}
	\end{equation}	
\end{definition}
%%%%%%%%%%%%%%%%%%%%%%%%%%%%%%%%%%%%%%%%%%%%%%%%%%

%%%%%%%%%%%%%%%%%%%%%%%%%%%%%%%%%%%%%%%%%%%%%%%%%%
% FIG WC rules
%%%%%%%%%%%%%%%%%%%%%%%%%%%%%%%%%%%%%%%%%%%%%%%%%%
\begin{figure*}
	\centering
	\adjustbox{max width=\textwidth}{$\begin{array}{c}
			\begin{array}{c@{\qquad}|@{\qquad}c@{\qquad}|@{\qquad}c}
				\vlinf{\wrule}{}{\vdash \Gamma, \fA}{\vdash\Gamma}
				\qquad
				\vlinf{\crule}{}{\vdash \Gamma, \fA}{\vdash\Gamma, \fA, \fA}
				&
				\vlinf{\dwrule}{}{\fB\lpar\fA}{\fB}
				\qquad
				\vlinf{\dcrule}{}{\fA}{\fA\lpar \fA}
				&
				\vlinf{\dacrule}{}{a}{a \lpar a}
				\qquad
				\vlinf{\medr}{\text{\scriptsize$\lpar\neq\gP$ prime}}{\gP\connn{\fA_1\lpar \fB_1,\ldots, \fA_n\lpar \fB_n}}{
					\gP\connn{\fA_1,\ldots, \fA_n}\lpar	\gP\connn{\fB_1,\ldots, \fB_n}
				}
			\end{array}
		\end{array}$}
	\caption{Structural rules for sequent calculi, and the corresponding rules in deep inference together with the atomic contraction and the generalized medial rule.}
	\label{fig:WCrules}
\end{figure*}
%%%%%%%%%%%%%%%%%%%%%%%%%%%%%%%%%%%%%%%%%%%%%%%%%%
%%%%%%%%%%%%%%%%%%%%%%%%%%%%%%%%%%%%%%%%%%%%%%%%%%

For $\PLK$ we can prove the admissibility of the $\cutr$-rule via cut-elimination.

%%%%%%%%%%%%%%%%%%%%%%%%%%%%%%%%%%%%%%%%%%%%%%%%%%
\begin{theorem}[Cut-elimination]\label{thm:cutelimLK}
	The rule $\cutr$ is admissible in $\PLK$.
\end{theorem}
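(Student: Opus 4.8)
The plan is to proceed exactly as in the proof of \Cref{thm:cutelim}. A topmost $\cutr$ is pushed upwards --- using commutative steps in the style of \Cref{fig:cut-elimCom}, now including the entirely trivial new cases in which the cut formula passes through a $\wrule$ or a $\crule$ without being its principal formula --- until both active formulas of the $\cutr$ are principal in the rule immediately above them, and then a logical reduction step is applied. Since $\PLK$ is obtained from $\PML$ only by adding the structural rules $\wrule$ and $\crule$, the axiom case and every principal logical reduction --- the $\lparr$-against-$\ltensr$ step and the $\dconr$-against-$\dconr$ step --- are literally those of \Cref{fig:cut-elimPML}. Hence the only genuinely new situations are the two in which an active formula of the $\cutr$ is the principal formula of a structural rule.

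The weakening case is immediate: if $\vdash \Gamma, \fA$ is obtained by $\wrule$ from $\vdash \Gamma$ and is cut against a proof of $\vdash \Delta, \cneg\fA$, one discards the latter proof and reinstates the context by iterating $\wrule$, obtaining $\vdash \Gamma, \Delta$ with one fewer cut. The contraction case is the classical one: if $\vdash \Gamma, \fA$ is obtained by $\crule$ from $\vdash \Gamma, \fA, \fA$ and is cut against a proof $\pi_2$ of $\vdash \Delta, \cneg\fA$, one duplicates $\pi_2$, cuts it against each of the two occurrences of $\fA$, and contracts the two resulting copies of the context $\Delta$ back into one (iterated weakening and contraction over a whole context are derivable from $\wrule$ and $\crule$). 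The difficulty is that this last step replaces one $\cutr$ on $\fA$ by two, so the size measure of \Cref{thm:cutelim} --- the sum over all cuts of the size of the cut formula --- does not decrease, even when we restrict to topmost cuts so as never to copy cuts lying inside $\pi_2$. This is resolved by the standard device used for classical sequent calculi (see \cite{troelstra:schwichtenberg:00}): one replaces $\cutr$ by Gentzen's multicut rule, which deletes all occurrences of the cut formula simultaneously and thereby \emph{absorbs} the adjacent $\crule$ in the contraction case, so that no re-cut is needed; one then eliminates the multicuts of maximal cut-formula size one at a time, topmost first, using as a secondary measure the sum of the heights of the two premise derivations of the multicut being reduced. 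Every reduction either removes a maximal multicut or replaces it by multicuts of strictly smaller cut-formula size, and in the contraction step the secondary measure strictly decreases, so the procedure terminates.

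I expect the only laborious part to be the bookkeeping of multiplicities of the multicut formula in the principal logical reductions adapted from \Cref{fig:cut-elimPML}, especially for the $n$-ary rule $\dconr$, where several copies of $\con\connn{\fA_1, \ldots, \fA_n}$ may have been produced by distinct instances of $\dconr$ and must be handled by inserting the usual auxiliary $\wrule$ and $\crule$ before recursing --- routine but tedious. Everything else is inherited, explicitly or by reference, from the proof of \Cref{thm:cutelim}, and the subformula property for $\PLK$ follows as before.
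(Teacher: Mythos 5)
Your reduction steps coincide with the paper's: the weakening and contraction cases you describe are exactly those of \Cref{fig:cut-elimWC}, and the logical, axiom and commutative cases are inherited verbatim from \Cref{fig:cut-elimPML,fig:cut-elimCom}, just as in the paper. Where you genuinely depart from the paper is the termination argument. The paper's proof simply reuses the size measure of \Cref{thm:cutelim} (the sum of the sizes of the active formulas over all $\cutr$ instances), restricted to topmost cuts, and asserts weak normalization. As you rightly point out, the contraction step replaces one $\cutr$ on $\fA$ by two $\cutr$ instances on $\fA$, so that measure strictly increases; restricting attention to topmost cuts prevents the duplication of cuts lying inside the copied subproof, but does not repair the increase caused by the reduction step itself. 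Your fix --- absorbing the adjacent $\crule$ into a Gentzen-style multicut and terminating by the lexicographic pair consisting of the cut-formula size and the sum of the heights of the premise derivations --- is the standard remedy and yields a complete argument where the paper offers only a sketch; the price is the multiplicity bookkeeping in the $n$-ary $\dconr$ reduction that you already anticipate. In short: same reduction system, but your termination argument is more careful than (and effectively corrects) the one the paper gives.
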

\begin{proof}
	Consider the \emph{cut-elimination steps} from \Cref{fig:cut-elimPML} and \Cref{fig:cut-elimWC}
	and the definition of weight from the proof of \Cref{thm:cutelim}.
	A proof of weak normalization of the cut-elimination procedure can be given using the same measure used in the proof of \Cref{thm:cutelim}
	\footnote{
		For the sake of determining if a cut-formula is principal, 
		in a contraction rule ($\crule$) we assume both occurrences of $\fA$ in the premise to be active and the occurrence of $\fA$ in the conclusion to be principal.
	}
	by restraining the application of the cut-elimination steps only to top-most $\cutr$-rules in the derivation.
\end{proof}
%%%%%%%%%%%%%%%%%%%%%%%%%%%%%%%%%%%%%%%%%%%%%%%%%%

%%%%%%%%%%%%%%%%%%%%%%%%%%%%%%%%%%%%%%%%%%%%%%%%%%
% FIG WC steps
%%%%%%%%%%%%%%%%%%%%%%%%%%%%%%%%%%%%%%%%%%%%%%%%%%
\begin{figure*}
	\centering
	\adjustbox{max width=\textwidth}{$\begin{array}{c}
		\vlderivation{
			\vliin{\cutr}{}{\vdash \Gamma, \Delta}{
				\vlin{\wrule}{}{\vdash \Gamma,\fA}{
					\vlhy{\vdash \Gamma}}
			}{
				\vlhy{\vdash \nfA, \Delta}
			}
		}
		\;\rightsquigarrow\;
		\vlderivation{\vliq{\wrule}{}{\vdash \Gamma,\Delta}{
				\vlhy{\vdash \Gamma}}}
		\qquad\qquad
		\vlderivation{
			\vliin{\cutr}{}{\vdash \Gamma, \Delta}{
				\vlin{\crule}{}{\vdash \Gamma, \fA}{
					\vlhy{\vdash \Gamma, \fA, \fA}}
			}{
				\vlhy{\vdash  \nfA , \Delta}
			}
		}
		\;\rightsquigarrow\;
		\vlderivation{
			\vliq{\crule}{}{\vdash \Gamma, \Delta}{
				\vliin{\cutr}{}{\vdash \Gamma, \Delta, \Delta}{
					\vliin{\cutr}{}{\vdash \Gamma, \Delta, \fA}{
						\vlhy{\vdash \Gamma, \fA, \fA}
					}{
						\vlhy{\vdash  \nfA , \Delta}
					}
				}{
					\vlhy{\vdash  \nfA , \Delta}
				}
			}
		}
	\end{array}$}
	\caption{The cut-elimination steps for the structural rules.}
	\label{fig:cut-elimWC}
\end{figure*}
%%%%%%%%%%%%%%%%%%%%%%%%%%%%%%%%%%%%%%%%%%%%%%%%%%
%%%%%%%%%%%%%%%%%%%%%%%%%%%%%%%%%%%%%%%%%%%%%%%%%%

We consider the deep inference rules in \Cref{fig:WCrules}, that is, rules which can be applied on subformulas in a sequent.
Using 
the deep inference version of the structural rules (weakening and contraction) and the
\defn{generalized medial} rule proposed in \cite{CDW:ext-bool}
we can define a inference system 
where structural rules can pushed down in a derivation
obtaining a decomposition result extending the one in
\cite{brunnler:06:locality,bru:str:swi-med} for classical logic.

%%%%%%%%%%%%%%%%%%%%%%%%%%%%%%%%%%%%%%%%%%%%%%%%%%
\begin{lemma}\label{lem:dCdecomposition}
	The contraction rule $\dcrule$ is derivable using atomic contraction ($\dacrule$) and medial rule ($\medr$).
\end{lemma}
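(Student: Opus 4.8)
The plan is to argue by structural induction on the formula $\fA$, using that in the deep inference setting a derivable rule may be applied inside an arbitrary context (so that the induction hypothesis can be invoked on proper subformulas of $\fA$ wherever they occur). The base case is when $\fA$ is a literal: then the instance of $\dcrule$ with conclusion $\fA$ is literally an instance of $\dacrule$; the case $\fA=\funit$ is immediate, since $\funit\lpar\funit$ and $\funit$ denote the same graph.

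For the inductive step I would split on the main connective of $\fA$. First suppose $\fA=\con[\gP]\connn{\fA_1,\ldots,\fA_n}$ with $\gP\in\primesign$ and $\gP\neq\lpar$ --- note this includes $\gP=\ltens$, as every two-vertex graph is prime. Then I would begin the derivation with an instance of $\medr$, rewriting
$$
\fA\lpar\fA \;=\; \con[\gP]\connn{\fA_1,\ldots,\fA_n}\lpar\con[\gP]\connn{\fA_1,\ldots,\fA_n}
\quad\rightsquigarrow\quad
\con[\gP]\connn{\fA_1\lpar\fA_1,\ldots,\fA_n\lpar\fA_n}\;,
$$
and then, for $i=1,\ldots,n$ in turn, plug the derivation of $\dcrule$ on $\fA_i$ (available by the induction hypothesis) into the context obtained by freezing the remaining factors, eventually reaching $\con[\gP]\connn{\fA_1,\ldots,\fA_n}=\fA$. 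If instead $\fA=\fA_1\lpar\fA_2$, then
$$
\fA\lpar\fA \;=\; (\fA_1\lpar\fA_2)\lpar(\fA_1\lpar\fA_2)\;\feq\;(\fA_1\lpar\fA_1)\lpar(\fA_2\lpar\fA_2)
$$
by commutativity and associativity of $\lpar$ (equivalently, both sides denote the same graph), so it suffices to apply the inductively available derivations of $\dcrule$ on $\fA_1$ and on $\fA_2$ inside the two $\lpar$-factors.

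The argument is essentially routine, and I do not expect a genuine obstacle; the points worth spelling out are only that the induction hypothesis is legitimately applied deep inside a context (because $\dacrule$ and $\medr$ are themselves deep inference rules), that every graphical connective other than $\lpar$ is handled uniformly by $\medr$ since the associated prime graph is distinct from $\lpar$, and that $\lpar$ is the single case in which one falls back on the equational theory $\feq$ in place of $\medr$.
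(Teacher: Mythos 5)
Your proof is correct and follows essentially the same route as the paper's: induction on $\fA$, with $\dacrule$ at the literals and, for a composite formula, one instance of $\medr$ followed by the induction hypothesis applied deep inside each factor. You are in fact slightly more careful than the paper, which silently ignores the side condition $\gP\neq\lpar$ on $\medr$; your explicit fallback on associativity and commutativity of $\lpar$ for the case $\fA=\fA_1\lpar\fA_2$ is exactly the right way to close that case.
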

\begin{proof}
	By induction on the contracted formula $\fA$.
	If $\fA=a$ is an atom, then $\dcrule$ is an instance of $\dacrule$.
	Otherwise, $\fA=\con\connn{\fB_1,\ldots, \fB_n}$ and
	we conclude since we can apply inductive hypothesis 
	to replace each application of $\dcrule$ with a derivation of the following form
	$$
	\odn{\con\connn{\fB_1,\ldots, \fB_n}\lpar \con\connn{\fB_1,\ldots, \fB_n}}{\dcrule}{
		\con\Connn{\fB_1,\ldots, \fB_n}
	}{}
	\qquad\rightsquigarrow\qquad
	\odn{\con\connn{\fB_1,\ldots, \fB_n}\lpar \con\connn{\fB_1,\ldots, \fB_n}}{\medr}{
		\con\Connn{\ods{\fB_1\lpar \fB_1}{\IH}{\fB_1}{\set{\medr,\dacrule}},\ldots, \ods{\fB_n\lpar \fB_n}{\IH}{\fB_n}{\set{\medr,\dacrule}}}
	}{}
	\; .
	$$
\end{proof}
%%%%%%%%%%%%%%%%%%%%%%%%%%%%%%%%%%%%%%%%%%%%%%%%%%

%%%%%%%%%%%%%%%%%%%%%%%%%%%%%%%%%%%%%%%%%%%%%%%%%%
\def\dWC{\set{\dwrule,\dcrule}}
\begin{theorem}[Decomposition]\label{thm:GLKdecomposition}
	Let $\Gamma$ be a sequent.
	If $\proves[\PLK]\Gamma$,
	then:
	\begin{enumerate}
		\item\label{dec:1}
		there is a sequent $\Gamma'$ such that 
		$\proves[\PML]\Gamma'
%		$ and $\Gamma'
		\proves[\set{\dwrule,\dcrule}]\Gamma$
		
		\item\label{dec:2}
		there are sequent $\Gamma'$, $\Delta'$, and $\Delta$
		such that 
		$\proves[\PML]\Gamma'
%		$ and  $\Gamma'
		\proves[\set{\medr}]\Delta'
%		$ and$\Gamma_1
		\proves[\set{\dacrule}]\Delta
%		$ and $\Gamma_2
		\proves[\set{\dwrule}]\Gamma$
	\end{enumerate}
\end{theorem}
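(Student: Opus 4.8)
The proof follows the rule-permutation methodology used to obtain decomposition results for classical logic in deep inference \cite{brunnler:06:locality,bru:str:swi-med}. By \Cref{thm:cutelimLK} we may assume the given $\PLK$-proof of $\Gamma$ is cut-free, hence built only from $\axrule$, $\lparr$, $\ltensr$, the dual-connective rules $\dconr$, and the structural rules $\wrule$ and $\crule$. The plan is to push all structural rules past all linear rules, landing in the shape required by item~\ref{dec:1}, and then to further rearrange the structural tail for item~\ref{dec:2}.

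For item~\ref{dec:1}, I would argue by induction on the cut-free $\PLK$-proof, showing that $\proves[\PLK]\Gamma$ yields a sequent $\Gamma'$ with $\proves[\PML]\Gamma'$ together with a derivation of $\Gamma$ from $\Gamma'$ using only $\dwrule$ and $\dcrule$ (recall that a sequent is identified with the $\lpar$ of its formulas, so $\dwrule$ and $\dcrule$ may act both at top level, reproducing $\wrule$ and $\crule$, and strictly inside a formula). The cases $\axrule$ and $\lparr$ introduce no new structural step, the latter because $\Gamma,\fA\lpar\fB$ and $\Gamma,\fA,\fB$ denote the same formula up to $\feq$. A $\wrule$ can always be deferred to the root: since $\dwrule$ inserts an \emph{arbitrary} formula, one keeps the weakening-free proof and re-introduces, as a single $\dwrule$-step at the bottom, the fully developed formula into which the weakened formula has meanwhile grown. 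A $\crule$ is deferred likewise, using a $\dcrule$ that acts deep inside the compound formula into which the contracted subformula has since been embedded by the rules below it. The genuinely delicate cases are the branching rules $\ltensr$ and $\dconr$: there the two (resp.\ $n$) structural derivations supplied by the induction hypothesis for the premises must be recombined into a single structural derivation placed below the corresponding $\PML$-instance of the rule, which requires strengthening the induction hypothesis so that it keeps track of the active formulas of each premise.

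For item~\ref{dec:2}, I would start from the output of item~\ref{dec:1}, namely a $\PML$-proof of some $\Gamma'$ followed by a $\set{\dwrule,\dcrule}$-derivation of $\Gamma$. First, invoke \Cref{lem:dCdecomposition} to replace every occurrence of $\dcrule$ by a $\medr$-step followed by (recursively) atomic-contraction steps, turning the tail into a $\set{\dwrule,\medr,\dacrule}$-derivation of $\Gamma$ from $\Gamma'$. It then remains to sort this derivation into the three consecutive phases $\set{\medr}$, $\set{\dacrule}$, $\set{\dwrule}$. This is achieved by a terminating system of local permutations: a $\medr$-step commutes upwards past a $\dacrule$-step (they act on disjoint parts, since a prime connective is never an atom) and past a $\dwrule$-step (if the $\dwrule$ created a disjunct of the medial redex, the pair is replaced by deep weakenings into the arguments of the prime connective, eliminating that $\medr$); and a $\dwrule$-step commutes downwards past a $\dacrule$-step (a $\dwrule$ creating the very occurrence a $\dacrule$ contracts cancels against it). Termination is ensured by a measure counting, for each $\medr$-step, the number of structural steps above it, and for each $\dwrule$-step, the number of $\dacrule$-steps below it, in the spirit of the measures of \cite{brunnler:06:locality}.

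The hard part is the recombination of the premises' structural derivations in the branching cases of item~\ref{dec:1}: one must ensure that threading these derivations through an $\ltensr$ or a $\dconr$ never requires moving a $\lpar$ across a $\ltens$ (which is not a structural move), and that the induction hypothesis records just enough information — essentially which subformula of $\Gamma'$ each formula of $\Gamma$ descends from — for the merge to go through. Once this is set up, the remaining work, including the permutation system of item~\ref{dec:2} together with its termination argument, is routine.
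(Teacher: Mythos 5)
Your proposal is correct and follows essentially the same route as the paper: item~\ref{dec:1} by permuting the structural rules below the linear ones (the paper simply cites \cite{acc:str:18} for this), and item~\ref{dec:2} by first applying \Cref{lem:dCdecomposition} to replace each $\dcrule$ by $\medr$ and $\dacrule$ steps and then permuting to sort the structural phase. You in fact supply more detail than the paper does, correctly isolating the branching rules as the place where the permutation argument needs care.
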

\begin{proof}
	The proof of \Cref{dec:1} is immediate by applying rule permutations. 
	For a reference, see \cite{acc:str:18}.
	\Cref{dec:2} is consequence of the previous point 
	since by \Cref{lem:dCdecomposition} we can replace all instances of $\dcrule$-rules with derivations containing only $\medr$ and $\dacrule$, and conclude by applying rule permutations to move all $\acrule$-rules below the instances of $\medr$-rules, and $\dwrule$ to the bottom of a derivation.
\end{proof}
%%%%%%%%%%%%%%%%%%%%%%%%%%%%%%%%%%%%%%%%%%%%%%%%%%

To conclude this section, we recall that classical graphical logic, is not the same logic of the \defn{boolean graphical logic} (denoted $\GBL$) defined in \cite{CDW:ext-bool} (an inference systems on graphs by extending the semantics of read-once boolean relations from cographs to general graphs).
In fact, even if both are conservative extensions of classical logic, 
the following graph from \cite{acc:LMCS} which is expected to be provable in $\GBL$, 
but is not provable in $\GS$ (and there is no formula $\fA$ provable in $\PLK$ such that $\gof\fA$ is the given graph).
$$
\begin{array}{cc@{\quad}cc}
	&\vb1& \vnc1&		\\
	\va1&	 &		&\vnb1	\\
	&\vc1&\vna1
\end{array}
\Gedges{a1/b1,b1/nc1,nc1/nb1,nb1/na1,na1/c1,a1/c1}
$$

%%%%%%%%%%%%%%%%%%%%%%%%%%%%%%%%%%%%%%%%%%%%%%%%%%
%%%%%%%%%%%%%%%%%%%%%%%%%%%%%%%%%%%%%%%%%%%%%%%%%%
%%%%%%%%%%%%%%%%%%%%%%%%%%%%%%%%%%%%%%%%%%%%%%%%%%
\section{Conclusion and Future Works}\label{sec:conc}
%%%%%%%%%%%%%%%%%%%%%%%%%%%%%%%%%%%%%%%%%%%%%%%%%%
%%%%%%%%%%%%%%%%%%%%%%%%%%%%%%%%%%%%%%%%%%%%%%%%%%
%%%%%%%%%%%%%%%%%%%%%%%%%%%%%%%%%%%%%%%%%%%%%%%%%%

In this paper we have provided foundations for the design of proof systems operating on graph by defining \emph{graphical connectives}, 
a class of logical operators generalizing the classical conjunction and disjunction,
and 
whose semantics is solely defined by their interpretation as prime graphs.

We studied two sub-structural sequent calculi operating on formulas defined via graphical connectives ($\PML$ and $\PMLx$), proving that cut-elimination holds in these systems and that they are conservative extensions of 
the multiplicative linear logic and the multiplicative linear logic with mix respectively.
For these calculi, we proved that they capture graph isomorphisms as provable logical equivalences%
\footnote{
	Note that the sequent calculus is only capable of checking if two graphs sharing the same set of vertices are isomorphic (problem with polynomial complexity), but not to find an correspondence between vertices of two graphs which is an isomorphism (a well-known $\NP$ problem)
}%
.
We were able to prove that the class of graphs representing provable formulas in $\PMLx$
coincides with 
the class of non-empty graphs provable in the proof system $\GS$ from \cite{acc:hor:str:LICS2020,acc:hor:str:2021}.
As a consequence, the proofs of cut-elimination in $\PML$ serves as simplified version of the proof of transitivity of implication in $\GS$.

We concluded by providing a conservative extension of both classical propositional logic and $\PML$, 
and proving the existence of a decomposition result allowing us to have canonical forms for proofs in which all structural rules can be relegated at the bottom of a derivation.

%%%%%%%%%%%%%%%%%%%%%%%%%%%%%%%%%%%%%%%%%%%%%%%%%%
%%%%%%%%%%%%%%%%%%%%%%%%%%%%%%%%%%%%%%%%%%%%%%%%%%
\subsection{Future Works}
%%%%%%%%%%%%%%%%%%%%%%%%%%%%%%%%%%%%%%%%%%%%%%%%%%
%%%%%%%%%%%%%%%%%%%%%%%%%%%%%%%%%%%%%%%%%%%%%%%%%%

%%%%%%%%%%%%%%%%%%%%%%%%%%%%%%%%%%%%%%%%%%%%%%%%%%%
\myparagraph{Categorical Semantics}
%%%%%%%%%%%%%%%%%%%%%%%%%%%%%%%%%%%%%%%%%%%%%%%%%%%
Unit-free \emph{star-autonomous} and \emph{IsoMix} categories \cite{cockett:seely:97:mix,cockett:seely:97} provide categorical models of $\MLL$ and $\MLLx$ respectively.
We conjecture that categorical models for $\PML$ and $\PMLx$ can be defined by enriching such structures with additional $n$-ary monoidal products and natural transformation, reflecting the symmetries observed in the symmetry groups of prime graphs.

%%%%%%%%%%%%%%%%%%%%%%%%%%%%%%%%%%%%%%%%%%%%%%%%%%%
\myparagraph{Digraphs, Games and Event Structures}
%%%%%%%%%%%%%%%%%%%%%%%%%%%%%%%%%%%%%%%%%%%%%%%%%%%
%While in this work we have defined graphical connectives to extend the correspondence between classical propositional and cographs from \cite{cographs}, graphical connectives can be defined in 
In this work we have extended the correspondence between classical propositional and cographs from \cite{cographs} to the case of general (undirected) graphs using graphical connectives. The same idea can be found in \cite{acc:FSCD22} for mixed graphs.
A similar generalization of the correspondence between intuitionistic propositional formulas and \emph{arenas} used in Hyland-Ong \emph{game semantics} \cite{hyland:ong:00}. Arenas are directed graphs characterized by the absence of specific induced subgraphs.
We foresee the possibility of defining conservative extensions of intuitionistic propositional logic beyond arenas, analogously as what done in this paper, where we provided conservative extensions of classical propositional logic beyond cographs.
Such proof systems would provide new insights on the proof theory connected to concurrent games~\cite{abramsky1999concurrent,rideau2011concurrent,winskel2017games}, and could be used to define automated tools operating on event structures~\cite{nielsen1981petri}.
% because of the difficulty encountered in handling  generalizing the \emph{additive} connectives from linear logic \cite{girard:87,cockett:pastro:MALL}.

%%%%%%%%%%%%%%%%%%%%%%%%%%%%%%%%%%%%%%%%%%%%%%%%%%%
%%%%%%%%%%%%%%%%%%%%%%%%%%%%%%%%%%%%%%%%%%%%%%%%%%%
\begin{figure}[t]
	\newvertex{nbtensa}{\cneg b\lpar \cneg a}{}
	\newvertex{btensa}{a \ltens b}{}
	$$
	\begin{array}{c|c}
		\begin{array}{c@{\qquad}c}
			\\
			\begin{array}{cc}
				\vnb1\;\; \vna1\;\;\; & \;\;\; \va1 \quad \vb1
				\\
				\Gpar1&\Gtens1
				\\[5pt]
				\vnbtensa1
				&
				\vbtensa1
			\end{array}
			\pswires{Gpar1.O/nbtensa1,Gtens1.O/btensa1}
			\psaxioms{Gpar1.R/Gtens1.L,Gpar1.L/Gtens1.R}
			&
			\begin{array}{c@{\;}c@{\;}cc@{\;}c@{\;}c}
				\vnb1 &&  \vna1 & \va1 && \vb1
				\\
				&\vvo{\lpar}1&&& \vvo{\ltens}2
				\\[5pt]
				&\vvr{\lpar}1&&& \vvr{\ltens}2
			\end{array}
			\Bedges{vo1/vr1,vo2/vr2}
			\Redges{a1/b1}
			\multiRedges{a1,b1}{vo2}
			\multiRedges{na1,nb1}{vo1}
			\sqBedges{a1/na1/10,nb1/b1/15}
		\end{array}
		&
		\begin{array}{cc}
			\\
			\begin{array}{c}
				\vnc1 \quad \vna1	\quad\vnd1	\quad\vnb1
				\\[5pt]
				\vvo{\nPfour}2
				\\[5pt]
				\vvr{\nPfour}2
			\end{array}
			&
			\begin{array}{c}
				\va1\quad \vb1 \quad \vc1 \quad \vd1	
				\\[5pt]
				\vvo{\Pfour}1
				\\[5pt]
				\vvr{\Pfour}1
			\end{array}
		\end{array}
		\Hpath{
			(na1.center)--
			++(0,20pt)	-|
			(a1.center)	--
			(b1.center)	--
			++(0,16pt)	-|
			(nb1.center)--
			(nd1.center)--
			++(0,8pt)	-|
			(d1.center)	--
			(c1.center)	--
			++(0,12pt)	-|
			(nc1.center)--
			(na1.center)
		}
		\Redges{a1/b1,b1/c1,c1/d1}
		\Redges{nc1/na1,nd1/nb1,na1/nd1}
		\multiRedges{vo1}{a1,b1,c1,d1}
		\multiRedges{vo2}{na1,nb1,nc1,nd1}
		\sqBedges{a1/na1/20,b1/nb1/16,c1/nc1/12,d1/nd1/8}
		\Bedges{vo1/vr1,vo2/vr2}
	\end{array}
	$$
	\caption{
		On the left: the same proof net in the original Girard's syntax and Retoré's one.
		On the right: an \RBPN of $\fcon{\Pfour}\connn{a,b,c,d}\limp \fcon{\Pfour}\connn{a,b,c,d}$ containing the chorded \ae-cycle $a\cdot b \cdot \cneg b\cdot \cneg d \cdot d \cdot c \cdot \cneg c\cdot \cneg a$.
	}
	\label{fig:mllnets}
\end{figure}
%%%%%%%%%%%%%%%%%%%%%%%%%%%%%%%%%%%%%%%%%%%%%%%%%%%
%%%%%%%%%%%%%%%%%%%%%%%%%%%%%%%%%%%%%%%%%%%%%%%%%%%

%%%%%%%%%%%%%%%%%%%%%%%%%%%%%%%%%%%%%%%%%%%%%%%%%%%
\myparagraph{Proof nets and automated proof search}
%%%%%%%%%%%%%%%%%%%%%%%%%%%%%%%%%%%%%%%%%%%%%%%%%%%
We plan to design proof nets~\cite{girard:87,dan:reg:89,girard:96:PN} for $\PML$ and $\PMLx$, as well as combinatorial proofs~\cite{hughes:invar} for $\PLK$.
For this purpose, we envisage to extend Retoré's \emph{handsome proof net} syntax, where proof nets are represented by two-colored graphs (see the left of \Cref{fig:mllnets}). In Retoré's syntax, the graph induced by the vertices corresponding to the inputs of a $\lpar$-gate (or a $\ltens$-gate) is similar to the corresponding prime graph $\lpar$ (resp.~$\ltens$). Thus, gates for graphical connectives could be easily defined by extending this correspondence (see the proof net on the right of \Cref{fig:mllnets}).
The standard correctness condition defined via \emph{acyclicity} would fails for these proof nets, as shown in the right-hand side of \Cref{fig:mllnets}: the (correct) proof-net of the sequent $\Pfour\connn{a,b,c,d}\limp \Pfour\connn{a,b,c,d}$ contains a cycle.
We foresee the possibility of using results on the \emph{primeval} decomposition of graphs \cite{jam:ola:pcomp,hougardy1996p4} to isolate those cycles witnessing unsoundness, as proposed in \cite{seiller-CohGraphs}.
Such a result would open to the possibility of defining \emph{combinatorial proofs} \cite{hughes:invar,hughes:pws} for $\PLK$ relying on the decomposition result (\Cref{thm:GLKdecomposition}), and may provide a methodology to find machine-learning guided automated theorem provers for $\PML$ and $\PMLx$ using the methods in \cite{kog:moo:moo:neural}.

\clearpage
%%%%%%%%%%%%%%%%%%%%%%%%%%%%%%%%%%%%%%%%%%%%%%%%%%%%%%%%%%%%%%%%%%%%%%
%%%%%%%%%%%%%%%%%%%%%%%%%%%%%%%%%%%%%%%%%%%%%%%%%%%%%%%%%%%%%%%%%%%%%%
%%%%%%%%%%%%%%%%%%%%%%%%%%%%%%%%%%%%%%%%%%%%%%%%%%%%%%%%%%%%%%%%%%%%%%
%%%%%%%%%%%%%%%%%%%%%%%%%%%%%%%%%%%%%%%%%%%%%%%%%%%%%%%%%%%%%%%%%%%%%%
\bibliography{GL}
%%%%%%%%%%%%%%%%%%%%%%%%%%%%%%%%%%%%%%%%%%%%%%%%%%%%%%%%%%%%%%%%%%%%%%
%%%%%%%%%%%%%%%%%%%%%%%%%%%%%%%%%%%%%%%%%%%%%%%%%%%%%%%%%%%%%%%%%%%%%%
%%%%%%%%%%%%%%%%%%%%%%%%%%%%%%%%%%%%%%%%%%%%%%%%%%%%%%%%%%%%%%%%%%%%%%
%%%%%%%%%%%%%%%%%%%%%%%%%%%%%%%%%%%%%%%%%%%%%%%%%%%%%%%%%%%%%%%%%%%%%%

\clearpage

%%%%%%%%%%%%%%%%%%%%%%%%%%%%%%%%%%%%%%%%%%%%%%%%%%%%%%%%%%%%%%%%%%%%%%
%%%%%%%%%%%%%%%%%%%%%%%%%%%%%%%%%%%%%%%%%%%%%%%%%%%%%%%%%%%%%%%%%%%%%%
%%%%%%%%%%%%%%%%%%%%%%%%%%%%%%%%%%%%%%%%%%%%%%%%%%%%%%%%%%%%%%%%%%%%%%
\appendix
%%%%%%%%%%%%%%%%%%%%%%%%%%%%%%%%%%%%%%%%%%%%%%%%%%%%%%%%%%%%%%%%%%%%%%
%%%%%%%%%%%%%%%%%%%%%%%%%%%%%%%%%%%%%%%%%%%%%%%%%%%%%%%%%%%%%%%%%%%%%%
%%%%%%%%%%%%%%%%%%%%%%%%%%%%%%%%%%%%%%%%%%%%%%%%%%%%%%%%%%%%%%%%%%%%%%

%%%%%%%%%%%%%%%%%%%%%%%%%%%%%%%%%%%%%%%%%%%%%%%%%%%%%%%%%%%%%%%%%%%%%%
%%%%%%%%%%%%%%%%%%%%%%%%%%%%%%%%%%%%%%%%%%%%%%%%%%%%%%%%%%%%%%%%%%%%%%
%%%%%%%%%%%%%%%%%%%%%%%%%%%%%%%%%%%%%%%%%%%%%%%%%%%%%%%%%%%%%%%%%%%%%%
\section{Deep Inference and the Open Deduction Formalism}\label{app:deep}
%%%%%%%%%%%%%%%%%%%%%%%%%%%%%%%%%%%%%%%%%%%%%%%%%%%%%%%%%%%%%%%%%%%%%%
%%%%%%%%%%%%%%%%%%%%%%%%%%%%%%%%%%%%%%%%%%%%%%%%%%%%%%%%%%%%%%%%%%%%%%
%%%%%%%%%%%%%%%%%%%%%%%%%%%%%%%%%%%%%%%%%%%%%%%%%%%%%%%%%%%%%%%%%%%%%%

Open deduction~\cite{gug:gun:par:2010} is a proof formalism based on deep inference~\cite{tub:str:esslli19}. It has originally been defined for formulas, but it is abstract enough such that it can equally well be used for graphs, as already done in~\cite{acc:hor:str:2021}.
%

%%%%%%%%%%%%%%%%%%%%%%%%%%%
\begin{definition}\label{def:derivation}
	An \defn{inference system} $\sysS$ is a set of inference rules (as for example shown in \Cref{fig:rules}).
	A \defn{derivation} $\dD$ in $\sysS$ with premise $\gG$ and
	conclusion $\gH$ is denoted $\vldownsmash{\od{\odd{\odh{\gG}}{\dD}{\gH}{\sysS}}}$
	and is defined inductively as follows:
	\
	
	\begin{itemize}	
		\item 
		Every graph $\gG$ is a (\defn{trivial}) derivation
		with
		premise $\gG$ and conclusion $\gG$ (also denoted $\gG$).

		\item 
		An instance of a rule $\vlinf{\rrule}{}{\gH}{\gG}$  in $ \sysS$ is a derivation with premise $\gG$ and conclusion $\gH$.
		
		\item 
		If 
		$\dD_1$ is a derivation with premise $\gG_1$ and conclusion $\gH_1$, 
		and 
		$\dD_2$ is a derivation with premise $\gG_2$ and conclusion $\gH_2$,
		and
		$\gH_1\greq\gG_2$,
		then the composition 
		of $\dD_1$ and $\dD_2$ 
		is a derivation $\dD_2\comp\dD_1$ denoted as below.
		\begin{equation*}\label{eq:iso}
			\odiso{
				\ods{\gG_1}{\dD_1}{\gH_1}{\sysS}
			}{
				\ods{\gG_2}{\dD_2}{\gH_2}{\sysS}
			}
			\mbox{ or }
			\ods{\odiso{\ods{\gG_1}{\dD_1}{\gH_1}{\sysS}}{\gG_2}}{\dD_2}{\gH_2}{\sysS}
			\mbox{ or }
			\ods{\gG_1}{\dD_1}{\odiso{\gH_1}{\ods{\gG_2}{\dD_2}{\gH_2}{\sysS}}}{\sysS}
			\mbox{ or }
			\od{\odd{\odd{\odh{\gG_1}}
					{\dD_1}{\gG_2}{\sysS}}
				{\dD_2}{\gH_2}{\sysS}}
			\mbox{ or }	
			\od{\odd{\odd{\odh{\gG_1}}
					{\dD_1}{\gH_1}{\sysS}}
				{\dD_2}{\gH_2}{\sysS}}
		\end{equation*}
		Note that even if the symmetry between $\gG_2$ and $\gH_1$ is not written, we always assume it is part of the derivation and explicitly
		given.
		
		\item   
		If $\gG$ is a graph with $n$ vertices and
		$\dD_1,\dots,\dD_n$ are derivations with premise $\gG_i$ and conclusion
		$\gH_i$\; for each $i\in\intset1n$,
		then $\gG\connn{\dD_1,\dots, \dD_n}$ is a derivation with
		premise $\gG\connn{\gG_1,\dots, \gG_n}$
		and conclusion $\gG\connn{\gH_1,\dots, \gH_n}$
		denoted as below on the left.
		$$
		\begin{array}{c|c}
			\od{\odh{
					\gG\Connn{
						\od{\odd{\odh{\gG_1}}{\dD_1}{\gH_1}{\sysS}}
						,\dots,
						\od{\odd{\odh{\gG_n}}{\dD_n}{\gH_n}{\sysS}}
					}
				}
			}
			\quad&\quad
			\od{\odh{\ods{\gG_1}{\dD_1}{\gH_1}{}\star \ods{\gG_1}{\dD_1}{\gH_1}{}}}	
		\end{array}
		$$
		If $\gG=\star\in\set{\lpar,\ltens}$ we may write the derivations as above on the right.
	\end{itemize}
	Therefore, 
	$
	\cC\Cons{\vldownsmash{\od{\odd{\odh{\gG}}{\dD}{\gH}{\sysS}}}}
	\coloneqq
	{\od{\odd{\odh{\cC\cons\gG}}{\cC\cons\dD}{\cC\cons\gH}{\sysS}}}
	$
	is well-defined
	for any 
	context $\cC\conso$ and any derivation
	${\od{\odd{\odh{\gG}}{\dD}{\gH}{\sysS}}}$.

	A \defn{proof} in $\sysS$ is a derivation in $\sysS$ whose premise
	is~$\unit$. 
	
	A graph $\gG$ is \defn{provable} in $\sysS$ (denoted $\proves[\sysS]{\gG}$) iff there is a proof in $\sysS$ with conclusion $\gG$.
\end{definition}
%%%%%%%%%%%%%%%%%%%%%%%%%%%

%%%%%%%%%%%%%%%%%%%%%%%%%%%%%%%%%%%%%%%%%%%%%%%%%%
%%%%%%%%%%%%%%%%%%%%%%%%%%%%%%%%%%%%%%%%%%%%%%%%%%
\subsection{Equivalent Definitions of $\GS$}\label{app:GS}
%%%%%%%%%%%%%%%%%%%%%%%%%%%%%%%%%%%%%%%%%%%%%%%%%%
%%%%%%%%%%%%%%%%%%%%%%%%%%%%%%%%%%%%%%%%%%%%%%%%%%

\def\wpdr{\mathsf{p_1}\mathord{\downarrow}}
\def\wwpdr{\mathsf{p_2}\mathord{\downarrow}}

We here show that the formulation of the system $\GS$ provided in this paper is equivalent to one provided in \cite{acc:hor:str:LICS2020,acc:LMCS}.
In particular, in the previous these papers the rule $\sur$ was not included in the system.
However, as shown in \cite{acc:LMCS} this rule plays a crucial role in the proof that $\GS$ is a conservative extension of $\MLLx$
and in 
\cite{acc:FSCD22} it is shown that this rule cannot be admissible in the proof systems operating on mixed graphs.
Moreover, we here give a weaker side condition on the $\prule$-rule with respect to the rules below:

%%%%%%%%%%%%%%%%%%%%%%%%%%%%%%%%%%%%%%%%%%%%%%%%%%%%%%%%%%%%%%%%%%%%%%%
%% EQ Prules
%%%%%%%%%%%%%%%%%%%%%%%%%%%%%%%%%%%%%%%%%%%%%%%%%%%%%%%%%%%%%%%%%%%%%%%
\begin{equation}
	\adjustbox{max width=.9\textwidth}{$\begin{array}{c|c}
		\pdr \mbox{ in \cite{acc:LMCS}} 
		& 
		\pdr \mbox{ in \cite{acc:hor:str:LICS2020}}
		\\\hline\\
		\vlinf{\wpdr}{\star}{\nP\connn{\gM_1,\ldots, \gM_n}\lpar \gP\connn{\gN_1,\ldots, \gN_n}}{(\gM_1\lpar \gN_1)\ltens\cdots\ltens(\gM_n\lpar\gN_n)}
		&
		\vlinf{\wwpdr}{\dagger}{\nP\connn{\gM_1,\ldots, \gM_n}\lpar \gP\connn{\gN_1,\ldots, \gN_n}}{(\gM_1\lpar \gN_1)\ltens\cdots\ltens(\gM_n\lpar\gN_n)}
		\\\\
		\star\coloneqq\begin{array}{l}
			\gP\notin\set{\lpar,\ltens} \mbox{ prime }
			\mbox{$\gM_i\neq\unit$ for all $i\in\intset1n$}
		\end{array}
		&
		\dagger\coloneqq\begin{array}{l}
			\gP\notin\set{\lpar,\ltens} \mbox{ prime }
			\mbox{$\gM_i\lpar\gN_i\neq\unit$ for all $i\in\intset1n$}
		\end{array}
	\end{array}$}
\end{equation}
%%%%%%%%%%%%%%%%%%%%%%%%%%%%%%%%%%%%%%%%%%%%%%%%%%%%%%%%%%%%%%%%%%%%%%%
%%%%%%%%%%%%%%%%%%%%%%%%%%%%%%%%%%%%%%%%%%%%%%%%%%%%%%%%%%%%%%%%%%%%%%%

In order to prove the equivalence between our system and the ones in \cite{acc:hor:str:LICS2020,acc:LMCS} we recall the following lemma allowing us to prove that in $\GS$ we can derive any graph of the shape $\gG\limp\gG$.

%%%%%%%%%%%%%%%%%%%%%%%%%%%
\begin{lemma}\label{lem:g}
	Let $\gM_1, \dots, \gM_n,\gN_1, \dots,\gN_n$ and $\gG$ be graphs 
	such that 
	$\sizeof{\vG}=n$.
	Then there is a derivation
	$$
	\od{
		\odd{\odh{(\gM_1 \lpar \gN_1)\ltens \cdots\ltens(\gM_n\lpar \gN_n)}}{}
		{\cneg\gG\connn{\gM_1, \ldots, \gM_n} \lpar \gG\connn{\gN_1, \ldots, \gN_n}}{\set{\sur,\pdr}}
	}
	$$
\end{lemma}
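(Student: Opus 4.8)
The plan is to proceed by induction on $n = \sizeof{\vG}$, using the modular decomposition of $\gG$ (Theorem~\ref{thm:conDec}) to reduce to the case where $\gG$ is either a single vertex, a $\lpar$, a $\ltens$, or a prime graph of higher arity. The base case $n=1$ is the trivial derivation, since then $\gG\connn{\gM_1}=\gM_1$ and $\cneg\gG\connn{\gN_1}=\gN_1$ (up to units/associativity bookkeeping), so the premise and conclusion coincide. For the inductive step I would split on the shape of the top connective of $\gG$.

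First I would handle the case $\gG=\gG_1\ltens\gG_2$ (and symmetrically $\gG=\gG_1\lpar\gG_2$, obtained by duality). Here $\cneg\gG=\cneg\gG_1\lpar\cneg\gG_2$, and one regroups the factors $(\gM_i\lpar\gN_i)$ of the premise into two blocks according to whether the $i$-th vertex of $\gG$ lies in $\gG_1$ or $\gG_2$. Then I would apply the inductive hypothesis inside each block to build, via $\set{\sur,\pdr}$, a derivation of $(\cneg\gG_1\connn{\vec\gM}\lpar\gG_1\connn{\vec\gN})\ltens(\cneg\gG_2\connn{\vec\gM}\lpar\gG_2\connn{\vec\gN})$; one then uses the rule $\sur$ (switching the $\ltens$ past the $\lpar$'s, i.e.\ the instances of $\sur$ coming from $\lpar\neq\gP$) to rearrange this into $(\cneg\gG_1\lpar\cneg\gG_2)\connn{\vec\gM,\ldots}\lpar(\gG_1\ltens\gG_2)\connn{\vec\gN,\ldots}$, as required. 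For the remaining case, $\gG=\gP$ a prime graph of arity $n\geq 4$ with $\lpar\neq\gP\neq\ltens$, I would simply apply the rule $\pdr$ in one step: its premise is exactly $(\gM_1\lpar\gN_1)\ltens\cdots\ltens(\gM_n\lpar\gN_n)$ and its conclusion is $\cneg\gP\connn{\vec\gM}\lpar\gP\connn{\vec\gN}$, which is the desired statement (with no further use of $\sur$ needed at this node). The recursive calls inside the factors $\gM_i,\gN_i$ are not needed here because the lemma treats $\gM_i,\gN_i$ as atomic modules; the induction is purely on the structure of $\gG$ itself.

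The main obstacle I anticipate is the bookkeeping in the $\ltens$/$\lpar$ case: carefully tracking how the list $\gM_1,\ldots,\gM_n$ of modules is partitioned along the vertex partition of $\gG=\gG_1\ast\gG_2$, ensuring that associativity and the symmetries of $\lpar,\ltens$ (needed to reorder $(\gM_i\lpar\gN_i)$ factors into contiguous blocks) are handled at the meta-level — in a deep-inference setting with graphs-modulo-isomorphism these reorderings are free, but one must still verify that every intermediate object is a well-formed graph and that each $\sur$ instance is legitimate (i.e.\ that the $\ltens$ being switched is a genuine $\ltens$-node). A secondary subtlety is the side conditions $\gM_i\neq\unit\neq\gN_i$: since the statement of Lemma~\ref{lem:g} does \emph{not} assume this, I must check that $\pdr$ as used still applies — this is where the remark in the excerpt about the stronger side condition on $\prule$ being balanced by $\sur$ is relevant, and in fact the weaker-side-condition versions $\wpdr,\wwpdr$ discussed just after the statement are exactly what makes the case $\gM_i\lpar\gN_i=\unit$ harmless. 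I would handle any such degenerate factors by first absorbing the empty modules using $\sur$ (or noting they collapse the connective to a sub-connective) before applying $\pdr$.
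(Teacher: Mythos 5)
Your overall strategy --- induction on the modular decomposition of $\gG$, with a trivial base case for a single vertex and an inductive step driven by the top-level prime graph --- is exactly the paper's proof, which consists of the single sentence ``By induction on the modular decomposition of $\gG$.'' Your binary cases are correct: after regrouping the premise into two blocks and applying the inductive hypothesis inside each block, two instances of $\sur$ with the outer connective $\lpar$ (i.e.\ ordinary switches) turn $(A_1\lpar B_1)\ltens(A_2\lpar B_2)$ into $(A_1\lpar A_2)\lpar(B_1\ltens B_2)$, which handles $\gG=\gG_1\ltens\gG_2$ and, dually, $\gG=\gG_1\lpar\gG_2$.

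There are, however, two points to tighten. First, your case split is incomplete: when the top-level prime graph has arity $k\geq 4$ you only treat the situation in which $\gG$ is \emph{literally} prime, so that the premise of $\pdr$ is ``exactly'' the premise of the lemma; for a graph such as $\Pfour\connn{a\lpar b,c,d,e}$ none of your three inductive cases applies. The repair is the pattern you already use for $\lpar$ and $\ltens$: write $\gG=\gP\connn{\gG_1,\ldots,\gG_k}$, partition the $n$ factors of the premise into $k$ blocks according to the maximal modules, apply the inductive hypothesis to each $\gG_j$ (legitimate since $k\geq 2$ forces $\sizeof{\vof{\gG_j}}<n$), tensor the results, and finish with one instance of $\pdr$ whose arguments are the composite graphs $\cneg{\gG_j}\connn{\ldots}$ and $\gG_j\connn{\ldots}$. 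Second, the unit subtlety you flag is genuine and is not addressed by the paper's one-line proof either: \Cref{prop:wpdr} invokes this lemma with some $\gN_i=\unit$, so the final $\pdr$ can be blocked by its side condition $\gM_i\neq\unit\neq\gM_i'$. Your proposed fix (absorb the empty arguments with $\sur$, collapsing the connective to a sub-connective before applying $\pdr$) is the right one, but be aware that it is essentially the content of \Cref{prop:wpdr} itself; to avoid circularity you should run the induction on $\sizeof\vG$ rather than literally on the modular decomposition tree, recursing on the strictly smaller graph $\gG\restr{\set{i\mid\gN_i\neq\unit}}$ in the degenerate prime case.
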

\begin{proof}
	By induction on the modular decomposition of $\gG$.
\end{proof}
%%%%%%%%%%%%%%%%%%%%%%%%%%%

Thanks to this lemma, we can therefore prove the admissibility of the weaker

%%%%%%%%%%%%%%%%%%%%%%%%%%%
\begin{proposition}\label{prop:wpdr}
	The following version of $\pdr$ with weaker side conditions
	is admissible in $\GS$
	$$
	\vlinf{\wpdr}{
		\text{\scriptsize$\gP$ prime, $\gM_i\neq\unit$ for all $i\in\intset1n$}
	}{\nP\connn{\gM_1,\ldots, \gM_n}\lpar \gP\connn{\gN_1,\ldots, \gN_n}}{(\gM_1\lpar \gN_1)\ltens\cdots\ltens(\gM_n\lpar\gN_n)}
	$$
\end{proposition}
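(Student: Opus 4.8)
The plan is to derive $\wpdr$ as an admissible (in fact derivable) rule of $\GS$ by a single instantiation of \Cref{lem:g}. Recall that \Cref{lem:g} produces, for an \emph{arbitrary} graph $\gG$ with $\sizeof{\vof\gG}=n$ and \emph{arbitrary} graphs $\gM_1,\dots,\gM_n,\gN_1,\dots,\gN_n$, a derivation in $\set{\sur,\pdr}\subseteq\GS$ with premise $(\gM_1\lpar\gN_1)\ltens\cdots\ltens(\gM_n\lpar\gN_n)$ and conclusion $\cneg\gG\connn{\gM_1,\dots,\gM_n}\lpar\gG\connn{\gN_1,\dots,\gN_n}$. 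So essentially all the work is already packaged there.

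Now take an arbitrary instance of $\wpdr$: a prime graph $\gP$ with $\sizeof{\vP}=n$ and graphs $\gM_1,\dots,\gM_n,\gN_1,\dots,\gN_n$ with $\gM_i\neq\unit$ for all $i\in\intset1n$. I would apply \Cref{lem:g} with $\gG\coloneqq\gP$ (the hypothesis $\sizeof{\vof\gP}=n$ is exactly at hand). Since $\cneg\gP=\nP$, the resulting derivation goes from $(\gM_1\lpar\gN_1)\ltens\cdots\ltens(\gM_n\lpar\gN_n)$ to $\nP\connn{\gM_1,\dots,\gM_n}\lpar\gP\connn{\gN_1,\dots,\gN_n}$ using only rules of $\GS$ — which is precisely a $\GS$-derivation realising this instance of $\wpdr$. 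Hence $\wpdr$ is derivable, and therefore admissible, in $\GS$. Observe that neither side condition is really needed for the argument: \Cref{lem:g} places no primality requirement on $\gG$, and the condition $\gM_i\neq\unit$ is only carried along to keep the rule statement non-degenerate (so that $\nP\connn{\gM_1,\dots,\gM_n}$ is not secretly a sub-connective of $\nP$).

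The only point requiring minor care is the bookkeeping of indexings: \Cref{lem:g} fixes a particular ordering of the factors coming from a chosen modular decomposition of $\gG=\gP$, and one must check that it matches the ordering displayed in the conclusion of $\wpdr$; up to symmetries of $\gP$, and hence a harmless reindexing as per \Cref{nota:noEX}, this is automatic. I do not expect a substantial obstacle here: the real content — the induction on the modular decomposition of $\gG$, which threads $\sur$ through the non-prime modular nodes and applies the strongly-restricted (all-factors-non-empty) rule $\pdr$ at the prime ones — is already discharged inside \Cref{lem:g}, which the present statement is entitled to use.
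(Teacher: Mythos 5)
Your reduction to \Cref{lem:g} with $\gG=\gP$ is circular, and this is a genuine gap. \Cref{lem:g} is proved by induction on the modular decomposition of $\gG$; when $\gG=\gP$ is a prime graph other than $\lpar$ and $\ltens$, that decomposition has a single internal node, so the induction produces exactly one inference, namely the instance of $\pdr$ taking $(\gM_1\lpar\gN_1)\ltens\cdots\ltens(\gM_n\lpar\gN_n)$ to $\nP\connn{\gM_1,\ldots,\gM_n}\lpar\gP\connn{\gN_1,\ldots,\gN_n}$. But the $\pdr$ of $\GS$ (\Cref{fig:DIrules}) requires \emph{every} factor, including every $\gN_i$, to be non-empty. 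The only content of \Cref{prop:wpdr} is the case in which some $\gN_i=\unit$ (otherwise $\wpdr$ literally is an instance of $\pdr$, as the paper's proof notes first), and in precisely that case the $\pdr$ instance that \Cref{lem:g} would hand you is the forbidden one --- i.e.\ exactly the rule whose admissibility you are trying to establish. Read together with its proof, \Cref{lem:g} only delivers derivations in which the $\pdr$ generated at each prime node is legal; taking its unconditional wording at face value and instantiating it at a prime $\gG$ with an empty $\gN_i$ begs the question rather than answering it.

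The paper's proof is organised precisely to dodge this. If $\gN_i=\unit$ then $\gM_i\lpar\gN_i=\gM_i$ sits as a bare tensor factor of the premise; one applies \Cref{lem:g} only to the remaining factors and to the graph $\gH$ determined by $\gH\connn{\gN_2,\ldots,\gN_n}=\gP\connn{\unit,\gN_2,\ldots,\gN_n}$ --- which is not $\gP$, so the lemma is invoked on a genuinely smaller decomposition --- and then reinstates $\gM_i$ into the $i$-th slot of $\nP$ by an instance of $\sur$ (after a switch to bring it alongside $\nP\connn{\unit,\gM_2,\ldots,\gM_n}$). This is also where your closing remark goes wrong: the hypothesis $\gM_i\neq\unit$ for all $i\in\intset1n$ is not cosmetic bookkeeping, it is what makes that final $\sur$ step legitimate and keeps the peeled-off factor on the $\gM$ side; dropping it changes the statement and would require a different (or symmetrised) argument, which the paper defers to the comparison with $\wwpdr$ in \cite{acc:LMCS}.
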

\begin{proof}
	Note that we may have $\gN_i=\unit$ for some $i\in\intset1n$.
	Thus, if $\gN_i\neq\unit$ for all $i\in\intset1n$, then $\wpdr$ is an occurrence of $\pdr$.
	Otherwise, w.l.o.g.,
	$\gN_1=\unit$, thus
	we have a derivation
	$$
	\odiso{
		\gM_1\ltens
		\od{\odd{\odh{(\gM_2\lpar \gN_2)\ltens\cdots\ltens(\gM_n\lpar\gN_n)}}{}{
				\nH\connn{\gM_2,\ldots, \gM_n}
				\lpar
				\gH\connn{\gN_2,\ldots, \gN_n}
			}{\text{\Cref{lem:g}}}}
	}{
		\odn{\gM_1\ltens\nP\connn{\unit, \gM_2,\ldots, \gM_n}}{\sur}{\nP\connn{\gM_1, \gM_2,\ldots, \gM_n}}{} \lpar \gP\connn{\unit, \gN_2,\ldots, \gN_n}
	}
	$$
\end{proof}
%%%%%%%%%%%%%%%%%%%%%%%%%%%

%%%%%%%%%%%%%%%%%%%%%%%%%%%
\begin{theorem}
	Let $\gG$ be a graph.
	Then 
	$$
	\proves[\GS]\gG
	\Leftrightarrow
	\proves[\set{\aidr,\sdr,\sur,\wpdr}]\gG
	\Leftrightarrow
	\proves[\set{\aidr,\sdr,\wpdr}]\gG
	\Leftrightarrow
	\proves[\set{\aidr,\sdr,\wwpdr}]\gG
	$$
\end{theorem}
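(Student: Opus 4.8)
The plan is to establish the chain of biconditionals through the cycle of implications
$$\proves[\GS]\gG\ \Longrightarrow\ \proves[\set{\aidr,\sdr,\sur,\wpdr}]\gG\ \Longrightarrow\ \proves[\set{\aidr,\sdr,\wpdr}]\gG\ \Longrightarrow\ \proves[\set{\aidr,\sdr,\wwpdr}]\gG\ \Longrightarrow\ \proves[\GS]\gG\,,$$
so it suffices to handle each arrow in turn; three of the four are routine and one carries all the weight.

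For the first arrow, observe that every instance of $\pdr$ -- whose side condition asks that \emph{all} factors be non-empty -- is in particular an instance of $\wpdr$, whose side condition only requires the $\nP$-side factors to be non-empty; hence a proof in $\GS=\set{\aidr,\sdr,\sur,\pdr}$ is literally already a proof in $\set{\aidr,\sdr,\sur,\wpdr}$. For the third arrow, every instance of $\wpdr$ whose prime graph is not $\lpar$ or $\ltens$ is an instance of $\wwpdr$, since $\gM_i\neq\unit$ forces $\gM_i\lpar\gN_i\neq\unit$; the residual cases, with prime graph $\lpar$ or $\ltens$, are derivable from $\sdr$ alone (pull the relevant $\lpar$-components out of the outermost $\ltens$). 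For the fourth arrow, I show that $\wwpdr$ is admissible in $\GS$: an instance in which all $\gM_i\neq\unit$ is a $\wpdr$-instance and is realised in $\GS$ by \Cref{prop:wpdr}; an instance in which some $\gM_i=\unit$ -- so the paired $\gN_i\neq\unit$ -- is treated exactly as the $\gN_1=\unit$ case of the proof of \Cref{prop:wpdr}, but with the roles of $\gP$ and $\nP$ interchanged: apply \Cref{lem:g} inside the surviving $\ltens$-layer and then use $\sur$ to push each such $\gN_i$ into the corresponding slot of the $\gP$-factor. Since $\aidr,\sdr\in\GS$, replacing every $\wwpdr$-step of a $\set{\aidr,\sdr,\wwpdr}$-proof by such a $\GS$-derivation turns it into a $\GS$-proof.

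The load-bearing arrow is the second one, $\proves[\set{\aidr,\sdr,\sur,\wpdr}]\gG\Rightarrow\proves[\set{\aidr,\sdr,\wpdr}]\gG$, which amounts to showing that $\sur$ is \emph{admissible} in $\set{\aidr,\sdr,\wpdr}$ -- that is, that liberalising the side condition of $\prule$ already subsumes $\sur$, the trade-off recalled in the text. My route would be: first prove, by induction on the modular decomposition, that the generalised identity $\cneg\gG\lpar\gG$ is provable in $\set{\aidr,\sdr,\wpdr}$ for every graph $\gG$ -- the base case is an instance of $\aidr$, and for $\gG=\gP\connn{\gH_1,\ldots,\gH_m}$ with $\gP$ prime one composes the inductive identities of the factors under an $m$-fold $\ltens$ and then applies a single $\wpdr$ on $\gP$ with the $\nP$-side occupied by the dual factors, which is legal precisely because those dual factors, being negations of non-empty modular factors, are themselves non-empty (the nodes $\gP\in\set{\lpar,\ltens}$ are handled by $\sdr$); then combine this generalised identity with cut-admissibility for $\set{\aidr,\sdr,\wpdr}$ (the cut-elimination result of \cite{acc:hor:str:LICS2020,acc:LMCS}) through the standard construction of an arbitrary sound local rewriting from the identity together with a cut, specialised to $\sur$. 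More economically, one simply invokes that $\sur$ is used as an admissible rule throughout \cite{acc:LMCS}.

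The main obstacle is exactly this admissibility of $\sur$ in $\set{\aidr,\sdr,\wpdr}$: there is no short local derivation, since $\sur$ removes edges \emph{selectively} -- only those between the pushed factor and the slots of the prime graph non-adjacent to the target slot -- whereas $\sdr$ removes all edges incident to a pulled factor at once and $\wpdr$ rearranges an entire $\ltens$-layer, so closing the gap forces the detour through generalised identities and cut-admissibility (or the appeal to \cite{acc:LMCS}). Once all four implications of the cycle are available, the four provability predicates coincide, which is the statement.
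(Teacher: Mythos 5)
Your proposal is correct and, in substance, takes the same route as the paper: the paper settles the first equivalence by \Cref{prop:wpdr} (whose underlying \Cref{lem:g}, stated with no side conditions on the factors, already yields your fourth arrow — the admissibility of $\wwpdr$ in $\GS$ — without the case analysis you sketch) and simply cites \cite{acc:LMCS} for the remaining equivalences, which is exactly where you correctly locate the only non-trivial content, namely the admissibility of $\sur$ in $\set{\aidr,\sdr,\wpdr}$, and where you likewise end up deferring to that reference (your alternative sketch via generalized identities plus cut is plausible, but since it invokes the cut-admissibility of $\set{\aidr,\sdr,\wpdr}$ it is not independent of \cite{acc:LMCS}). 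The other arrows are, as you say, immediate inclusions of rule instances, modulo the harmless ambiguity over whether $\wpdr$ excludes $\gP\in\set{\lpar,\ltens}$, which you handle via $\sdr$.
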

\begin{proof}
	The first equivalence follows from \Cref{prop:wpdr}.
	The other has been proved in \cite{acc:LMCS}.
\end{proof}
\section{On Rules Introducing a Connective at a Time}\label{app:sconr}
%%%%%%%%%%%%%%%%%%%%%%%%%%%%%%%%%%%%%%%%%%%%%%%%%%
%%%%%%%%%%%%%%%%%%%%%%%%%%%%%%%%%%%%%%%%%%%%%%%%%%
%%%%%%%%%%%%%%%%%%%%%%%%%%%%%%%%%%%%%%%%%%%%%%%%%%
\def\MLLC{\MLL^{\conr}}

A rule introducing only one connective (different from $\lpar$ and $\ltens$) at a time inevitably leads to the same problem observed in the literature of \emph{generalized multiplicative connectives} \cite{dan:reg:89,girard2000meaning,mai:19,acc:mai:20}, 
where \emph{initial coherence} (i.e. the possibility of having only atomic axioms in a cut-free system, \cite{avr:canonical:01}) is ruled out because of the so-called \emph{packaging problem}.

However, in this appendix we discuss the results about the system extending multiplicative linear logic with the rule $\conr$, that is, the system. 
$$
\MLLC \coloneqq\Set{\axrule,\lparr,\ltensr, \mixr,\conr[\con[\gP]] \mid\gP\in\primesign}
\quad\mbox{where}\quad
\vliiinf{\conr[\con[\gP]]}{}{
	\vdash\Gamma_1, \ldots, \Gamma_n, \con[\gP]\connn{\fA_1, \ldots, \fA_n}
}{
	\vdash\Gamma_1, \fA_1
}{
	\qquad\cdots\qquad
}{
	\vdash\Gamma_n, \fA_n
}	
$$

We first observe that in the system does not satisfy anymore initial coherence;
e.g., the formula $\con[\Pfour]\connn{a,b,c,d}\limp \con[\Pfour]\connn{a,b,c,d}$ is not provable anymore.
However, the system still satisfies cut-elimination. The proof cut-elimination is straightforward by considering the following additional cut-elimination steps.

\begin{equation*}
\adjustbox{max width =\textwidth}{$
	\begin{array}{c}
		\vlderivation{
			\vliin{\cutr}{}{
				\vdash\Gamma_1,\ldots, \Gamma_n, \Delta_1, \ldots, \Delta_n
			}{
				\vliiin{\conr}{}{
					\vdash\Gamma_1,\ldots, \Gamma_n,\fP\connn{\fA_1, \ldots, \fA_n}
				}{\vlhy{\vdash\Gamma_1,\fA_1}}{\vlhy{\qquad\cdots\qquad}}{\vlhy{\vdash\Gamma_n,\fA_n}}
			}{
				\vliiin{\conr}{}{
					\vdash\Delta_1,\ldots, \Delta_n, \nfP\connn{\cneg\fA_1, \ldots, \cneg\fA_n}
				}{\vlhy{\vdash\Delta_1,\cneg\fA_1}}{\vlhy{\qquad\cdots\qquad}}{\vlhy{\vdash\Delta_n,\cneg\fA_n}}
			}
		}
		\;\rightsquigarrow\;
		\vlderivation{
			\vliiiq{\mixr}{}{
				\vdash\Gamma_1,\ldots, \Gamma_n, \Delta_1, \ldots, \Delta_n
			}{
				\vliin{\cutr}{}{\vdash\Gamma_1,\Delta_1}{\vlhy{\vdash\Gamma_1,\fA_1}}{\vlhy{\vdash\Delta_1,\cneg\fA_1}}
			}{
				\vlhy{\quad\cdots\quad}
			}{
				\vliin{\cutr}{}{\vdash\Gamma_n,\Delta_n}{\vlhy{\vdash\Gamma_n,\cneg\fA_n}}{\vlhy{\vdash\Delta_n,\cneg\fA_n}}
			}
		}
		\\\\
		\vlderivation{
			\vliin{\cutr}{}{
				\vdash\Gamma_1,\ldots, \Gamma_n, \Delta_1, \ldots, \Delta_n,\nfP\connn{\fB_1, \ldots, \fB_n}
			}{
				\vliiin{\dconr}{}{
					\vdash\Gamma_1,\ldots, \Gamma_n,\nfP\connn{\fB_1, \ldots, \fB_n},\fP\connn{\fA_1, \ldots, \fA_n}
				}{\vlhy{\vdash\Gamma_1,\fA_1,\fB_1}}{\vlhy{\qquad\cdots\qquad}}{\vlhy{\vdash\Gamma_n,\fA_n,\fB_n}}
			}{
				\vliiin{\conr}{}{
					\vdash\Delta_1,\ldots, \Delta_n, \nfP\connn{\cneg\fA_1, \ldots, \cneg\fA_n}
				}{\vlhy{\vdash\Delta_1,\cneg\fA_1}}{\vlhy{\qquad\cdots\qquad}}{\vlhy{\vdash\Delta_n,\cneg\fA_n}}
			}
		}
		\;\rightsquigarrow\;
		\vlderivation{
			\vliiin{\conr}{}{
				\vdash\Gamma_1,\ldots, \Gamma_n, \Delta_1, \ldots, \Delta_n,\ncon\connn{\fB_1,\ldots,\fB_n}
			}{
				\vliin{\cutr}{}{\vdash\Gamma_1,\Delta_1,\fB_1}{\vlhy{\vdash\Gamma_1,\fA_1,\fB_1}}{\vlhy{\vdash\Delta_1,\cneg\fA_1}}
			}{
				\vlhy{\quad\cdots\quad}
			}{
				\vliin{\cutr}{}{\vdash\Gamma_n,\Delta_n,\fB_n}{\vlhy{\vdash\Gamma_n,\cneg\fA_n,\fB_n}}{\vlhy{\vdash\Delta_n,\cneg\fA_n}}
			}
		}
	\end{array}
$}
\end{equation*}

Note that $\conr$ is derivable in $\PMLx$.

%%%%%%%%%%%%%%%%%%%%%%%%%%%%%%%%%%%%%%%%%%%%%%%%%%
\begin{lemma}
	The rule $\conr$ is derivable in $\PMLx$.
\end{lemma}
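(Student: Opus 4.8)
The plan is to derive $\conr[\con[\gP]]$ (for a prime $\gP\in\primesign$ of arity $n$) directly in $\PMLx$ using the rules $\unitor$ and $\wdtr$. The idea is this: given the premises $\vdash\Gamma_1,\fA_1,\ \dots,\ \vdash\Gamma_n,\fA_n$, I would first assemble, around a single ``anchor'' entry, the formula $\con[\gP]\connn{\funit,\dots,\fA_{i^\ast},\dots,\funit}$ in which every entry but the $i^\ast$-th equals $\funit$, and then use $\wdtr$ once per remaining entry to replace each $\funit$ by the corresponding $\fA_j$, gathering the contexts $\Gamma_j$ on the way, which yields $\vdash\Gamma_1,\dots,\Gamma_n,\con[\gP]\connn{\fA_1,\dots,\fA_n}$. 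For the two-ary connectives this collapses to something trivial: $\conr[\con[\ltens]]$ is literally $\ltensr$, and $\conr[\con[\lpar]]$ is a $\mixr$ followed by a $\lparr$.

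For the general case I would pick $i^\ast$ such that $\fA_{i^\ast}$ is non-vacuous (i.e.\ $\gof{\fA_{i^\ast}}\neq\unit$); this is harmless, since an easy induction on proofs shows that a sequent all of whose formulas are vacuous is never $\PMLx$-provable, so this case never arises when $\conr$ is applied to provable premises. Now one cannot build the connective up arity by arity through prime connectives, because there is no prime graph of arity $3$ (every graph on three vertices is a cograph); instead I would climb through $\lpar$-chains. Concretely: $n-2$ successive applications of $\unitor$ turn $\vdash\Gamma_{i^\ast},\fA_{i^\ast}$ into $\vdash\Gamma_{i^\ast},\ \funit\lpar(\funit\lpar(\cdots\lpar\fA_{i^\ast}))$ with $n-2$ occurrences of $\funit$; then one last application of $\unitor$, whose auxiliary synthetic connective $\fC$ is exactly this $(n-1)$-ary $\lpar$-chain, converts the latter into $\vdash\Gamma_{i^\ast},\con[\gP]\connn{\funit,\dots,\fA_{i^\ast},\dots,\funit}$ with $\fA_{i^\ast}$ placed in position $i^\ast$. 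Finally, $n-1$ instances of $\wdtr$ (one per entry $j\neq i^\ast$, with left premise $\vdash\Gamma_j,\fA_j$) fill in the remaining entries and produce the conclusion.

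The routine part is checking that the side condition $\dagger$ of $\unitor$ holds at each step: at every stage the formula in play interprets under $\gof{\cdot}$ as $\gof{\fA_{i^\ast}}$ (units vanish, and a $\lpar$ or a prime connective with a single non-unit entry reduces to that entry), which is $\neq\unit$ by the choice of $i^\ast$, and the required symmetry of the auxiliary connective always exists — for the chain-building steps $\fC$ is the trivial one-hole connective with $\sigma=\idperm$, while for the final step $\fC$ is a $\lpar$-chain, whose symmetry group is the full symmetric group, so it accommodates any placement of the anchor. The main (mild) obstacle is thus this bookkeeping around $\unitor$, together with the observation that primes do not occur in every arity, which forces the $\lpar$-chain detour in place of a naive arity-by-arity induction; the argument is entirely parallel to the derivation of the generalized rule $\dconr[\fC]$ in \Cref{lem:gdcon}.
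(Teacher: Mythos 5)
Your derivation is correct, but it is not the one the paper uses: the paper argues by induction on the arity of $\con$, peeling off a single entry with one $\wdtr$ and then using $\unitor$ to replace $\con[\gP]\connn{\funit,\fA_2,\ldots,\fA_n}$ by a synthetic connective $\fC$ built from the modular decomposition of $\gP$ with one vertex deleted, whose smaller prime connectives are handled by the inductive hypothesis. Your construction is non-inductive: you first manufacture $\con[\gP]\connn{\funit,\ldots,\fA_{i^\ast},\ldots,\funit}$ around a single non-vacuous anchor via a $\lpar$-chain and $n-1$ instances of $\unitor$, and then fill in the remaining $n-1$ entries with $\wdtr$. Both routes rest on the same two rules, and your checks of the side condition $\dagger$ (the graph in play is always $\gof{\fA_{i^\ast}}\neq\unit$, and the $\lpar$-chain has the full symmetric group as symmetries) are sound. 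What your version buys is that it avoids any reasoning about the modular decomposition of $\gP\restr{\{2,\ldots,n\}}$, and it is actually slightly more robust than the paper's derivation as written, which needs one of $\fA_2,\ldots,\fA_n$ to be non-vacuous for its $\unitor$ step, whereas you choose the anchor freely; your explicit discussion of the all-vacuous degenerate case addresses a point the paper leaves implicit (both derivations fail there, but such an instance only produces a non-pure conclusion from unprovable premises). One small correction of motivation rather than of substance: your claim that an arity-by-arity induction is blocked because no prime graph of arity $3$ exists misreads that alternative --- the paper's induction does not need a prime of every arity, since it recurses through the modular decomposition of the residual graph using primes of assorted smaller arities (including $\lpar$ and $\ltens$), exactly as in \Cref{lem:gdcon}.
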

\begin{proof}
	If $\con=\lpar$, then $\conr$ is derivable using $\lparr$ and mix.
	If $\con=\ltens$, then $\conr=\ltensr$.
	Otherwise, we conclude by induction on the arity of $\con$ since we have a derivation
	$$
	\vlderivation{
		\vliin{\wdtr}{}{
			\vdash\Gamma_1, \ldots, \Gamma_n, \con\connn{\fA_1,\ldots, \fA_n}
		}{
			\vlhy{\vdash \Gamma_1, \fA_1}
		}{
			\vlin{\unitor}{}{
				\vdash\Gamma_2, \ldots, \Gamma_n, \con\connn{\funit,\fA_2,\ldots, \fA_n}
			}{
				\vlpr{\dD'}{\IH}{\vdash\Gamma_2, \ldots, \Gamma_n, \fcC\connn{\fA_1,\ldots, \fA_n}}
			}
		}
%		
%		\vlin{\unitor}{}{\vdash\Gamma_1, \ldots, \Gamma_n, \con\connn{\fA_1,\ldots, \fA_n} }{
%			\vlin{\lparr}{}{\vdash\Gamma_1, \ldots, \Gamma_n, \funit\lpar \con\connn{\fA_1,\ldots, \fA_n} }{
%				\vliq{(n-1)\times\unitor}{}{
%					\vdash\Gamma_1, \ldots, \Gamma_n, \funit, \con\connn{\fA_1,\ldots, \fA_n} 
%				}{
%					\vliiin{\dconr}{}{
%						\vdash \Gamma_1,\ldots,\Gamma_n, \ncon\connn{\funit,\ldots, \funit} , \con\connn{\fA_1,\ldots, \fA_n} 
%					}{
%						\vliin{\mixr}{}{\vdash \Gamma_1,\funit, \fA_1}{
%							\vlin{\urule}{}{\vdash \funit}{\vlhy{}}
%						}{
%							\vlhy{\vdash \Gamma_1, \fA_1}
%						}
%					}{\vlhy{\quad\cdots\quad}}{
%						\vliin{\mixr}{}{\vdash \Gamma_n,\funit, \fA_n}{
%							\vlin{\urule}{}{\vdash \funit}{\vlhy{}}
%						}{
%							\vlhy{\vdash \Gamma_n, \fA_n}
%						}
%					}
%				}
%			}
%		}
	}
	$$
	where $\dD'$ contains instances of $\conr$ introducing connectives whose arities are strictly smaller then the arity of $\con$.
\end{proof}
%%%%%%%%%%%%%%%%%%%%%%%%%%%%%%%%%%%%%%%%%%%%%%%%%%

%%%%%%%%%%%%%%%%%%%%%%%%%%%%%%%%%%%%%%%%%%%%%%%%%%%%%%%%%%%%%%%%%%%%%%
%%%%%%%%%%%%%%%%%%%%%%%%%%%%%%%%%%%%%%%%%%%%%%%%%%%%%%%%%%%%%%%%%%%%%%
%%%%%%%%%%%%%%%%%%%%%%%%%%%%%%%%%%%%%%%%%%%%%%%%%%%%%%%%%%%%%%%%%%%%%%
\end{document}